\newtheorem*{theorem*}{Theorem}
\newtheorem{theorem}{Theorem}[section]
\newtheorem{lemma}[theorem]{Lemma}
\newtheorem{claim}[theorem]{Claim}
\newtheorem{corollary}[theorem]{Corollary}
\newtheorem{definition}[theorem]{Definition}
\newtheorem{observation}[theorem]{Observation}
\newcommand{\E}{\mathrm{E}}
\newcommand{\last}{x}
\newcommand{\round}{\textsc{rd}}
\newcommand{\optmset}{\mathcal{X}(n)}
\newcommand{\argmax}{\mathop{\arg\max}}
\newcommand{\his}{H}
\newcommand{\hisset}{\mathcal{\his}}
\newcommand{\hisins}{h}
\newcommand{\bel}{B}
\newcommand{\belset}{\mathcal{\bel}}
\newcommand{\harmo}{\mathbb{H}}
\newcommand{\pluscount}{\mathrm{COUNT}}
\newcommand{\state}{S}
\newcommand{\asyf}{Z}
\newcommand{\betadis}{\mathcal{B}e}
\newcommand{\betadens}[1]{Beta(#1;\alpha,\beta)}
\newcommand{\mytextn}{\text{final round}}
\newcommand{\mytexta}{\text{penultimate round (at point $L_{n-2}$)}}
\newcommand{\mytextb}{\text{penultimate round (at point $U_{n-2}$)}}
\newcommand{\mytextm}{\text{penultimate round (between $L_{n-2}$ and $U_{n-2}$)}}
\newcommand{\mytextnbig}{\textbf{Final Round}}
\newcommand{\mytextabig}{\textbf{Penultimate Round (at point $L_{n-2}$)}}
\newcommand{\mytextbbig}{\textbf{Penultimate Round (at point $U_{n-2}$)}}
\newcommand{\mytextmbig}{\textbf{Penultimate Round (between $L_{n-2}$ and $U_{n-2}$)}}
\begin{document}

\title{BONUS! Maximizing Surprise}
\author[1]{Zhihuan Huang}
\author[1]{Yuqing Kong\thanks{Supported by National Natural Science Foundation of China award number~62002001}$^,$} 
\author[2]{Tracy Xiao Liu\thanks{Supported by the National Key Research and Development Program of China award number~2018YFB1004503}$^,$}
\author[3]{Grant Schoenebeck\thanks{Supported by (United States) National Science Foundation award number~2007256}$^,$}
\author[1]{Shengwei Xu}

\affil[1]{Department of Computer Science, Peking University}
\affil[1]{Center on Frontiers of Computing Studies, Peking University}
\affil[2]{School of Economics and Management, Tsinghua University}
\affil[3]{School of Information, University of Michigan}
\affil[1]{\texttt{\{zhihuan.huang, shengwei.xu, yuqing.kong\}@pku.edu.cn}}
\affil[2]{\texttt{liuxiao@sem.tsinghua.edu.cn}}
\affil[3]{\texttt{schoeneb@umich.edu}}
\date{}
\maketitle

\begin{abstract}
Multi-round competitions often double or triple the points awarded in the final round, calling it a bonus, to maximize spectators' excitement. In a two-player competition with $n$ rounds, we aim to derive the optimal bonus size to maximize the audience's overall expected surprise (as defined in \cite{ely2015suspense}). We model the audience's prior belief over the two players' ability levels as a beta distribution.  Using a novel analysis that clarifies and simplifies the computation, we find that the optimal bonus depends greatly upon the prior belief and obtain solutions of various forms for both the case of a finite number of rounds and the asymptotic case. In an interesting special case, we show that the optimal bonus approximately and asymptotically equals to the ``expected lead'', the number of points the weaker player will need to come back in expectation. Moreover, we observe that priors with a higher skewness lead to a higher optimal bonus size, and in the symmetric case, priors with a higher uncertainty also lead to a higher optimal bonus size. This matches our intuition since a highly asymmetric prior leads to a high ``expected lead'', and a highly uncertain symmetric prior often leads to a lopsided game, which again benefits from a larger bonus.  

\end{abstract}

\thispagestyle{empty}
\newpage
\setcounter{page}{1}
\section{Introduction}

People love watching competitions.  Who will snatch victory?  Who will be vanquished? Thus, the business of sports\footnote{FIFA generated more than 4.6 billion USD revenue in 2018, mainly on 2018 FIFA World Cup, according to https://www.investopedia.com/articles/investing/070915/how-does-fifa-make-money.asp}, e-sports\footnote{ Douyu, one of the major e-sports live streaming platform in China generated about 6.6 billion CNY of revenue (around 1 billion USD) in 2019, according to https://www.statista.com/statistics/1222790/china-douyu-live-streaming-revenue/}, and live streaming platforms\footnote{The worldwide video streaming market size reached 50.11 billion USD in 2020, according to  https://www.grandviewresearch.com/industry-analysis/video-streaming-market} create a huge amount of revenue. 

When people watch a game, their belief for who would win in the end changes over the duration of the game. Intuitively, the \emph{surprise}, measures how much audience's beliefs change over time \cite{ely2015suspense}. An important question for both theorists and practitioners is how to design winner selection schemes that maximize the amount of surprise in a competition and, consequently, improve the entertainment utility of a competition and increase its potential revenue.

In practice, many games use point systems to determine the winner. For some games, the point value remains static throughout the game, such as soccer, cricket, and tennis. However, in other games, the point value in the final round is higher than in others. 
Mind King, a very popular two-player quiz game on WeChat,\footnote{over 1 million daily users according to an author interview with Tencent staff in July 2021.} has 5 rounds. For each round, each player receives points depending on the correctness and speed of their answer.  The final round doubles the points.   Since 2015, IndyCar racing has doubled the points for the final race of the season.\footnote{According to http://www.champcarstats.com/points.htm} The Diamond League,\footnote{According to Wikipedia: https://en.wikipedia.org/wiki/Diamond\_League\#Scoring\_system} a track and field league, from 2010-2016, determined the Diamond Race winner by a point system over a season of 7 meets, and the points for the final tournament are doubled.\footnote{The scoring system was substantially changed in 2017 to, in particular, include a final restricted to the top-scoring athletes.}
Another very popular example is the quidditch matches in Harry Potter\footnote{Harry Potter is a very popular fantasy novels series that have sold out more than 500 million copies according to https://www.wizardingworld.com/news/500-million-harry-potter-books-have-now-been-sold-worldwide}.  The game concludes when the golden snitch, which is worth 15 times a normal goal, is captured by one team.  While this point structure makes sense for the plot of the books (the title character often catches the snitch), most of the action is ancillary to the games' outcome.  Tellingly, ``muggle'' quidditch is now a real-life game, however, the golden snitch is only worth 3 times as much as a normal goal. 

\paragraph{Key Question} 
In a simple model, we study the effect of the final round's point value on the overall surprise and what point value maximizes the surprise. In our setting, there are two teams and a fixed number of rounds. The winner in each of the first $n-1$ rounds is awarded 1 point and the winner in the final round can possibly receive more points, e.g., double or triple the points earned in previous rounds.  The team which accumulates the most points wins.  

\paragraph{Types of Prior Beliefs} To measure surprise, we need to model audience belief. The audience's belief of who will win (and thus the surprise) depends on the audience's prior belief about the two contestants' chances of winning in each round. Note that this belief may be updated as the match progresses.   Intuitively, in a game where audiences believe that the two players' ability is highly asymmetric, e.g., a strong vs. a weak player, we should set up a high bonus score, otherwise, the weaker player will likely be eliminated before the game's conclusion. In contrast, if the two players are perfectly evenly matched and the total number of rounds is odd, we will show the optimal bonus value is 1, the same point value as in the previous rounds.  Below, we consider three special cases for prior beliefs and then introduce the general case.  

The first case is that the audience has a fixed and unchanging belief about the chance that each competitor wins for each round.  We call this the \emph{certain} case.  The size of the optimal bonus depends on the belief in the difference between two contestants' ability levels. 

The second case is where the audience has no prior knowledge about the two players' abilities.\footnote{It is worthwhile to note that this is different from the case where the prior belief is that both players have an equal chance to win. The reason is that no belief updating exists in the prior case while belief updating exists in the later one. } We call this the \emph{uniform} case. Specifically, we model the audience's prior over the two players as a uniform distribution and derive the optimal bonus size accordingly. A slight generalization of the uniform case is the \emph{symmetric} case when the audience has a certain amount of prior knowledge but no prior knowledge favoring one contestant over another. In this case, we model the audience's prior as a symmetric Beta distribution. In the general case, we model the prior as a general Beta distribution, $\betadis(\alpha,\beta)$\footnote{When Alice and Bob have played before and Alice wins $n_A$ rounds and Bob wins $n_B$, we can use $\betadis(n_A+1,n_B+1)$ to model the prior belief. In general, we allow $\alpha,\beta$ to be non-integers.  }, a family of continuous probability distributions on the interval [0, 1] parameterized by two positive shape parameters, $\alpha,\beta\geq 1$. The figures above Table~\ref{table:basic} illustrate the above cases. 

\paragraph{Techniques}  Our analysis is built on two insights.  First, for Beta prior beliefs, we show that for $i \leq n - 1$, the ratio of the expected surprise in round $i$ and $i -1$ is a constant that only depends on $i$ and the prior.  This allows us to reduce the entire analysis to the trade-off between the final round's surprise amount and the penultimate round's surprise amount. Second, we find that the final round's surprise amount increases when the bonus increases while the penultimate round's surprise consists of two parts where one part increases with the bonus size and the other part decreases with the bonus size. Optimally trading off these terms yields our results.  Third, we show how to optimize this trade-off.

\begin{table}[!ht]
	\center
	\begin{minipage}{\linewidth}
	\begin{tabular}{cccc}\centering
		&\begin{minipage}{.25\linewidth}\includegraphics[width=1\linewidth]{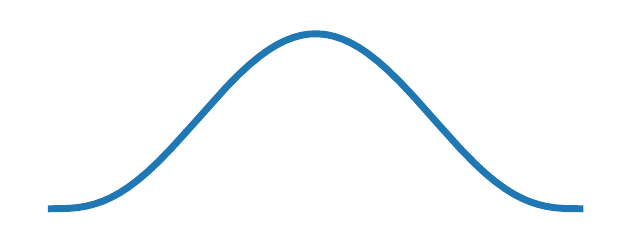}\end{minipage} & \begin{minipage}{.25\linewidth}\includegraphics[width=1\linewidth]{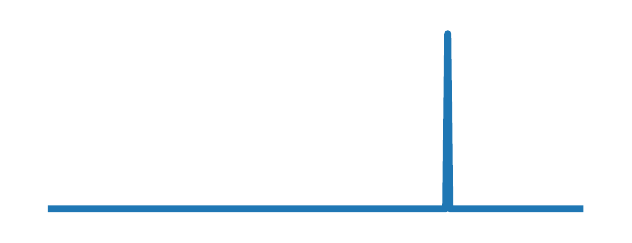}\end{minipage} & \begin{minipage}{.25\linewidth}\includegraphics[width=1\linewidth]{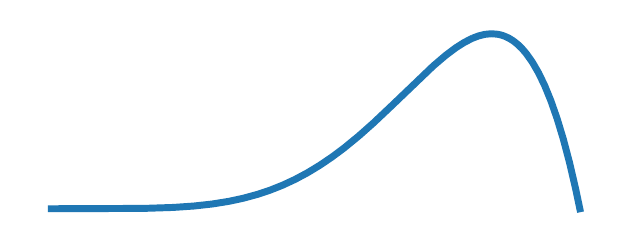}\end{minipage}\\
		\hline
		 & Symmetric  & Certain  & General\\
		 & $\alpha=\beta$ & $\alpha=\lambda p,\beta=\lambda (1-p),\lambda\rightarrow\infty$ & \\
		\hline
		Finite  & \begin{minipage}{.25\linewidth}\[\round(\frac{n-1}{2\alpha\harmo-\frac{n-1}{n+2\alpha-1}})\]\end{minipage} & \begin{minipage}{.25\linewidth}\[\round(\text{Solution of $F(x)=0$})\]\end{minipage} & \begin{minipage}{.25\linewidth}\[O(n)\text{ algorithm}\]\end{minipage}\\
		  & \begin{minipage}{.25\linewidth}\end{minipage} & \begin{minipage}{.25\linewidth}\[\approx n \frac{\alpha-\beta}{\alpha+\beta}\]\end{minipage} & \begin{minipage}{.25\linewidth}\end{minipage}\\
		\specialrule{0em}{2pt}{2pt}
		Asymptotic & \begin{minipage}{.15\linewidth}\[n\frac{1}{2\alpha \harmo-1} \approx \frac{\frac{n}{2\alpha}}{\ln (\frac{n}{2\alpha})}\]\end{minipage} & \begin{minipage}{.25\linewidth}\[n\frac{(\alpha-\beta)\harmo+1}{(\alpha+\beta)\harmo-1}\approx n \frac{\alpha-\beta}{\alpha+\beta}\]\end{minipage} & \begin{minipage}{.25\linewidth}\[n*\text{Solution of $G(\mu)=0$}\]\end{minipage} \\
		\specialrule{0em}{2pt}{2pt}
		\hline

	\end{tabular}
	\caption{\textbf{Optimal bonus}}
	\begin{minipage}{\linewidth}
	\footnotesize
	\[\text{Without loss of generality, we assume $\alpha\geq\beta$ which implies that $p\geq \frac12$.}\]
	\[\round(x):=\text{the nearest integer to $x$ that has the same parity as the number of rounds $n$.}\footnote{When there is a tie, we pick the smaller one.}\]
	\[\harmo=\harmo_{\alpha+\beta}(n-1):=\sum_{i=1}^{n-1}\frac{1}{i+\alpha+\beta-1},F(x):=(2np-n-(x-1))p^{x-1}+(n-2np-(x-1))(1-p)^{x-1}, x\in [1,n+1)\footnote{When $p=\frac12$ or $n\leq \frac{1}{(\frac{1}{2}-p)\ln(\frac{1-p}{p})}$, $F(x)$ only has a trivial solution $x=1$ and the optimal bonus is $\round(1)$, otherwise, the optimal bonus is $\round(\Tilde{x})$ where $\Tilde{x}$ is the unique non-trivial solution.}\]
	\[G(\mu):=(1+\mu)^{\alpha-\beta}\left(\frac{(\alpha-\beta)\harmo+1}{(\alpha+\beta)\harmo-1}-\mu\right)+(1-\mu)^{\alpha-\beta}\left(\frac{(-\alpha+\beta)\harmo+1}{(\alpha+\beta)\harmo-1}-\mu\right),\mu\in(0,1)\] 
	\end{minipage}
	\label{table:basic}
	\end{minipage}
\end{table}

\begin{figure}[!ht]\centering 
  \includegraphics[width=.95\linewidth]{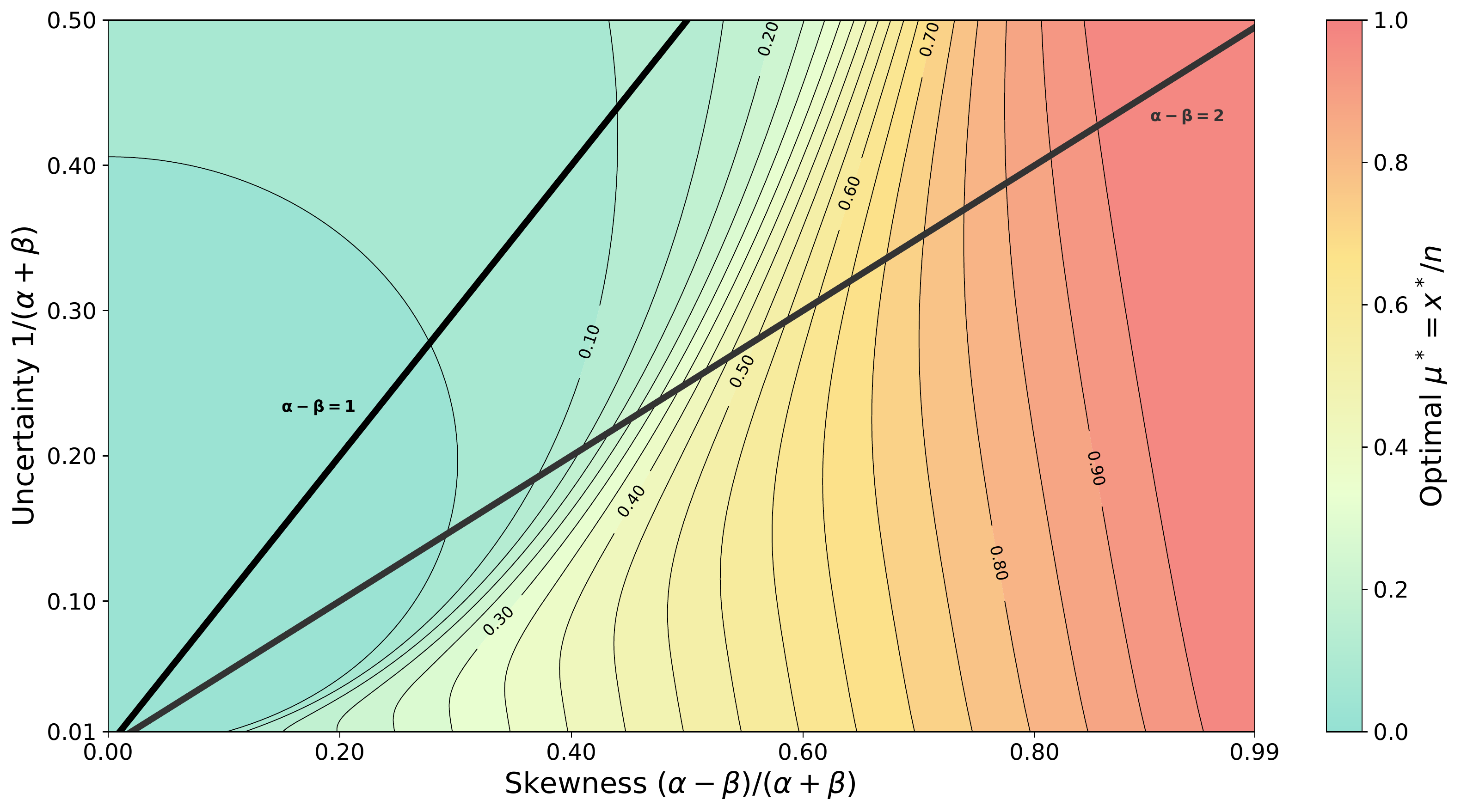}
  \caption{\textbf{Optimal bonus in asymptotic case}: for all $\alpha,\beta\geq 1$, we use $\frac{\alpha-\beta}{\alpha+\beta}$ to measure the skewness of the prior and $\frac{1}{\alpha+\beta}$ to measure the uncertainty of the prior. The figure shows that when $n$ is sufficiently large (we use $n=10000$ here), the relationship between the optimal bonus size and skewness/uncertainty.}
  \label{fig:infinite}
\end{figure}

\subsection{Results} 
Table~\ref{table:basic} shows the optimal bonus size in each case of the prior---symmetric, certain, and general---for both any finite number of rounds and the asymptotic value as the number of rounds increases.  Note that we have closed-form formulae for the symmetric case and the asymptotic certain case.

We obtain the following interesting insights: 

\paragraph{Insight 1. In the certain case, more uneven match-ups lead to a larger optimal bonus} When the match is more uneven, i.e., when the prior has a higher skewness $\frac{\alpha-\beta}{\alpha+\beta}$, the optimal bonus is larger. Interestingly, in the certain case, we find that the optimal bonus is around $\frac{\alpha-\beta}{\alpha+\beta}n$ which is the expected amount of points the weaker player needs to come back, which we call the ``expected lead''. For example, we let Marvel superheroes compete in multi-rounds. When Thor fights Hawkeye, there should be a large bonus than when Thor fights Iron Man.

\paragraph{Insight 2. In the symmetric case, more uncertainty leads to a higher optimal bonus.} In the symmetric case when the prior has more uncertainty, i.e.,  when $\frac{1}{\alpha+\beta}=\frac{1}{2\alpha}$ is larger, the optimal bonus is larger. In particular, the uniform case has a higher bonus than the certain case. For example, if Thor fights Superman, no one really knows the relative abilities of the heroes, since they come from different worlds. There should be a larger bonus than if Thor fights Iron Man. Since Thor has fought Iron Man before and it is known that they will be a good match-up. Note that in general, the optimal bonus size is not monotone in the uncertainty while holding the skewness fixed. 

Figure~\ref{fig:infinite} shows the optimal bonus's for the asymptotic case.   For small $\alpha-\beta$, it is similar to the symmetric case (the $y$-axis); and, analogously, for large $\alpha-\beta$, it is similar to the certain case  (the $x$-axis). Thus, we can use the special cases' results to approximate the general case. 

\subsection{Related Work}

\citet{ely2015suspense} provide a clear definition of surprise amount. Starting from \citet{ely2015suspense}, a growing literature examines the relationship between the surprise and the perceived quality in different games and presents empirical support that audiences' perceived quality of a game is partly determined by the amount of surprise generated in this process, such as tennis~\cite{bizzozero2016importance}, soccer~\cite{buraimo2020unscripted,LUCAS201758}, and rugby~\cite{scarf2019outcome}.  \citet{ely2015suspense} additionally study what number of rounds maximizes surprise in our certain case.  In contrast,  we study how to derive the optimal number of points for the final round and consider a more general prior model.

In addition to the surprise-related work, there has been a growing number of theoretical and empirical results studying how game rules affect different properties of the game, primarily fairness. For example, \citet{brams2018making} study how to make penalty shootouts rule fairer. \citet{braverman2008mafia} study how to make a popular party game, mafia, fair by tuning the number of different characters in the game. \citet{percy2015mathematical} shows that the change of badminton scoring system in 2006, in an attempt to make the game faster paced, did not adversely affect the fairness. 

One conception of fairness corresponds to the ``better'' team winning.  However, in some cases, like the certain case, this makes for a very dull game as there will be no surprise possible.  Thus, the goal of optimizing surprise and making the ``better'' team consistently win are sometimes in tension.

In addition to fairness, \citet{percy2015mathematical} also studies the influence of changes in the badminton scoring system implemented in 2006 on the entertainment value which is related to the number of rallies. However, in contrast to our work,  \citet{percy2015mathematical} focuses on comparing two scoring rules rather than optimization and does not formally model the entertainment value. \citet{kovacs2009effect} studies the effect of changes in the volleyball scoring system and empirically shows that the change may make the length of the matches more predictable. 

Results from \citet{mossel2014majority} based on Fourier analysis find that if one team wins each round with probability $p > 1/2$ then they win the match with probability at least $p$.  Moreover, the only way for the winning probability to be exactly $p$ is for the match to be decided by the outcome of one round. (They were studying opinion aggregating on social networks and this result applies to all deterministic symmetric and monotone Boolean functions with i.i.d. inputs.) This directly relates to the present work because, ideally, each team would start with a prior probability of winning close to 1/2.  This result says that the dictator function  (which can be enacted in our setting with a bonus of size $n$) yields a prior as close to 1/2 as possible.  Our results in the certain case can be seen as trading off the two goals of having a prior probability close to 1/2 but also revealing information more slowly over time than a dictator function.  

\section{Problem Statement}

We consider an $n$ round competition between two players, Alice and Bob. In round $i$, a task is assigned to the players and the winner receives points. After the $n$th round, the player with the higher accumulated score wins the competition. In our setting, we assume that each of the first $n-1$ rounds is worth one point. However, the point value $x$ for the final round might be different, and we call $x$ the \emph{bonus}.  Setting the final round bonus is a special, but interesting case to study.   Without loss of generality, we consider $x$ to be an integer where $0\leq x\leq n$.\footnote{Mathematically, $x>n$ is equivalent to $x=n$ where only the final round matters. Any non-integer $x$ is equivalent to an integer. For example, when $n=4$, $x=1.3$ is equivalent to $x=2$.}  To ensure there is no tie, we additionally require  $x$ has the same parity as $n$.  Let $\mathcal{X}(n)$ represent our considered range for $x$, that is

\[
\mathcal{X}(n) = \begin{cases}
\{1,3,5, \ldots, n\}\text{ , when n is odd}\\
\{0,2,4, \ldots, n\}\text{ , when n is even}
\end{cases} .
\]

\subsection{Optimization Goal}

To introduce our optimization goal formally, we first introduce the concept of belief curve. 

\paragraph{Belief Curve}The audience's belief curve is a sequence of random variables \[\belset:=(\bel_0,\bel_1,\ldots,\bel_n)\] where $\bel_i,i\in[n]$ is the audience's belief for the probability that Alice wins the whole competition after round $i$. $\bel_0$ is the initial belief. $\bel_n$ is either zero or one since the final outcome must be revealed in the end. 

We define $O\in\{0,1\}$ as the outcome of the whole competition, that is \[O=\begin{cases}0 & \text{Alice loses the whole competition} \\ 1 & \text{Alice wins the whole competition}\end{cases},\] then we define random variables
\begin{align*}
    \his_i&=\begin{cases}- & \text{Alice loses i-th round} \\ + & \text{Alice wins i-th round}\end{cases}\\
    \hisset^{(i)}&=(\his_1,\his_2,\ldots,\his_{i})
\end{align*}
i.e., $\hisset^{(i)}$ consists of the history of the first $i$ rounds, we call $\hisset^{(i)}$ an \emph{i-history}.

Next, we define random variables
\begin{align*}
    \bel_i&=\Pr[O=1|\hisset^{(i)}]\\
    \belset^{(i)}&=(\bel_0,\bel_1,\ldots,\bel_i)
\end{align*}
i.e., $\bel_i$ is the conditional probability that Alice wins the whole competition.  Additionally, we use $\belset$ as a shorthand for $\belset^{(n)}$. Note that $\belset$ is a doob martingale~\cite{1940Regularity} thus $\forall i, \E[\bel_{i+1}|\hisset^{(i)}]=\bel_i$. 

\begin{definition}[Surprise] \cite{ely2015suspense}
Given the belief curve $\belset$, we define $\Delta_\belset^i:=|\bel_i-\bel_{i-1}|$ as the \emph{amount of surprise generated by round $i$}. 
We define the \emph{overall surprise} for a given belief curve to be \[\Delta_\belset:= \sum_i \Delta_\belset^i. \]
\end{definition}

\paragraph{Maximizing the Overall Expected Surprise}

We aim to compute the bonus size $x$ which, in expectation, maximizes the audience's overall surprise.  That is, we aim to find $x$ to maximize the overall surprise \[ \argmax_{x\in \optmset} \E[\Delta_\belset(x)] \] where $\Delta_\belset(x)$ is the overall surprise when the bonus round's point value is equal to $x$.  

\begin{figure}[!ht]\centering
  \includegraphics[width=.8\linewidth]{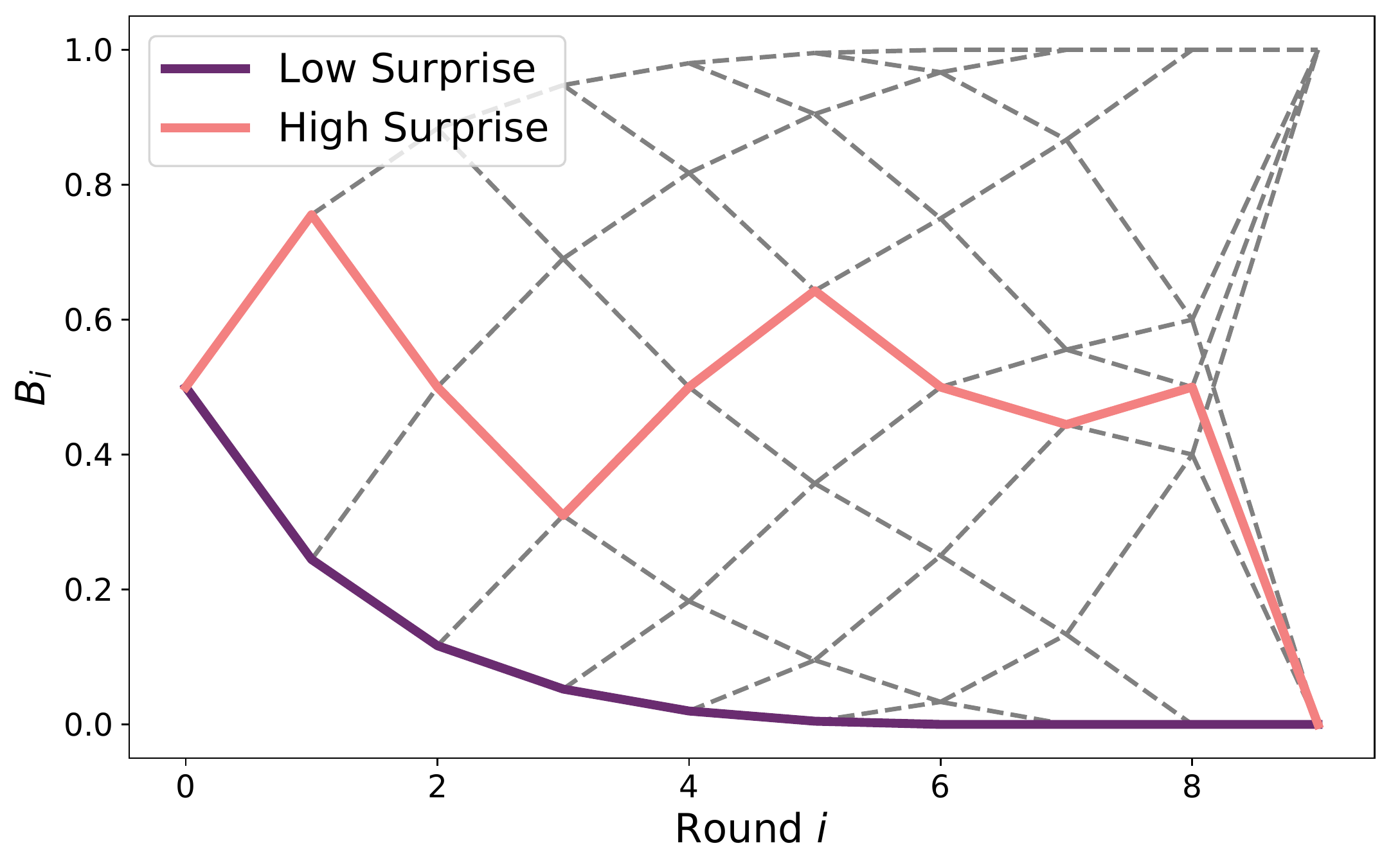}
  \caption{\textbf{Belief curves with low/high overall surprise}}
  \label{fig:two_example_curves}
\end{figure}

\subsection{Model of Prior Belief}

We introduce a natural model for the audience's prior. We will maximize the overall expected surprise in this model. 

We assume that each player's winning probability across rounds is constant, i.e., Alice wins with the \emph{same} probability $p$ in each round. Moreover, we assume the outcomes of these tasks are independent. 

\paragraph{Prior over $p$} We use Beta distribution $\betadis(\alpha,\beta)$ to model the audience's prior about Alice's winning probability $p$ in each round. The family of Beta distributions is sufficiently rich to cover a variety of important scenarios including the uniform case ($\alpha=\beta=1$), the symmetric case ($\alpha=\beta$) and the certain cases ($\alpha=\lambda p,\beta = \lambda(1-p), \lambda\rightarrow \infty$).  A key property of the Beta distribution is that, if $p$ is drawn from a Beta distribution and then we see the outcome of a coin which lands heads with probability $p$,  the posterior of $p$ after observing a coin flip is still a Beta distribution.
\begin{claim}[Beta's posterior is Beta]
If the prior $p$ follows $\betadis(\alpha ,\beta)$, then conditioning on Alice winning the first round, the posterior distribution over p is $\betadis(\alpha+1,\beta)$, and conditioning on Alice losing the first round, the posterior distribution over p is $\betadis(\alpha,\beta+1)$.
\label{cla:betaproperty}
\end{claim}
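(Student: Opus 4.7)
The plan is to apply Bayes' rule directly to the Beta density and then recognize the resulting unnormalized density as that of another Beta distribution. Recall that the prior density is
\[
f_{\alpha,\beta}(p) \;=\; \frac{p^{\alpha-1}(1-p)^{\beta-1}}{B(\alpha,\beta)}, \qquad p\in[0,1],
\]
and that, conditional on $p$, Alice wins the first round with probability $p$ and loses with probability $1-p$ (independently across rounds, as assumed in the model).

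First, I would compute the posterior density after a win. By Bayes' rule,
\[
f(p \mid \his_1 = +) \;=\; \frac{\Pr[\his_1=+ \mid p]\, f_{\alpha,\beta}(p)}{\Pr[\his_1 = +]} \;=\; \frac{p \cdot p^{\alpha-1}(1-p)^{\beta-1}/B(\alpha,\beta)}{\Pr[\his_1=+]}.
\]
The numerator is proportional to $p^{\alpha}(1-p)^{\beta-1}$, which is exactly the unnormalized density of $\betadis(\alpha+1,\beta)$. Since a density must integrate to $1$, the normalizing constant is forced to be $B(\alpha+1,\beta)$, so $p \mid \his_1=+ \sim \betadis(\alpha+1,\beta)$. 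As a byproduct one reads off $\Pr[\his_1=+] = B(\alpha+1,\beta)/B(\alpha,\beta) = \alpha/(\alpha+\beta)$, which is a useful sanity check against the known mean of the Beta distribution.

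The loss case is symmetric: multiplying $f_{\alpha,\beta}(p)$ by $1-p$ gives something proportional to $p^{\alpha-1}(1-p)^{\beta}$, the unnormalized density of $\betadis(\alpha,\beta+1)$, so $p \mid \his_1=- \sim \betadis(\alpha,\beta+1)$. There is no real obstacle here; the only thing to be careful about is to identify the posterior via its unnormalized form rather than trying to compute the normalizing integral from scratch, and to note that $B(\alpha,\beta)$ is finite and nonzero for the range $\alpha,\beta\geq 1$ we consider, so all the conditional densities are well-defined.
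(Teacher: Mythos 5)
Your proof is correct and is the standard Bayes'-rule argument for Beta--Bernoulli conjugacy; the paper simply states this claim as a known property of the Beta distribution without proof, and your derivation via the unnormalized posterior density $p^{\alpha}(1-p)^{\beta-1}$ (resp.\ $p^{\alpha-1}(1-p)^{\beta}$) is exactly the canonical justification. The sanity check $\Pr[\his_1=+]=\alpha/(\alpha+\beta)$ is a nice touch and is consistent with the notation $q^i_j$ used later in the paper.
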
 Let $p|\hisset^{(i)}$ be a random variable which follows the posterior distribution of $p$ conditioning on i-history $\hisset^{(i)}$. For all $i\leq n-1$, we define an induced random variable $\state_i:=\pluscount(\hisset^{(i)})$ as the number of rounds Alice wins in the first $i$ rounds, called the \emph{state} after $i$ rounds. $p|\state_i$ is a random variable which follows the posterior distribution of $p$ conditioning on state $\state_i$. The property of Beta distribution (Claim~\ref{cla:betaproperty}) directly induces the following claim.

\begin{claim}[Order Independence]\label{cla:beta}
For all $i\leq n-1$, for all $\hisins\in\{+,-\}^i$, $p|(\hisset^{(i)}=\hisins)$ follows distribution $\betadis(\alpha+\pluscount(h), \beta+i-\pluscount(h))$.
\end{claim}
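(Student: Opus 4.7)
The plan is to prove Claim~\ref{cla:beta} by induction on $i$, using Claim~\ref{cla:betaproperty} as the one-step update rule. The base case $i = 0$ is vacuous: there is only the empty history, $\pluscount(\emptyset) = 0$, and by assumption $p \sim \betadis(\alpha,\beta)$, matching the claim. For the inductive step I would fix some $i \geq 1$ and assume the statement holds for every $(i-1)$-history. Given an arbitrary $i$-history $\hisins = (\hisins',\hisins_i) \in \{+,-\}^i$ where $\hisins' \in \{+,-\}^{i-1}$, I would decompose the conditioning event $\{\hisset^{(i)} = \hisins\}$ as $\{\hisset^{(i-1)} = \hisins'\} \cap \{\his_i = \hisins_i\}$ and condition sequentially.

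The key point to invoke here is the model's assumption that conditional on $p$, the round outcomes are i.i.d.\ Bernoulli$(p)$. Because of this, once we condition on $\hisset^{(i-1)} = \hisins'$, the random variable $\his_i$ is still, conditional on $p$, Bernoulli$(p)$ and independent of the past. So Claim~\ref{cla:betaproperty} applies with the posterior after $i-1$ rounds playing the role of the prior: by the inductive hypothesis, this posterior is $\betadis(\alpha + \pluscount(\hisins'), \beta + (i-1) - \pluscount(\hisins'))$, and one more Bernoulli observation increments either the first or the second parameter depending on $\hisins_i$. Checking that $\pluscount(\hisins) = \pluscount(\hisins') + \mathbbm{1}[\hisins_i = +]$ and $i - \pluscount(\hisins) = (i-1) - \pluscount(\hisins') + \mathbbm{1}[\hisins_i = -]$ then matches the parameters in the claim exactly.

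There is essentially no obstacle in the calculation; the only step requiring a little care is articulating why Claim~\ref{cla:betaproperty} is applicable at step $i$ rather than only at step $1$. Concretely, I would note that conditional independence of $\his_1,\ldots,\his_n$ given $p$ implies that the joint density of $(p,\hisset^{(i)})$ factors as $f_{\alpha,\beta}(p) \prod_{j=1}^{i} p^{\mathbbm{1}[\his_j=+]}(1-p)^{\mathbbm{1}[\his_j=-]}$, so the posterior density of $p \mid \hisset^{(i)} = \hisins$ is proportional to $p^{\alpha-1+\pluscount(\hisins)}(1-p)^{\beta-1+i-\pluscount(\hisins)}$. This is the density of $\betadis(\alpha+\pluscount(\hisins),\beta+i-\pluscount(\hisins))$, which proves the claim directly without even invoking induction; I would likely include this one-line likelihood argument as a sanity check alongside the inductive proof, since it makes the order-independence manifest from the symmetry of the product.
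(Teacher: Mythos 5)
Your proof is correct and matches the paper's intent: the paper states that Claim~\ref{cla:beta} is "directly induced" by Claim~\ref{cla:betaproperty} without giving details, and your induction, with the conditional-i.i.d. assumption letting Claim~\ref{cla:betaproperty} act as a one-step update at each round, is precisely what that remark means. Your closing direct-likelihood computation is a clean, order-agnostic alternative that arrives at the same Beta posterior in one line; either suffices.
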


This immediately implies the following corollary.  

\begin{corollary}
[State Dependence]\label{cor:beta}
For all $i\leq n-1$, for all $\hisins\in\{+,-\}^i$, $p|(\hisset^{(i)}=\hisins)$ follows the same distribution as $p|(\state_i = \pluscount(\hisins))$ and $\Pr[O=1|\hisset^{(i)}=\hisins]=\Pr[O=1|\state_i=\pluscount(\hisins)]$.
\end{corollary}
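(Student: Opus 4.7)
The plan is to derive this corollary directly from Claim~\ref{cla:beta} (Order Independence) in two steps, since the result is essentially a repackaging of that claim phrased in terms of the state $\state_i$ rather than the full history.

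For the first part of the statement, I would observe that Claim~\ref{cla:beta} already asserts that the posterior $p \mid (\hisset^{(i)} = \hisins)$ depends on $\hisins$ only through $\pluscount(\hisins)$ and $i$. By definition $\state_i = \pluscount(\hisset^{(i)})$, so conditioning on $\state_i = \pluscount(\hisins)$ is the same as conditioning on the disjoint union of events $\{\hisset^{(i)} = \hisins'\}$ over all $\hisins' \in \{+,-\}^i$ with $\pluscount(\hisins') = \pluscount(\hisins)$. Since each such conditional distribution of $p$ is identically $\betadens{\alpha + \pluscount(\hisins), \beta + i - \pluscount(\hisins)}$, by the law of total probability the mixture $p \mid \state_i$ coincides with $p \mid (\hisset^{(i)} = \hisins)$.

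For the second part, the plan is to show that $\Pr[O=1 \mid \hisset^{(i)} = \hisins]$ factors through $(\state_i, p)$. The current accumulated score of Alice after round $i$ depends on $\hisins$ only through $\pluscount(\hisins)$, and the outcomes of rounds $i+1, \ldots, n$ are i.i.d.\ Bernoulli($p$) conditional on $p$ (independent of the past history by assumption). Hence, for any $\hisins$,
\begin{equation*}
\Pr[O = 1 \mid \hisset^{(i)} = \hisins] = \E_{p \mid \hisset^{(i)} = \hisins}\bigl[\, f(\pluscount(\hisins), p)\,\bigr],
\end{equation*}
where $f(k, p)$ is the probability that Alice wins overall given she has $k$ wins after $i$ rounds and the per-round success probability is $p$. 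The first part of the corollary shows the outer posterior over $p$ depends only on $\pluscount(\hisins)$, and the integrand likewise only depends on $\pluscount(\hisins)$ and $p$. Thus the quantity matches $\Pr[O=1 \mid \state_i = \pluscount(\hisins)]$.

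There is no real obstacle here; the only mild subtlety to state cleanly is that the martingale/exchangeability structure of the model guarantees that future rounds are conditionally independent of $\hisset^{(i)}$ given $p$, so that $\Pr[O=1 \mid \hisset^{(i)}, p]$ is a function of $(\pluscount(\hisins), p)$ alone. Once this is noted, both claims fall out by tower/substitution, and nothing beyond Claim~\ref{cla:beta} and the model assumption of constant per-round win probability $p$ is needed.
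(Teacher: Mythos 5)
Your proof is correct and amounts to carefully filling in the details behind the paper's terse "This immediately implies the following corollary," using exactly the ingredients the paper intends: Claim~\ref{cla:beta} for the posterior, plus the model assumption that rounds are i.i.d.\ Bernoulli($p$) given $p$ so that $\Pr[O=1\mid \hisset^{(i)},p]$ factors through $(\pluscount,p)$. One minor nit: the conditional independence of future rounds from the detailed past given $p$ comes from the i.i.d.\ modeling assumption, not from the martingale property of $\belset$ (which is a consequence, not the cause), but you do also cite the correct assumption, so this does not affect the argument.
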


That is, the posterior distribution of $p$ or $O$ only depends on the state, i.e., the number of rounds Alice wins and the order does not matter. For example, history $++-$ and history $-++$ induce the same posterior. 

The above properties make our prior model tractable. In this model, the optimal bonus size $x^*$ depends on $\alpha,\beta,n$ thus is denoted by $x^*(\alpha,\beta,n)$. 

\subsection{Method Overview}

\paragraph{Technical Challenge} The key technical challenge is that we do not have a clean format for the belief value across all rounds (especially early rounds). For example, for the asymmetric case, it is even difficult to represent the initial belief $B_0$ for different $x$.  A naive way to compute all belief values is using backward induction.  We can then enumerate all possible $x$ to find the optimal one. However, this method has $O(n^3)$ time complexity.  

To overcome this challenge, we utilize the properties of the Beta distributions. First, we show that we only need to analyze the belief values in the final two rounds and choosing $x$ becomes a trade-off between the final and penultimate rounds\footnote{When $x$ increases, the final round generates more surprise while the penultimate round generates less surprise.}; Second, we study a few important special cases (asymptotic, symmetric, certain) which can further simplify the final two rounds' analysis significantly. Third, instead of actually computing $\E[\Delta_\belset(x)]$, we only analyze how $\E[\Delta_\belset(x)]$ changes with $x$.

\paragraph{Method Overview} Our method has three steps. 

\begin{figure}[!ht]\centering
  \includegraphics[width=.9\linewidth]{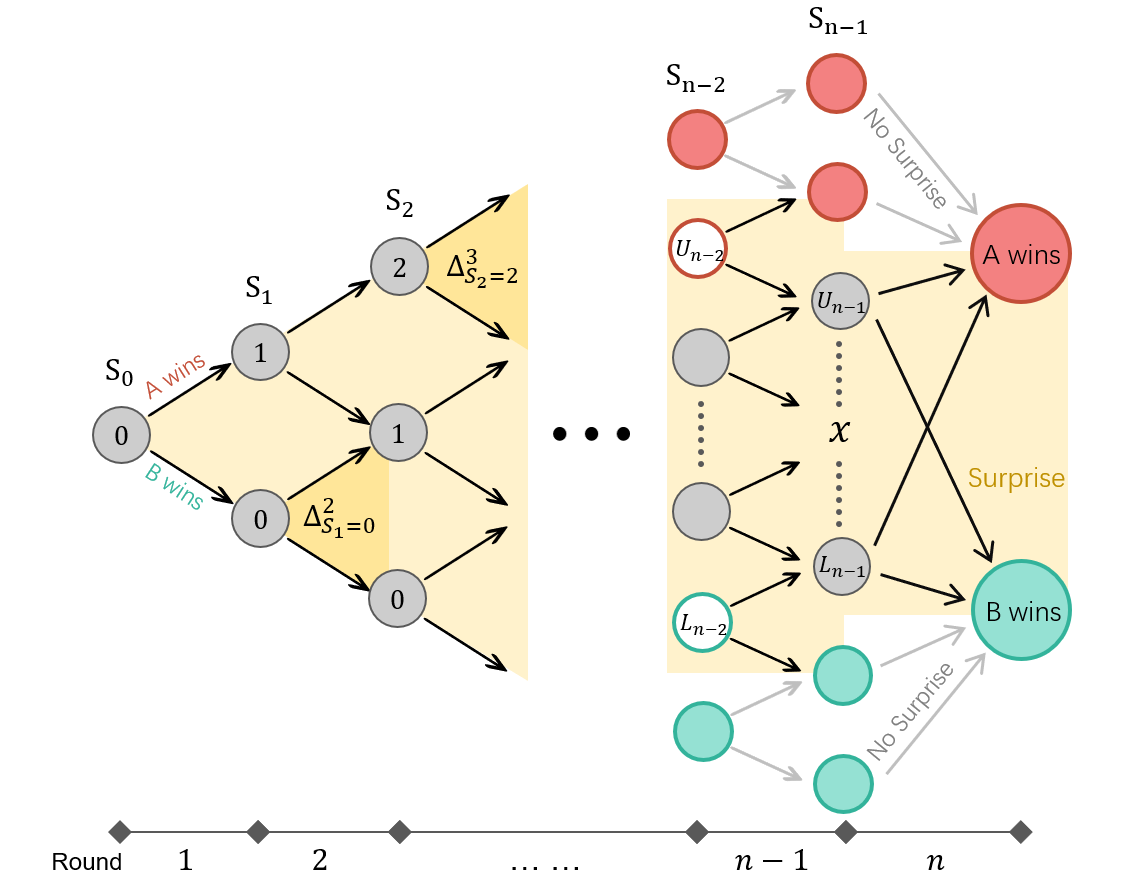}
  \caption{\textbf{Overview of our method} }
  \label{fig:last2example}
\end{figure}

\begin{description}
\item [Step 1 (Main Technical Lemma)] We show that fixing $n,\alpha, \beta$, $\forall$ $x$, there exists a constant $C$ such that 
\[\E[\Delta_\belset(x)]=\E[\Delta_\belset^{n-1}(x)]*C+\E[\Delta_\belset^{n}(x)]\]
Thus, the choice of $x$ only depends on the trade-off of the penultimate and the final round's surprise. This significantly simplifies our analysis since we have a close form expression for the belief in the final two rounds. 
\item [Step 2 (Final \& Penultimate Rounds)] 
We define $\Delta_{\hisset^{(i-1)}}^i$ as the expected amount of surprise generated by round $i$ given the history $\hisset^{(i-1)}$.  $\Delta_{\state_{i-1}}^i$ is the amount of surprise generated by round $i$ given that Alice wins $\state_{i-1}$ rounds in the first $i-1$ rounds. Corollary~\ref{cor:beta} shows that $\Delta_{\hisset^{(i-1)}}^i$ only depends on the state of the history, that is, for any history $\hisins\in \{+,-\}^{i-1}$, $\Delta_{\state_{i-1}=\pluscount(\hisins)}^i=\Delta_{\hisset^{(i-1)}=\hisins}^i$.
\begin{description}
\item [The final round:] In the final round, we notice when the difference between Alice and Bob is strictly greater than $x$, the outcome of the whole competition does not change regardless of who wins the final round, and no surprise is generated in the final round. Formally, we define $L_{n-1}:=\frac{n-x}2$ and $U_{n-1}:=\frac{n+x-2}2$. Only states $S_{n-1}\in[L_{n-1},U_{n-1}]$ generate surprise. Thus, the total surprise generated in round $n$ is
\begin{align} \label{eq:lastround}
\E[\Delta_\belset^{n}(x)]=\overbrace{\sum_{j=L_{n-1}}^{U_{n-1}}\Pr[S_{n-1}=j]*\Delta_{S_{n-1}=j}^n}^\mytextn
\end{align}
In any state $S_{n-1}=j\in[L_{n-1},U_{n-1}]$, whoever wins the final round wins the whole competition and the analysis for all $\Delta_{S_{n-1}=j}^n$ is identical.

\item [The penultimate round:] In the penultimate round, similarly, we define $L_{n-2}:=\frac{n-x-2}2$ and $U_{n-2}:=\frac{n+x-2}2$. Similarly, only states $S_{n-2}\in[L_{n-2},U_{n-2}]$ generate surprise. The states in $(L_{n-2},U_{n-2})$ are similar. However, unlike the final round, here the states at the endpoints require different analysis. Therefore, we divide the analysis into three parts \begin{align}
\E[\Delta_\belset^{n-1}(x)]=&\overbrace{\Pr[S_{n-2}=L_{n-2}]*\Delta^{n-1}_{S_{n-2}=L_{n-2}}}^\mytexta+\overbrace{\Pr[S_{n-2}=U_{n-2}]*\Delta^{n-1}_{S_{n-2}=U_{n-2}}}^\mytextb+ \notag\\ 
&\underbrace{\sum_{j=L_{n-2}+1}^{U_{n-2}-1}\Pr[S_{n-2}=j]*\Delta^{n-1}_{S_{n-2}=j}}_\mytextm \label{eq:2tolast}
\end{align}

\end{description}
\item [Step 3 (Local Maximum): ] To calculate the optimal bonus $x$, we need to analyze the how $\E[\Delta_\belset(x)]$ changes with $x$. 

\begin{description}
\item [Finite Case] Since we require the bonus $x$ to be an integer with the same parity as $n$, we calculate the change of the function $\E[\Delta_\belset(x)]$ when $x$ is increased/decreased by a step size $2$. We find that there is only one local (and thus global) maximum.  
\item [Asymptotic Case] We calculate the derivative of  $\E[\Delta_\belset(x)]$ with respect to $x$, and find that it only has one zero solution which is a local (and also global) maximum.
    
\end{description}
\end{description}

\section{Main Technical Lemma}

In this section, we introduce our main technical lemma: regardless of the final round's point value $x$, the expected surprise in the first $n-1$ rounds has a fixed relative ratio. Thus, the overall surprise can be rewritten as a linear combination of the final and the penultimate rounds' expected surprise, where the coefficients are independent of $x$. This simplifies our analysis significantly since the choice of $x$ only depends on the trade-off between the final and the penultimate rounds' expected surprise.

\begin{lemma}[Main Technical Lemma]\label{lem:ratio}
When the prior is $\betadis(\alpha,\beta)$, the ratio of the surprise of round $i$ and round $i+1(i+1<n)$ is independent of the final round's point value $x$:
\[
\frac{\E[\Delta_\belset^{i}]}{\E[\Delta_\belset^{i+1}]}=\frac{i+\alpha+\beta}{i+\alpha+\beta-1},
\]
thus the overall surprise is a linear combination of the final and the penultimate round's surprise,
\[
\E[\Delta_\belset]=\sum_{i=1}^{n}\E[\Delta_\belset^i]=\E[\Delta_\belset^{n-1}]*(n+\alpha+\beta-2)*\harmo_{\alpha+\beta}(n-1)+\E[\Delta_\belset^{n}]
\]
where $\harmo_{\alpha+\beta}(n-1):=\sum_{i=1}^{n-1}\frac{1}{i+\alpha+\beta-1}$.  We  use $\harmo$ as shorthand for $\harmo_{\alpha+\beta}(n-1)$.
\end{lemma}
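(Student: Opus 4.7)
The plan is to establish, for $i+1<n$, a multiplicative recursion relating $\E[\Delta_\belset^i]$ and $\E[\Delta_\belset^{i+1}]$ whose ratio is the claimed $x$-independent constant, and then telescope. By the State Dependence corollary (Corollary~\ref{cor:beta}), the belief is a deterministic function of the state, so I introduce $b_i(j):=\Pr[O=1\mid\state_i=j]$, write $\bel_i=b_i(\state_i)$, and work with the discrete first difference $d_i(j):=b_i(j+1)-b_i(j)$. The bonus $x$ affects only the terminal boundary $b_n\in\{0,1\}$, which then shapes $d_{n-1}(\cdot)$, but the ratio identity will hold \emph{pointwise} in $d$ and hence be insensitive to $x$.

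Step 1 (closed form for per-round surprise). Given $\state_{i-1}=j$, Beta--Binomial gives the posterior mean $q:=(\alpha+j)/(\alpha+\beta+i-1)$ for $p$, so Alice wins round $i$ with probability $q$, yielding the two-point update $\bel_i\in\{b_i(j+1),b_i(j)\}$. Since $\belset$ is a martingale, $b_{i-1}(j)=q\,b_i(j+1)+(1-q)\,b_i(j)$, and assuming $b_i$ is monotone in $j$ (a side claim to verify by backward induction from the terminal condition), the absolute value drops to
\[
\Delta^{i}_{\state_{i-1}=j}\;=\;2q(1-q)\,d_i(j)\;=\;\frac{2(\alpha+j)(\beta+i-1-j)}{(\alpha+\beta+i-1)^2}\,d_i(j).
\]

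Step 2 (recursion on $d_i$ and coefficient matching). Applying the one-step martingale identity to both $b_i(j+1)$ and $b_i(j)$ in terms of $b_{i+1}$ and subtracting yields
\[
d_i(j)\;=\;\frac{(\alpha+j+1)\,d_{i+1}(j+1)+(\beta+i-j)\,d_{i+1}(j)}{\alpha+\beta+i}.
\]
Substituting into $\E[\Delta_\belset^i]=\sum_j\Pr[\state_{i-1}=j]\,\Delta^{i}_{\state_{i-1}=j}$ and reindexing, the coefficient of $d_{i+1}(k)$ on the left side is proportional to
\[
(\alpha+k-1)\Pr[\state_{i-1}{=}k{-}1]+(\beta+i-1-k)\Pr[\state_{i-1}{=}k].
\]
Using the Beta--Binomial PMF $\Pr[\state_{i-1}=j]\propto\binom{i-1}{j}\Gamma(\alpha+j)\Gamma(\beta+i-1-j)$ together with Pascal's identity $\binom{i-1}{k-1}+\binom{i-1}{k}=\binom{i}{k}$, this bracket collapses to exactly $(\alpha+\beta+i-1)\Pr[\state_i=k]$. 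Matching against the coefficient of $d_{i+1}(k)$ in $\E[\Delta_\belset^{i+1}]=\sum_k\Pr[\state_i=k]\cdot 2(\alpha+k)(\beta+i-k)/(\alpha+\beta+i)^2\cdot d_{i+1}(k)$ produces the desired ratio $(i+\alpha+\beta)/(i+\alpha+\beta-1)$, independent of $k$ and hence of the particular shape of $d_{i+1}$ induced by $x$.

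Step 3 (telescoping and main obstacle). Writing the ratio as $a_{i+1}/a_i$ with $a_i:=i+\alpha+\beta-1$, the product $\prod_{j=i}^{n-2}(a_{j+1}/a_j)$ telescopes to $(n+\alpha+\beta-2)/(i+\alpha+\beta-1)$, so $\E[\Delta_\belset^i]=\frac{n+\alpha+\beta-2}{i+\alpha+\beta-1}\,\E[\Delta_\belset^{n-1}]$ for $i\leq n-1$. Summing over $i$ extracts the factor $(n+\alpha+\beta-2)\,\harmo$ and adding $\E[\Delta_\belset^n]$ delivers the decomposition in the lemma. The main obstacle I expect is the bookkeeping in Step~2: correctly handling the boundary indices $k=0$ and $k=i$ (where only one of $j=k-1$ or $j=k$ contributes) in the coefficient match, and establishing the monotonicity $d_i\geq 0$ used to discard absolute values; the latter I would prove by backward induction, noting that $d_n\geq 0$ trivially and the recursion preserves nonnegativity because its coefficients $\alpha+j+1$ and $\beta+i-j$ are nonnegative throughout the relevant index range.
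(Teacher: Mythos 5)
Your proof is correct but takes a genuinely different route from the paper. The paper proves the ratio \emph{pointwise per-history}: for each history $h$ it shows $\Delta^i_{\hisset^{(i-1)}=h}/\E[\Delta^{i+1}_{\hisset^{(i)}|\hisset^{(i-1)}=h}]=(i+\alpha+\beta)/(i+\alpha+\beta-1)$, using Claim~\ref{cla:roundsurp} twice and then exploiting the identity $\Pr[h{+}{-}\,|\,h]=\Pr[h{-}{+}\,|\,h]$ (equivalently $q(1-q^+)=(1-q)q^-$) to collapse the ratio to $q/q^-$; the extension to full expectations is then immediate because the per-history ratio is constant. You instead aggregate first: you push the martingale recursion down to the difference sequence $d_i(j)$, substitute into the global sum $\E[\Delta_\belset^i]=\sum_j\Pr[\state_{i-1}{=}j]\,2q(1{-}q)\,d_i(j)$, reindex, and collapse the bracket $(\alpha+k-1)\Pr[\state_{i-1}{=}k{-}1]+(\beta+i-1-k)\Pr[\state_{i-1}{=}k]$ via the Beta--Binomial normalization and Pascal's identity to $(\alpha+\beta+i-1)\Pr[\state_i{=}k]$, matching coefficients of $d_{i+1}(k)$. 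I verified both the recursion $d_i(j)=\frac{(\alpha+j+1)d_{i+1}(j+1)+(\beta+i-j)d_{i+1}(j)}{\alpha+\beta+i}$ and the bracket collapse (the binomial-coefficient form handles the boundary indices $k=0,i$ automatically since $\binom{i-1}{-1}=\binom{i-1}{i}=0$), so Step~2 is sound. The trade-off: the paper's argument is slicker and actually gives a slightly stronger pointwise statement, but requires spotting the $q(1-q^+)=(1-q)q^-$ trick; your argument is more mechanical and transparent about where the $x$-independence comes from (the ratio holds coefficient-by-coefficient on the $d_{i+1}(k)$ basis, regardless of how $x$ shapes $d_{n-1}$), at the cost of more indexing bookkeeping. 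Your acknowledged side claim --- monotonicity $d_i\geq 0$, needed to drop absolute values --- is also implicitly assumed in the paper's Claim~\ref{cla:roundsurp}; your backward-induction sketch for it is valid since the recursion's coefficients are positive on the relevant index range and $d_n\geq 0$ at the terminal condition.
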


\paragraph{Proof Sketch} To prove this lemma, we first introduce Claim~\ref{cla:roundsurp} which gives a simple format of the expected surprise generated by a single round initialized from any history $h$. We then apply the claim to analyze two consecutive rounds initialized from any history and show that the expected surprise produced by these two rounds has a fixed relative ratio which only depends on the round number, $\alpha$, and $\beta$. We then extend the results to the expectation over all possible histories.

To simplify the notation in the proof, we introduce shorthand notations for expectation of $p$, the belief values, and the difference between belief values here:

\[
\begin{cases}
   q:=\E[p|\hisset^{(i-1)}=\hisins]\\
q^+:=\E[p|\hisset^{(i)}=\hisins+]\\
q^-:=\E[p|\hisset^{(i)}=\hisins-]\\
\end{cases} \begin{cases}
   b:=\Pr[O=1|\hisset^{(i-1)}=\hisins]\\
b^+:=\Pr[O=1|\hisset^{(i)}=\hisins+]\\
b^-:=\Pr[O=1|\hisset^{(i)}=\hisins-]\\
\end{cases} \begin{cases}
   d:=b^+\minus b^-\\
d^+:=b^{++}-b^{+-}\\
d^-:=b^{-+}-b^{--}\\

\end{cases}
\]
where $ b^{++},b^{+-},b^{-+},b^{--}$ are defined analogously. These notations are also illustrated in Figure~\ref{fig:surpcalc} and Figure~\ref{fig:surprise_example_2}. 

\begin{figure}[!ht]\centering
  \includegraphics[width=.65\linewidth]{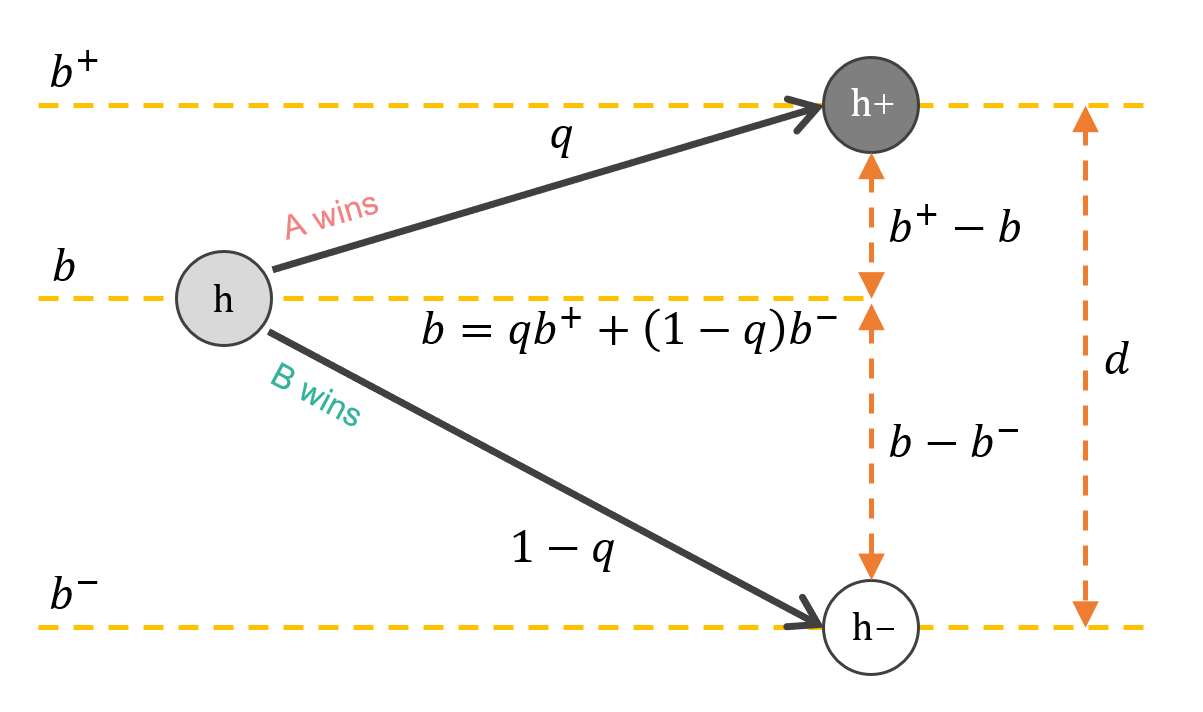}
  \caption{\textbf{A single round} }
  \label{fig:surpcalc}
\end{figure}

\begin{claim}\label{cla:roundsurp}
Given any history $\hisset^{(i-1)}=h$, we have \[\Delta_{\hisset^{(i-1)}=h}^i=d*2q(1-q).\] 
\end{claim}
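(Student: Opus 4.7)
The plan is to unpack $\Delta^i_{\hisset^{(i-1)}=h} := \E[|\bel_i - \bel_{i-1}| \mid \hisset^{(i-1)}=h]$ and exploit that, conditional on $h$, there are only two possible outcomes for round $i$: Alice wins (so $\bel_i = b^+$) or loses (so $\bel_i = b^-$), while $\bel_{i-1} = b$. My first step is to identify the conditional probabilities of these two outcomes. Since, given $p$, rounds are independent Bernoulli$(p)$ draws, $\Pr[\his_i=+ \mid h, p] = p$, and taking expectation over the posterior $p \mid h$ gives $\Pr[\his_i=+ \mid h] = q$.

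Next, I would use the martingale identity for $\belset$ (equivalently, the law of total probability for $b$) to write $b = q\, b^+ + (1-q)\, b^-$. Substituting yields the signed one-step changes $\bel_i - b = (1-q)d$ on a win and $\bel_i - b = -q\,d$ on a loss, where $d = b^+ - b^-$. Weighting the absolute values by the conditional probabilities $q$ and $1-q$ then produces
\[\Delta^i_{\hisset^{(i-1)}=h} = q \cdot (1-q)|d| + (1-q) \cdot q |d| = 2q(1-q)|d|.\]

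The remaining step is to remove the absolute value, i.e., to argue $d \geq 0$. The plan is to invoke Claim~\ref{cla:betaproperty}: the win-posterior $\betadis(\alpha+1,\beta)$ stochastically dominates the loss-posterior $\betadis(\alpha,\beta+1)$, and by coupling the remaining rounds on a common draw of $p$ one sees that $\Pr[O=1 \mid \cdot]$ is monotone nondecreasing in $p$; hence $b^+ \geq b^-$. This sign step — rather than the algebra — is what I expect to be the main obstacle: the two-outcome decomposition and the martingale substitution are one-line bookkeeping, but $d \geq 0$ requires combining the Beta stochastic-dominance property with a monotone-coupling argument on the conditionally independent future rounds.
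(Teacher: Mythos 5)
Your proposal is correct and follows essentially the same route as the paper: you condition on round $i$'s outcome, identify $\Pr[\his_i=+\mid h]=q$ by integrating out the posterior over $p$, apply the martingale identity $b=q\,b^+ + (1-q)\,b^-$, and simplify. The one place you go beyond the paper is the explicit argument that $d = b^+ - b^- \geq 0$ (via stochastic dominance of the Beta posteriors and a monotone coupling of the remaining rounds); the paper simply writes the expected surprise as $q(b^+-b)+(1-q)(b-b^-)$, dropping the absolute values without comment, so your sign argument fills in a small gap that the paper leaves implicit rather than constituting a different approach.
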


\begin{proof}[Proof of Claim~\ref{cla:roundsurp}]
The expected amount of surprise generated in round $i$ is 
\begin{align} \label{eq:delta-i}
    \Delta_{\hisset^{(i-1)}=h}^i=&q * (b^+ - b)+(1-q)*(b -b^-)
\end{align}

Since $\belset$ is a martingale, we have 
\begin{align} \label{eq:b}
B_i=&\E[B_{i+1}|\hisset^{(i-1)}=\hisins] \nonumber\\
b =& q*b^++(1-q)*b^-
\end{align}

By substituting $b$ from \eqref{eq:b} into \eqref{eq:delta-i} and simplifying the equation, we have

 \[\Delta_{\hisset^{(i-1)}=h}^i=d*2q(1-q)\]
\end{proof}

\begin{figure}[!ht]\centering
  \includegraphics[width=.9\linewidth]{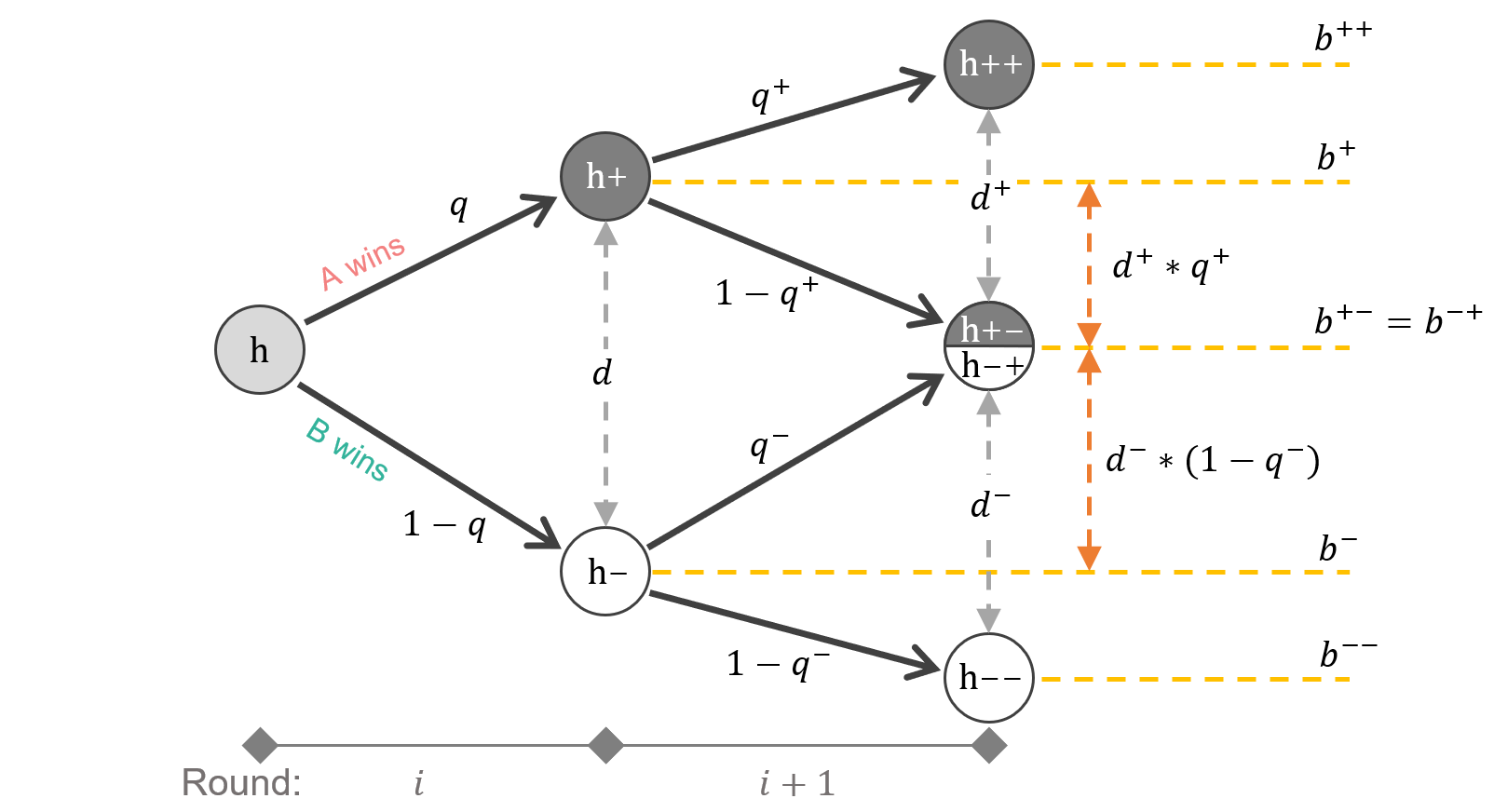}
  \caption{\textbf{Two rounds} }
  \label{fig:surprise_example_2}
\end{figure}

\begin{proof}[Proof of Lemma~\ref{lem:ratio}]

Figure~\ref{fig:surprise_example_2} shows two consecutive rounds starting from history $\hisset^{(i-1)}=\hisins$. We show that the ratio between these two rounds' surprise is fixed.

By Claim~\ref{cla:roundsurp}, we have \[\Delta_{\hisset^{(i-1)}=h}^i=d*2q(1-q)\]

and the expected amount of surprise generated in round $i+1$ is 
\begin{align*}
\E[\Delta_{\hisset^{(i)}|\hisset^{(i-1)}=h}^{i+1}] &= q*\Delta^{i+1}_{\hisset^{(i)}=h+}+(1-q)*\Delta^{i+1}_{\hisset^{(i)}=h-}\\
&=q*d^+*2q^+(1-q^+)+(1-q)*d^-*2q^-(1-q^-)    
\end{align*}
 
Properties of Beta distribution imply that the belief given any history only depends on the state of the history  (Claim~\ref{cla:beta}), thus $b^{+-}=b^{-+}$. Then we have $d=d^+*q^++d^-*(1-q^-)$ (see Figure~\ref{fig:surprise_example_2}) and 
 
\begin{align*}
\frac{\Delta_{\hisset^{(i-1)}=h}^i}{\E[\Delta_{\hisset^{(i)}|\hisset^{(i-1)}=h}^{i+1}]}  &= \frac{d*2q(1-q)}{q*d^+*2q^+(1-q^+)+(1-q)*d^-*2q^-(1-q^-)}\\ &= \frac{(d^+*q^++d^-*(1-q^-))*2q(1-q)}{q*d^+*2q^+(1-q^+)+(1-q)*d^-*2q^-(1-q^-)}\\
                                        &= \frac{d^+q^+q*(1-q)+d^-(1-q^-)(1-q)*q}{d^+q^+q*(1-q^+)+d^-(1-q^-)(1-q)*q^-}
\end{align*}

We observe that if $q(1-q^+)=\Pr[h+-|h]=\Pr[h-+|h]=(1-q)q^-$,  $\frac{1-q}{1-q^+}=\frac{q}{q^-}$, and the ratio becomes $\frac{q}{q^-}$. 

The posterior of Beta distribution is still a Beta distribution. We denote the distribution over $p|(\hisset^{(i-1)}=\hisins)$ by $\betadis(\alpha',\beta')$. Then we have
\[
\begin{cases}
q=\frac{\alpha'}{\alpha'+\beta'}\\
q^-=\frac{\alpha'}{\alpha'+\beta'+1}\\
q^+=\frac{\alpha'+1}{\alpha'+\beta'+1}
\end{cases}
\]
Thus, we have $\Pr[h+-|h]=\Pr[h-+|h]$, and the ratio becomes $\frac{q}{q^-}=\frac{\alpha'+\beta'+1}{\alpha'+\beta'}$, i.e., 
\[
\frac{\Delta_{\hisset^{(i-1)}=\hisins}^{i}}{\E[\Delta_{\hisset^{(i)}|\hisset^{(i-1)}=\hisins}^{i+1}]}=\frac{\alpha'+\beta'+1}{\alpha'+\beta'}
\]

Since the prior $p$ follows $\betadis(\alpha,\beta)$, for any history $\hisins$ of first $i-1$ rounds, the distribution $\betadis(\alpha',\beta')$ over $p|(\hisset^{(i-1)}=\hisins)$ satisfies that $\alpha'+\beta'=i+\alpha+\beta-1$. So starting from any history of first $i-1$ rounds, 
\begin{align}
\frac{\Delta_{\hisset^{(i-1)}=\hisins}^{i}}{\E[\Delta_{\hisset^{(i)}|\hisset^{(i-1)}=\hisins}^{i+1}]} = \frac{i+\alpha+\beta}{i+\alpha+\beta-1} \label{eq:singleratio}
\end{align}

which only depends on $i$, given fixed $\alpha,\beta$.

Therefore 
\begin{align*}
\frac{\E[\Delta_\belset^{i}]}{\E[\Delta_\belset^{i+1}]}=&\frac{\E[\Delta_{\hisset^{(i-1)}}^{i}]}{\E[\Delta_{\hisset^{(i)}}^{i+1}]}\\
=&\frac{\E[\Delta_{\hisset^{(i-1)}}^{i}]}{\E_{\hisset^{(i-1)}}[\E_{\hisset^{(i)}}[\Delta_{\hisset^{(i)}|\hisset^{(i-1)}}^{i+1}|\hisset^{(i-1)}]]}\tag{chain rule}\\
=&\frac{\E_{\hisset^{(i-1)}}[\frac{i+\alpha+\beta}{i+\alpha+\beta-1}*\E_{\hisset^{(i)}}[\Delta_{\hisset^{(i)}|\hisset^{(i-1)}}^{i+1}]]}{\E_{\hisset^{(i-1)}}[\E_{\hisset^{(i)}}[\Delta_{\hisset^{(i)}|\hisset^{(i-1)}}^{i+1}|\hisset^{(i-1)}]]}\tag{due to formula~\eqref{eq:singleratio} }\\
=&\frac{i+\alpha+\beta}{i+\alpha+\beta-1}\\
\end{align*}

Then we can use the expected surprise in round $n-1$ to represent the amount of surprise in any round $i\leq n-1$:

\begin{align*}
\frac{\E[\Delta_\belset^{i}]}{\E[\Delta_\belset^{n-1}]}   &=\prod_{j=i}^{n-2}\frac{\E[\Delta_\belset^{j}]}{\E[\Delta_\belset^{j+1}]}\\
\E[\Delta_\belset^{i}]                             &=\frac{n+\alpha+\beta-2}{i+\alpha+\beta-1}*\E[\Delta_\belset^{n-1}]\\
\end{align*}
Therefore, the sum of the expected surprise in the first $n-1$ rounds is
\begin{align*}
\sum_{i=1}^{n-1}\E[\Delta_\belset^i]   &=\E[\Delta_\belset^{n-1}]*(\sum_{i=1}^{n-1}\frac{n+\alpha+\beta-2}{i+\alpha+\beta-1})\\
&=\E[\Delta_\belset^{n-1}]*(n+\alpha+\beta-2)*\harmo_{\alpha+\beta}(n-1)\\
\end{align*}
\end{proof}

\section{Finite Case}

In this section, we follow our method overview to study the finite case. We first present our results in Section~\ref{sec:finiteresults}. We then show a general analysis in Section~\ref{sec:analyzegeneral} and apply the results of the general analysis to study two special cases: 1) symmetric case: $\alpha=\beta$ including the uniform case: $\alpha=\beta=1$; 2) certain case $\alpha=\lambda p,\beta = \lambda(1-p), \lambda\rightarrow \infty$. Finally, we apply the results in Section~\ref{sec:analyzegeneral} to show a linear algorithm for general beta prior setting. 

\subsection{Results in Finite Case}\label{sec:finiteresults}

Recall that $\round(x):=$the nearest\footnote{When there is a tie, we pick the smaller one.} integer to $x$ that has the same parity as $n$, and $\harmo:=\harmo_{\alpha+\beta}(n-1)=\sum_{i=1}^{n-1}\frac{1}{i+\alpha+\beta-1}$.

\begin{theorem}
\label{thm:spc}
For all $\alpha\geq\beta\geq 1$ \footnote{Note that assuming $\alpha\geq\beta$ does not lose generality since we can exchange Alice and Bob.}, $n>1$, 
\begin{itemize}
    \item \textbf{Symmetric $\alpha=\beta$} 
    \[
    x^*(\alpha,\alpha,n)=\round(\frac{n-1}{2\alpha\harmo-\frac{n-1}{n+2\alpha-1}})
    \]
    \begin{itemize} 
        \item \textbf{Uniform $\alpha=\beta=1$}
        \[
        x^*(1,1,n)=\round(\frac{n-1}{2\harmo-\frac{n-1}{n+1}})
        \]
    \end{itemize}
    \item \textbf{Certain $\alpha=\lambda p,\beta = \lambda(1-p), \lambda\rightarrow \infty$}
        Let $F(x):=(2np-n-(x-1))p^{x-1}+(n-2np-(x-1))(1-p)^{x-1}$, $x\in [1,n-1]$, $F(x)=0$ has a trivial solution at $x=1$ and a unique non-trivial solution $\Tilde{x}$ when $p>\frac12$ and $n>\frac{1}{(\frac{1}{2}-p)\ln(\frac{1-p}{p})}$,
        \[
        x^*(\alpha,\beta,n) = \begin{cases}
        \round(\Tilde{x}) & \text{if $p>\frac12$ and $n>\frac{1}{(\frac{1}{2}-p)\ln(\frac{1-p}{p})}$}\\
        \round(1) & \text{otherwise}\\
        \end{cases}
        \]
        
        Moreover, if
        $
        p>\frac{1}{1+(a+1)^{-\frac{1}{a}}} \text{ where } a=2np-n-2>0 \footnote{ $\frac{1}{1+(a+1)^{-\frac{1}{a}}}<\frac{1}{1+e^{-1}}$ and when $a\rightarrow+\infty$, $\frac{1}{1+(a+1)^{-\frac{1}{a}}}\rightarrow\frac{1}{2}$}
        $, 
        then  \[x^*(\alpha,\beta,n)\in[\round(2np-n)-2,\round(2np-n)+2],\] that is, the optimal bonus is around the ``expected lead''. 
    \item \textbf{General} There exists an $O(n)$ algorithm to compute the optimal bonus $x^*(\alpha,\beta,n)$.
\end{itemize}
\end{theorem}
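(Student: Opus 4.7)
The plan is to apply the Main Technical Lemma (Lemma~\ref{lem:ratio}) to collapse the objective to
\[\E[\Delta_\belset](x) = (n+\alpha+\beta-2)\harmo \cdot \E[\Delta_\belset^{n-1}](x) + \E[\Delta_\belset^{n}](x),\]
and then write the two remaining terms explicitly via the decompositions~\eqref{eq:lastround} and~\eqref{eq:2tolast}. By Claim~\ref{cla:roundsurp}, every state-conditional surprise has the form $d\cdot 2q(1-q)$ where $q=(\alpha+j)/(\alpha+\beta+i-1)$ and the various $d$-increments are rational functions of $\alpha,\beta,j$ via the beta-binomial posterior $\betadis(\alpha+j,\beta+i-j)$. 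The state probabilities $\Pr[S_i=j]$ are beta-binomial weights, so both round contributions are fully explicit rational expressions in $x$.

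Because $x$ must have the same parity as $n$, I would analyse the discrete difference $\Phi(x):=\E[\Delta_\belset](x+2)-\E[\Delta_\belset](x)$. Bumping $x$ by two shifts $L_{n-1},L_{n-2}$ down by one and $U_{n-1},U_{n-2}$ up by one, adding new summands at the boundaries of~\eqref{eq:lastround} and~\eqref{eq:2tolast} while reclassifying the previous boundary states of~\eqref{eq:2tolast} from endpoint to interior. The goal is to show that $\Phi(x)$ is monotonically decreasing so that $\E[\Delta_\belset](x)$ is discretely unimodal on $\optmset$; then the optimal $x^*$ is the unique parity-respecting integer closest to the zero of the real extension of $\Phi$, which is exactly what the $\round(\cdot)$ operator encodes.

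For the symmetric case $\alpha=\beta$ (with uniform $\alpha=\beta=1$ as an immediate specialization), symmetry of the beta-binomial about $(n-1)/2$ and the equality of $q(1-q)$ at symmetric states cause massive cancellation, collapsing $\Phi(x)=0$ to a linear equation whose root is $(n-1)/(2\alpha\harmo-(n-1)/(n+2\alpha-1))$. For the certain case, $p$ is deterministic so posteriors do not update, $q\equiv p$, and $\Pr[S_i=j]$ is plain binomial; after simplification $\Phi(x)=0$ becomes precisely $F(x)=0$ as stated. A short calculus argument on the function $F$ (examining $F(1)$, $F(n-1)$, and the sign of $F'$) isolates $x=1$ as a trivial root and shows a second non-trivial root exists exactly when $p>1/2$ and $n>1/((\tfrac12-p)\ln((1-p)/p))$. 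The ``expected lead'' corollary then follows by evaluating $F$ at $x=2np-n\pm 2$ and using the hypothesis $p>1/(1+(a+1)^{-1/a})$ to pin down the location of the sign change to within the stated window.

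For the general Beta prior, each $x\in\optmset$ can be processed in $O(1)$ amortized time: the two newly active boundary terms and the reclassified interior terms in $\Phi$ involve beta-binomial weights and posterior $q$-values at the shifting boundaries, each of which updates from the previous $x$ by a single multiplicative ratio coming from Claim~\ref{cla:betaproperty}. Scanning $x$ upward from $1$ (or $0$) and stopping at the first sign change of $\Phi$ yields the $O(n)$ algorithm. I expect the hardest step to be the unimodality of $\Phi$ in the general Beta setting: the interplay between the endpoint-to-interior reclassification of~\eqref{eq:2tolast} and the $x$-dependent beta-binomial weights makes the sign of $\Phi(x+2)-\Phi(x)$ non-obvious, and verifying discrete concavity will likely require a careful telescoping of consecutive posterior ratios, in the same spirit as the identity driving Lemma~\ref{lem:ratio}.
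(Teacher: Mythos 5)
Your outline follows the paper's proof strategy very closely: collapse via the Main Technical Lemma to the last two rounds, write both rounds' contributions using Claim~\ref{cla:roundsurp} and the beta-binomial state probabilities, then study the finite difference under a step of $2$ in $x$ (your $\Phi(x)=\E[\Delta_\belset(x+2)]-\E[\Delta_\belset(x)]$ is the paper's $\E[\Delta_\belset(x+1)]-\E[\Delta_\belset(x-1)]$ after the obvious re-indexing). The symmetric-case collapse to a linear equation, the reduction to $F(x)=0$ in the certain case, and the subsequent analysis of $F$'s roots are all executed in the paper exactly as you describe. A minor slip: the posterior mean should be $q^i_j=\frac{j+\alpha}{i+\alpha+\beta}$, not $\frac{j+\alpha}{i+\alpha+\beta-1}$.

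The one genuine gap is in the general Beta case. You propose to obtain the $O(n)$ algorithm by ``scanning $x$ upward \ldots and stopping at the first sign change of $\Phi$,'' and you correctly flag that this would require discrete unimodality of $\E[\Delta_\belset(x)]$, which you yourself say is ``non-obvious.'' The paper does not prove unimodality for general $\alpha,\beta$; instead, its Claim~\ref{cla:roundopt} (local max $\Rightarrow$ global max) is invoked only in the symmetric and certain cases, where the sign pattern of the difference is established analytically. For the general case the paper's Algorithm~\ref{algo:optimal_x} sidesteps unimodality entirely: it preprocesses Gamma-ratio tables in $O(n)$, then uses formula~\eqref{eq:gendif} to update a running prefix sum $\text{surp\_sum}\leftarrow\text{surp\_sum}+\E[\Delta_\belset(i+1)]-\E[\Delta_\belset(i-1)]$ for every admissible $x$ and simply tracks the argmax of that sum. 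This gives $O(n)$ time with no concavity claim at all, so the ``hardest step'' you anticipated is avoided rather than solved. If you want your version to be airtight, replace ``stop at the first sign change'' by ``scan all $x\in\optmset$, maintain the running difference-sum, and return the maximizer,'' which is both simpler and what the paper actually does.
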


We then present corresponding numerical results here. Based on Theorem~\ref{thm:spc}, we draw the contours of $\Tilde{x}$ for varies of cases. The optimal $x^*$ is $\round(\Tilde{x})$. Though $n$ can only be positive integers, we also smooth the contours for other $n$. In the symmetric case, the optimal bonus size increases as the number of rounds $n$ increases and the amount of uncertainty $\frac{1}{2\alpha}$ decreases. In the certain case, as we predicted in theory, the optimal bonus size is close to ``expected lead''. In the general case, as $n$ gets larger, the result becomes closer to the asymptotic case (see Figure~\ref{fig:infinite}). 

Finally, we provide additional numerical results illustrating how the overall surprise depends on the bonus size in Figure~\ref{fig:numericalv}. For different settings, for all $x$, we directly compute $\E[\Delta_\belset(x)]$ by using backward induction to compute all belief curves. We also annotate our theoretical optimal bonus $x^*=\round(\Tilde{x})$ based on Theorem~\ref{thm:spc}. The overall surprise varies with the bonus size and in some cases (e.g. certain, $n=20$, $p=0.7$), the optimal bonus creates surprise that doubles the amount of surprise created by the trivial settings ($x=\round(0)$ or $x=\round(n)$). Moreover, the optimal bonus depends on the properties of the setting: number of rounds and uniform, symmetric, or skewed. Additionally, in the figures, we see that the curves all have a single peak, so the local and global optima coincide.

\begin{figure}[!ht]\centering
  \includegraphics[width=.48\linewidth]{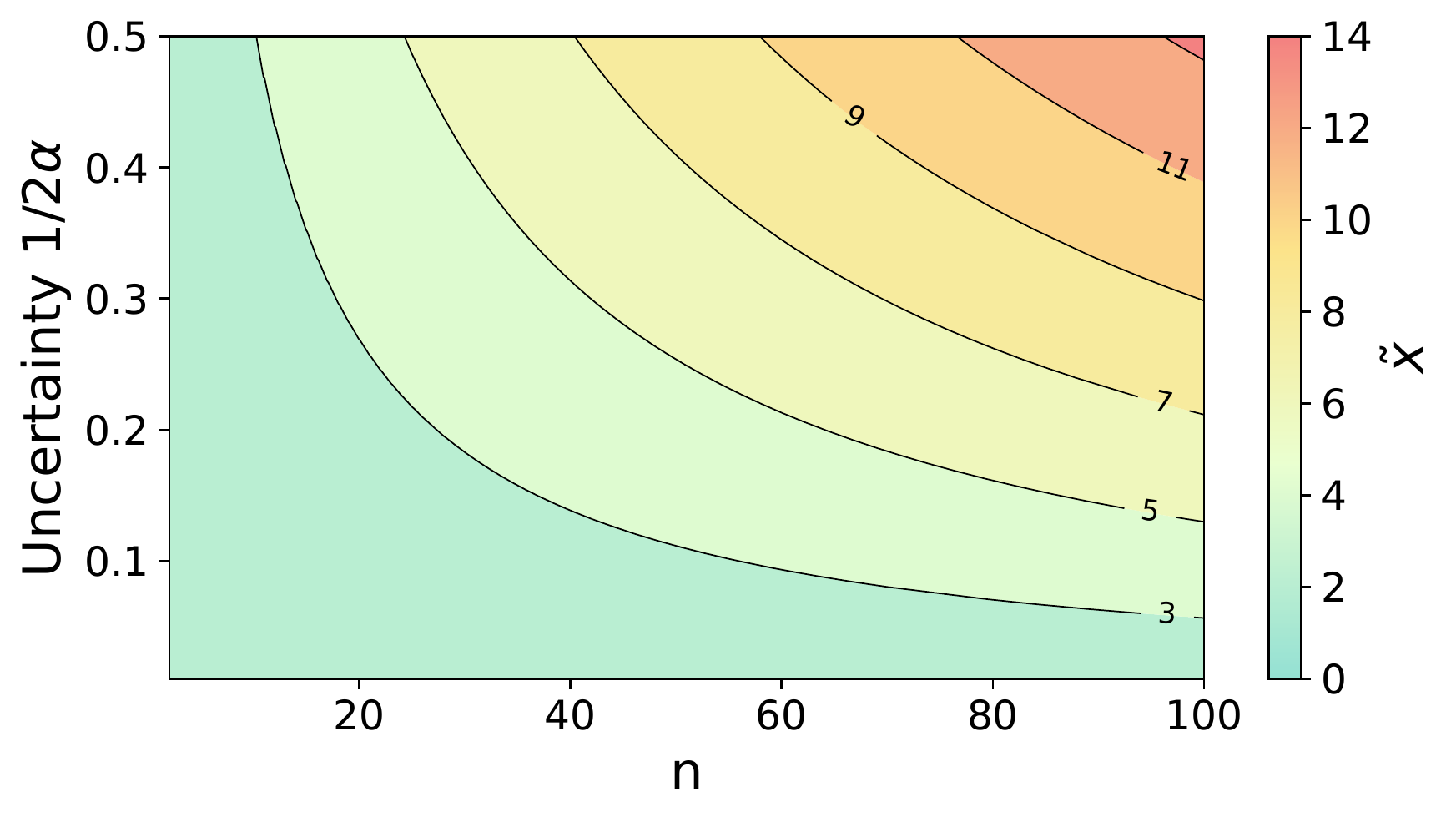}
  \caption{\textbf{Symmetric case}: Optimal bonus $x^*=\round(\Tilde{x})$}
  \label{fig:symmetric}
\end{figure}

\begin{figure}[!ht]\centering
  \subfigure[Optimal bonus $x^*=\round(\Tilde{x})$] {\includegraphics[width=.48\linewidth]{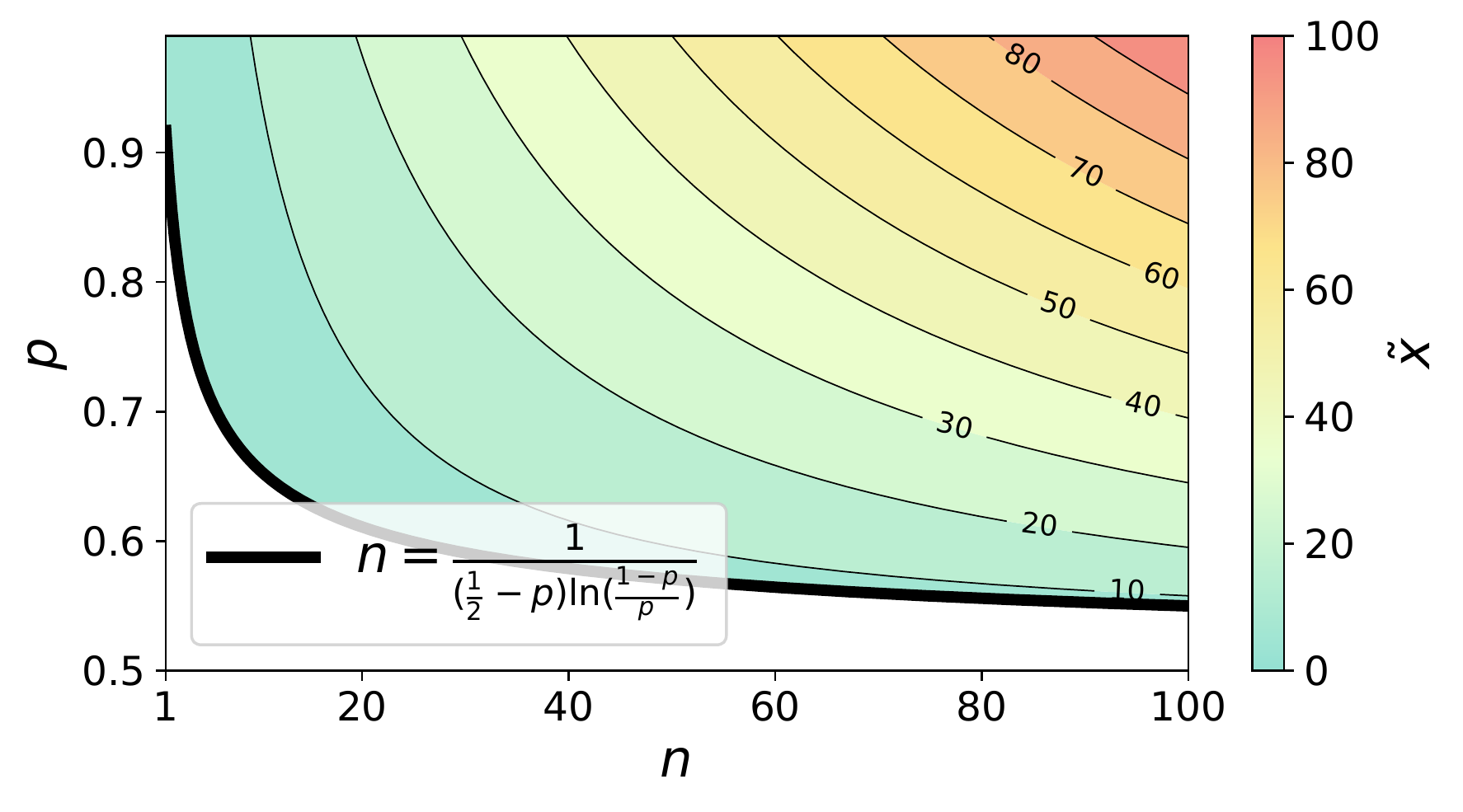}\label{fig:certain_opt}}
  \subfigure[$2np-n$]{\includegraphics[width=.48\linewidth]{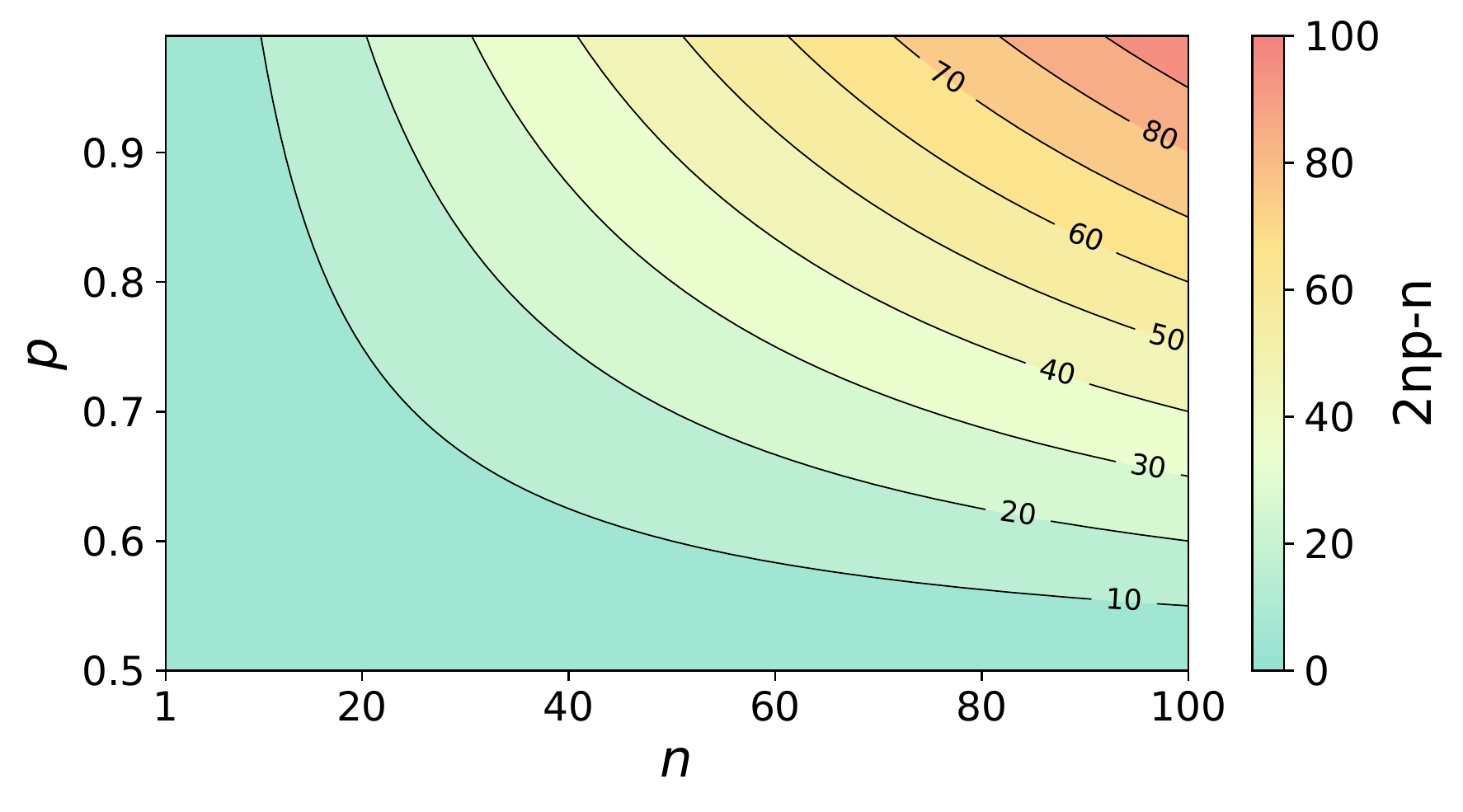}\label{fig:certain_2np}}
  \caption{\textbf{Certain case}}
  \label{fig:certain}
\end{figure}

\begin{figure}[!ht]\centering
  \subfigure[$n=5$]{\includegraphics[width=.48\linewidth]{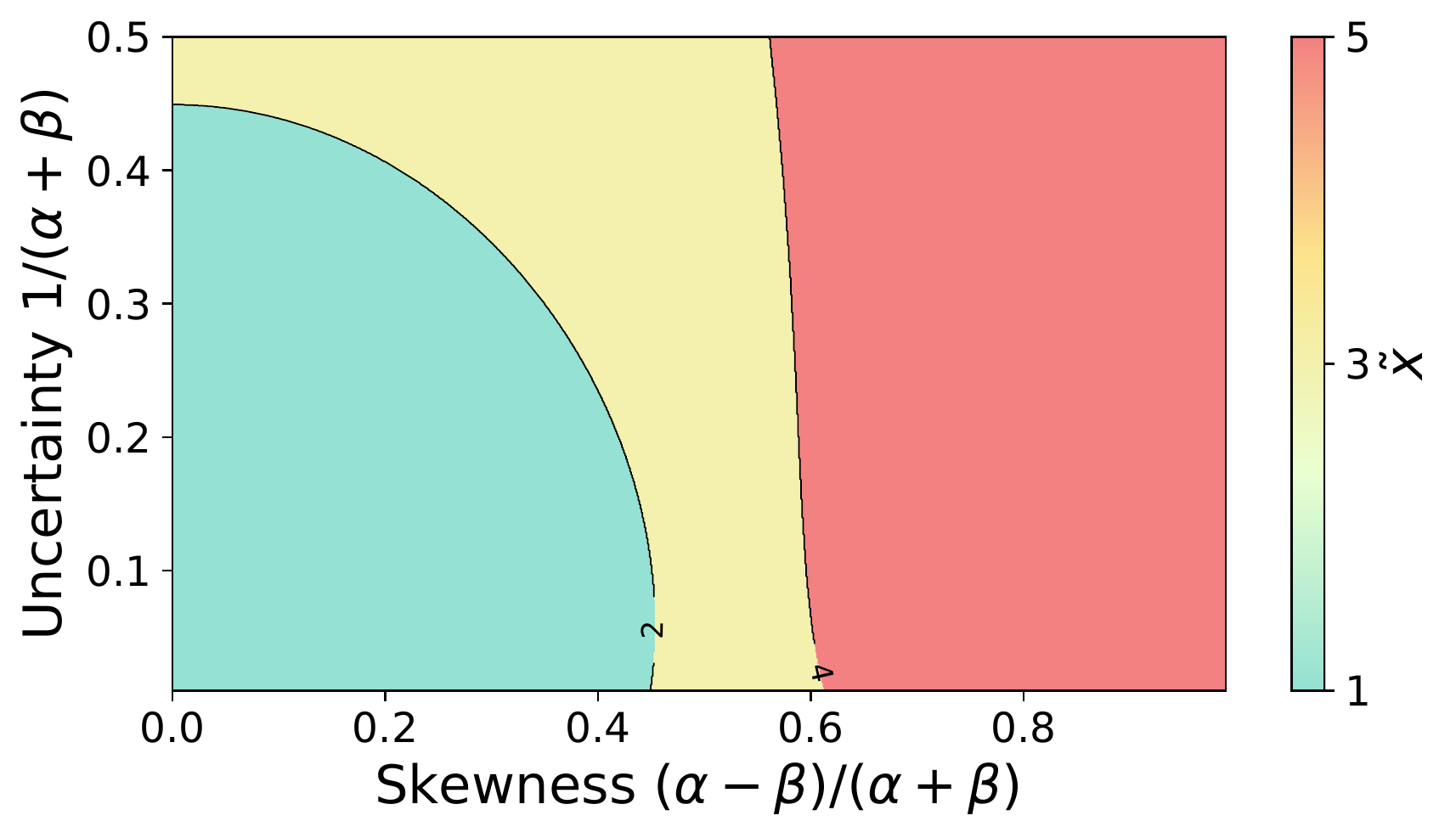}\label{fig:finite_5}}
  \subfigure[$n=10$]{\includegraphics[width=.48\linewidth]{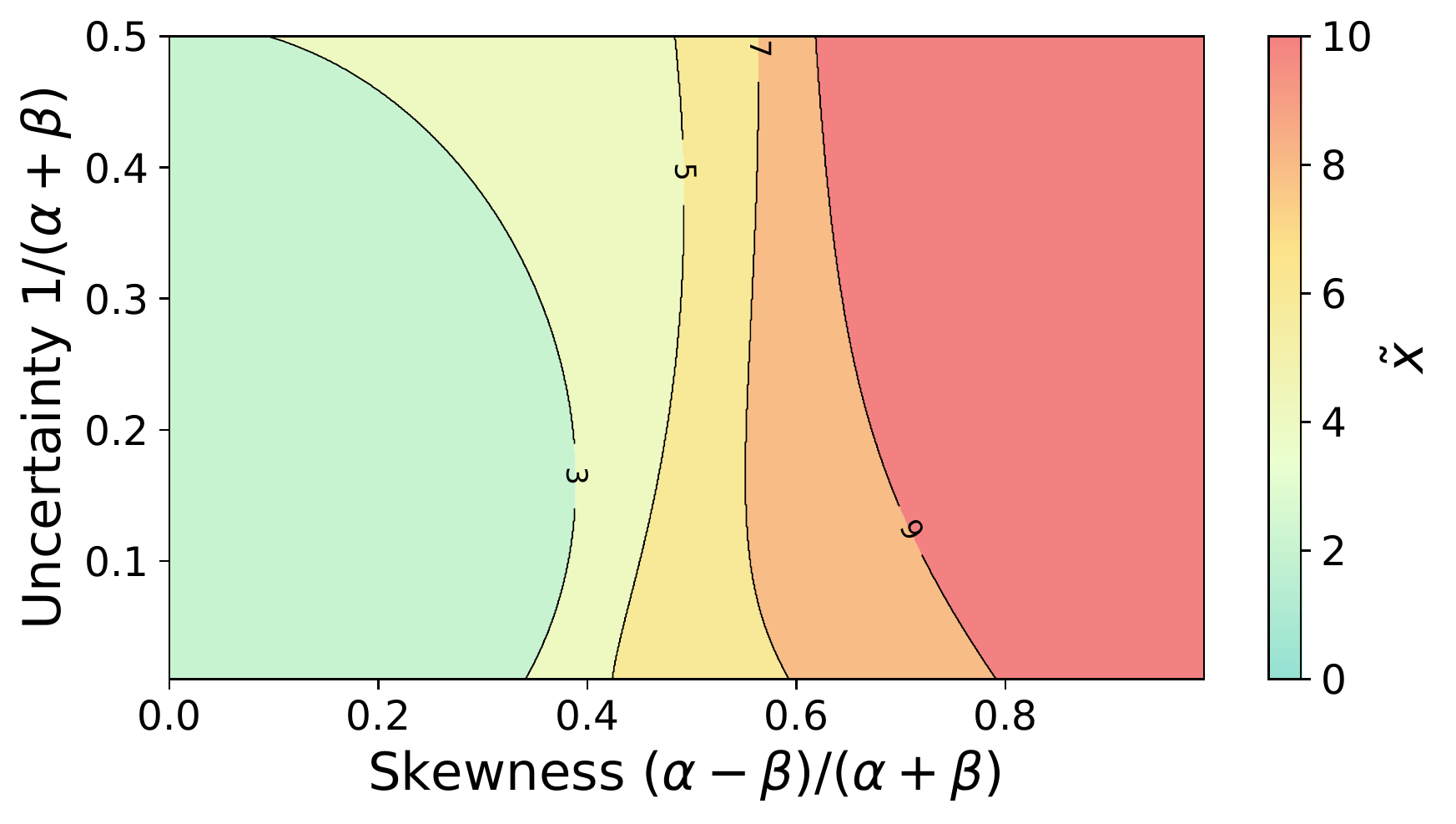}\label{fig:finite_10}}
  \subfigure[$n=20$]{\includegraphics[width=.48\linewidth]{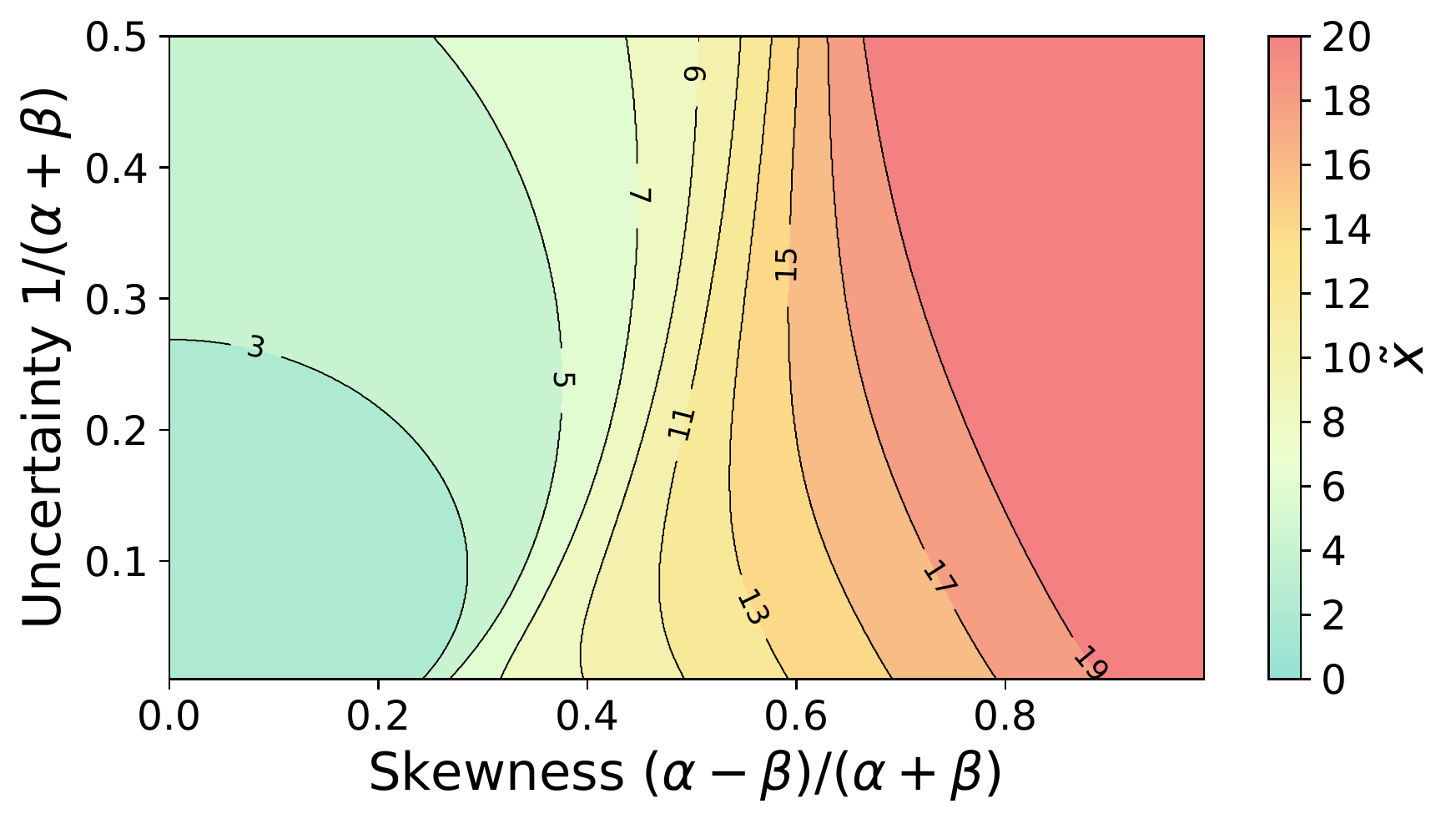}\label{fig:finite_20}}
  \subfigure[$n=40$]{\includegraphics[width=.48\linewidth]{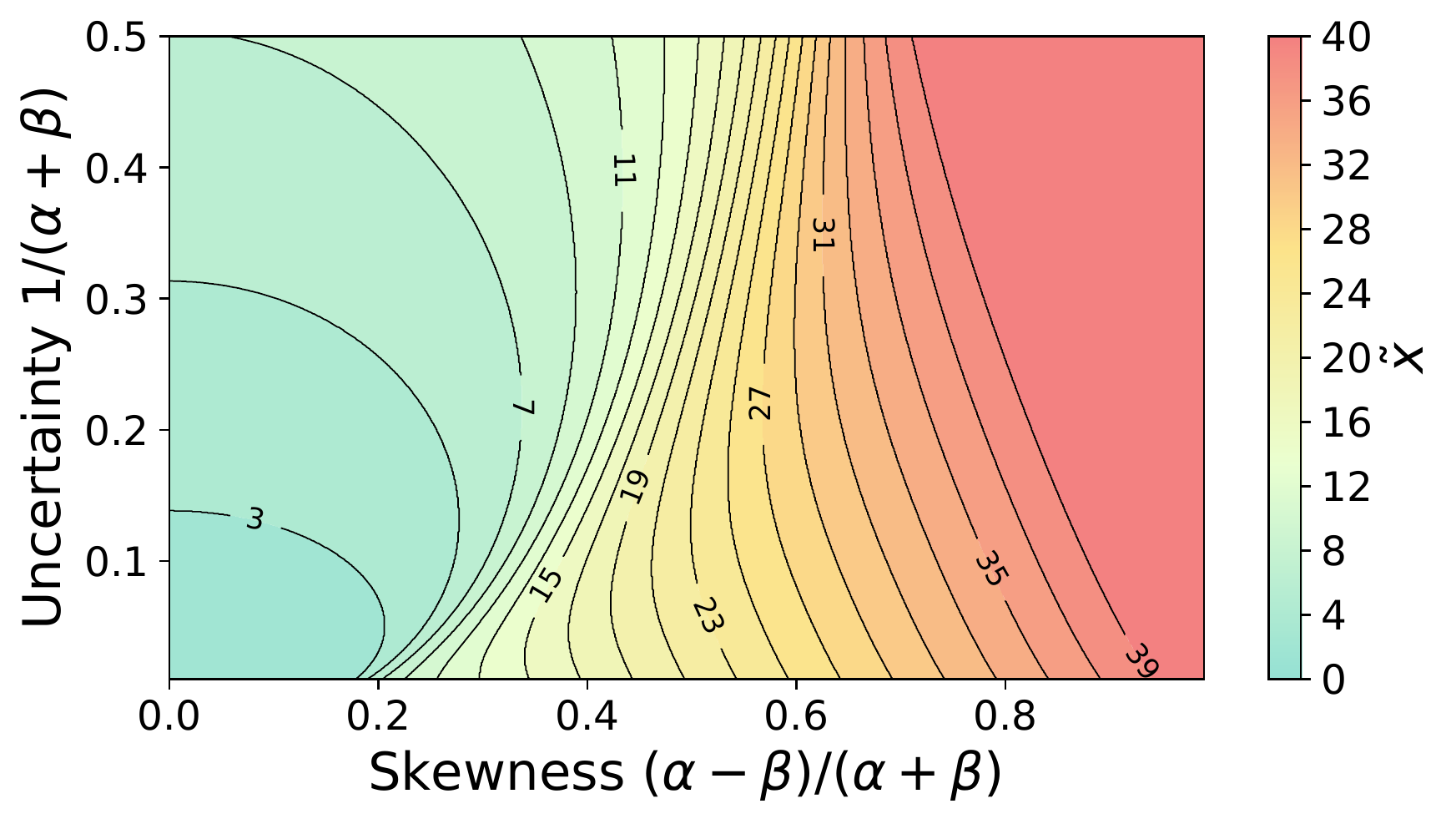}\label{fig:finite_40}}
  \caption{\textbf{General case}: Optimal bonus $x^*=\round(\Tilde{x})$. Here each area between two contour lines has the same optimal bonus $x^*$. For example, in $n=5$, the red area's optimal bonus size is $5$, yellow is $3$, cyan is $1$.}
  \label{fig:contourf_finite}
\end{figure}

\begin{figure}[!ht]\centering
  \subfigure[Uniform,n=10]{\includegraphics[width=.32\linewidth]{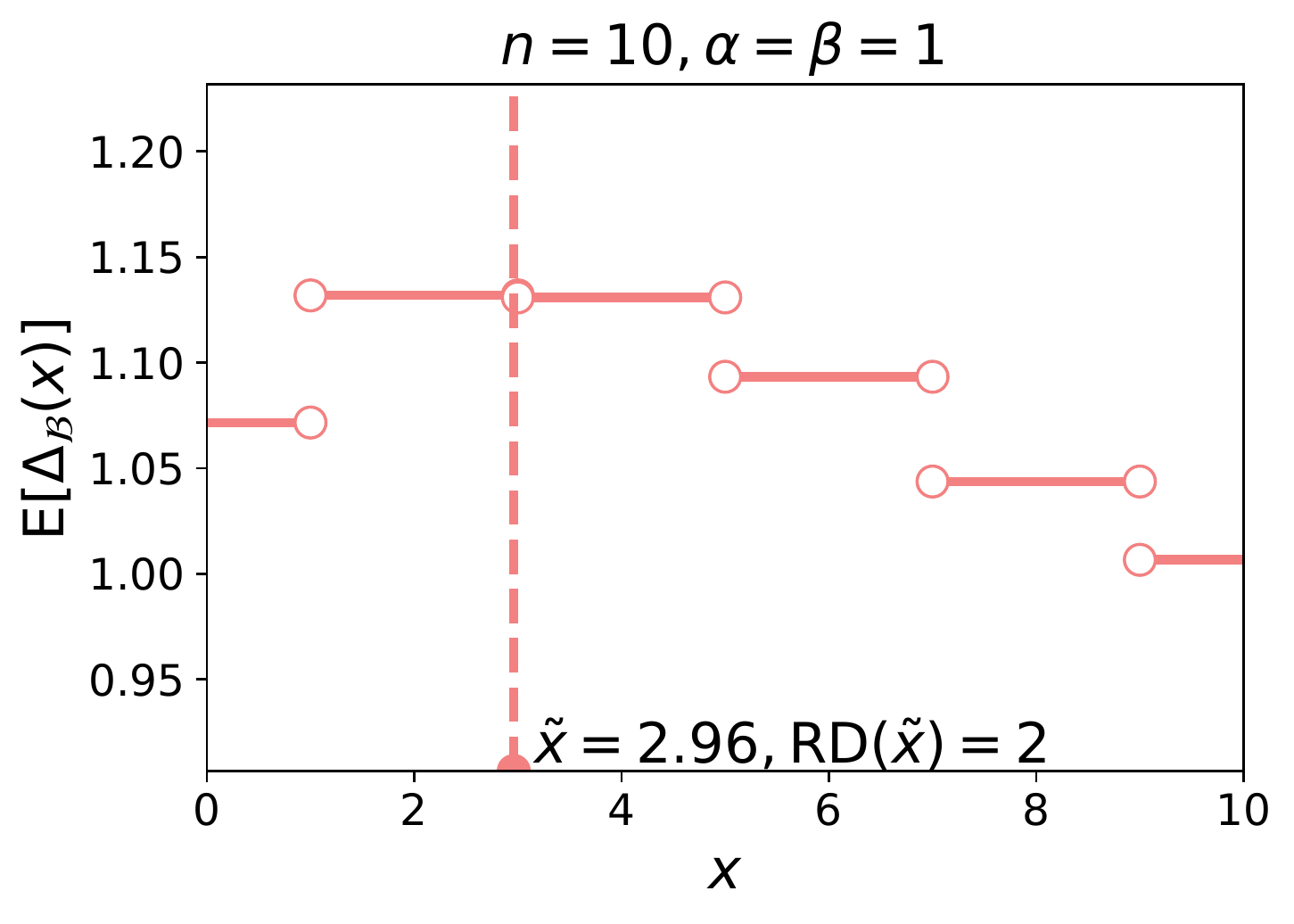}\label{fig:curve_uniform_10}}
  \subfigure[Symmetric,n=10]{\includegraphics[width=.32\linewidth]{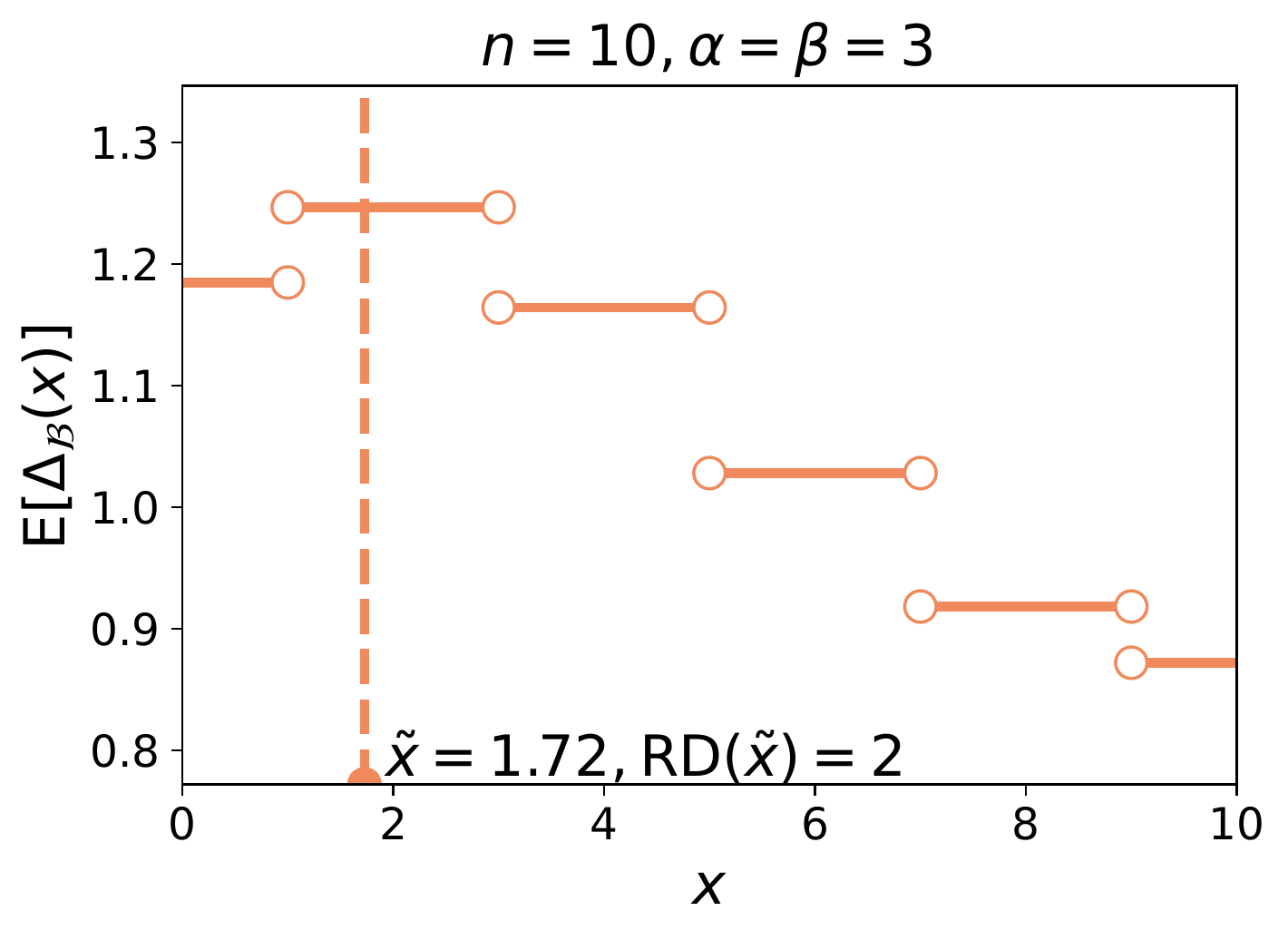}\label{fig:curve_symmetric_10}}
  \subfigure[Certain,n=10]{\includegraphics[width=.32\linewidth]{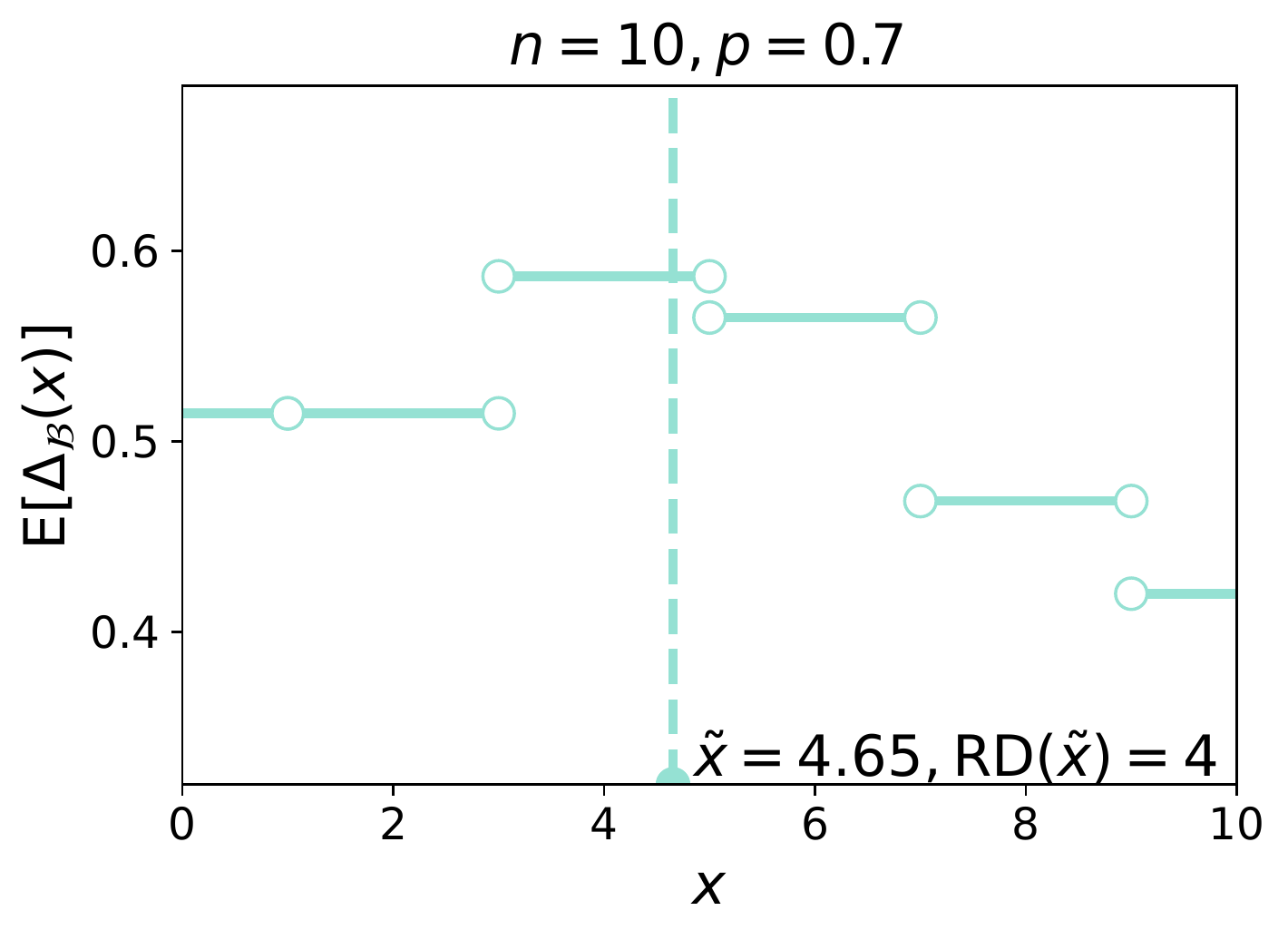}\label{fig:curve_certain_10}}
  \subfigure[Uniform,n=15]{\includegraphics[width=.32\linewidth]{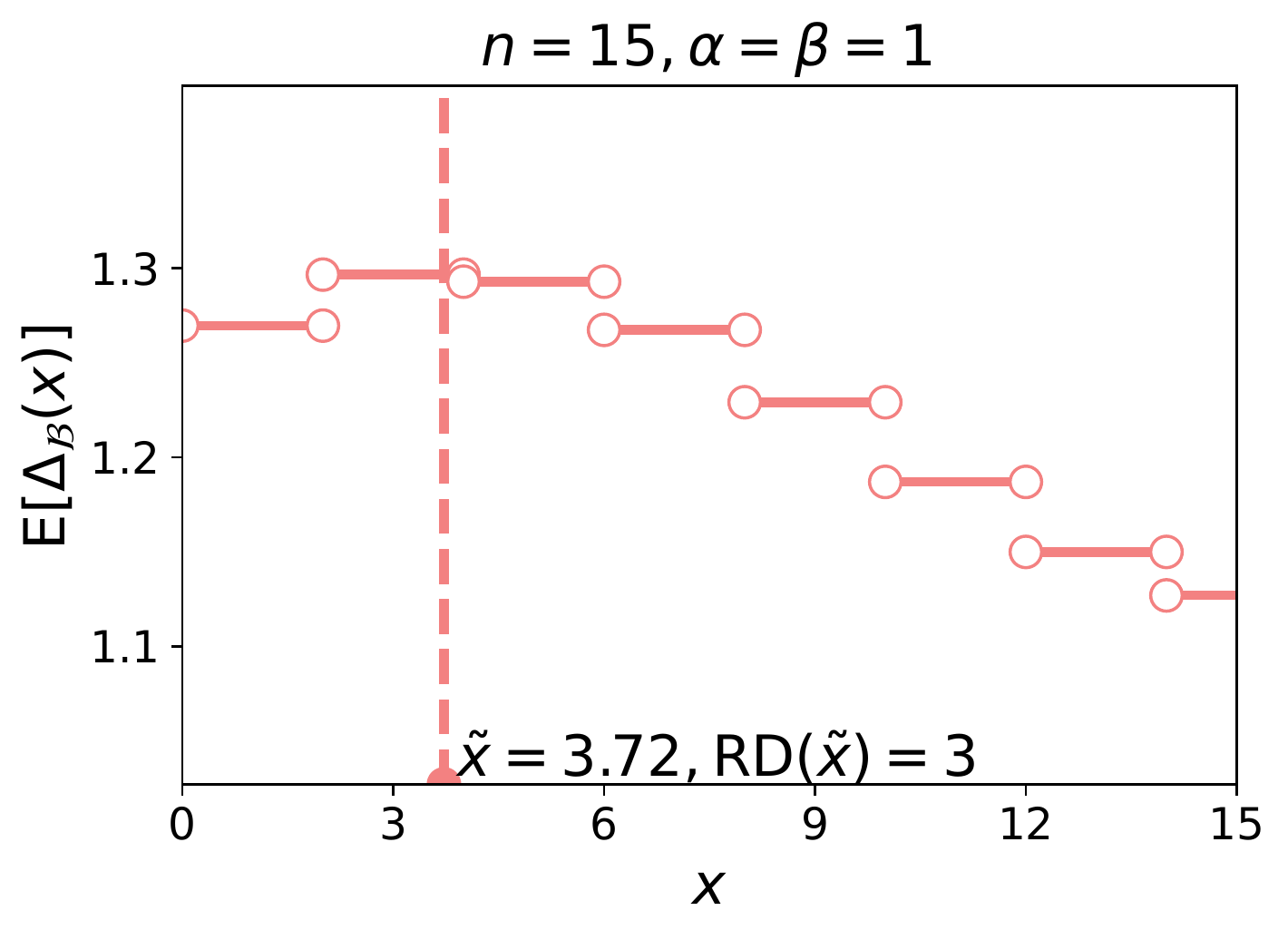}\label{fig:curve_uniform_15}}
  \subfigure[Symmetric,n=15]{\includegraphics[width=.32\linewidth]{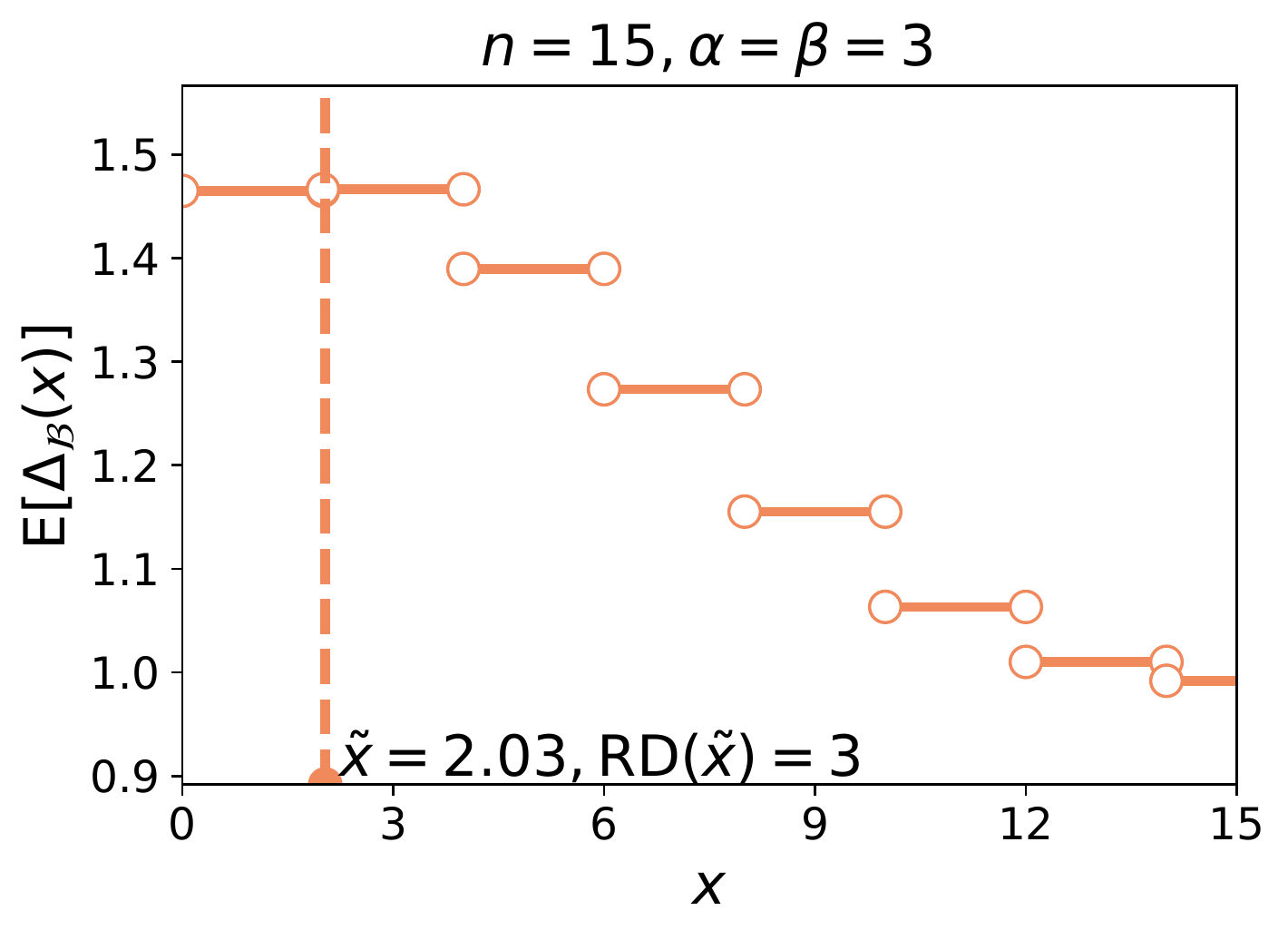}\label{fig:curve_symmetric_15}}
  \subfigure[Certain,n=15]{\includegraphics[width=.32\linewidth]{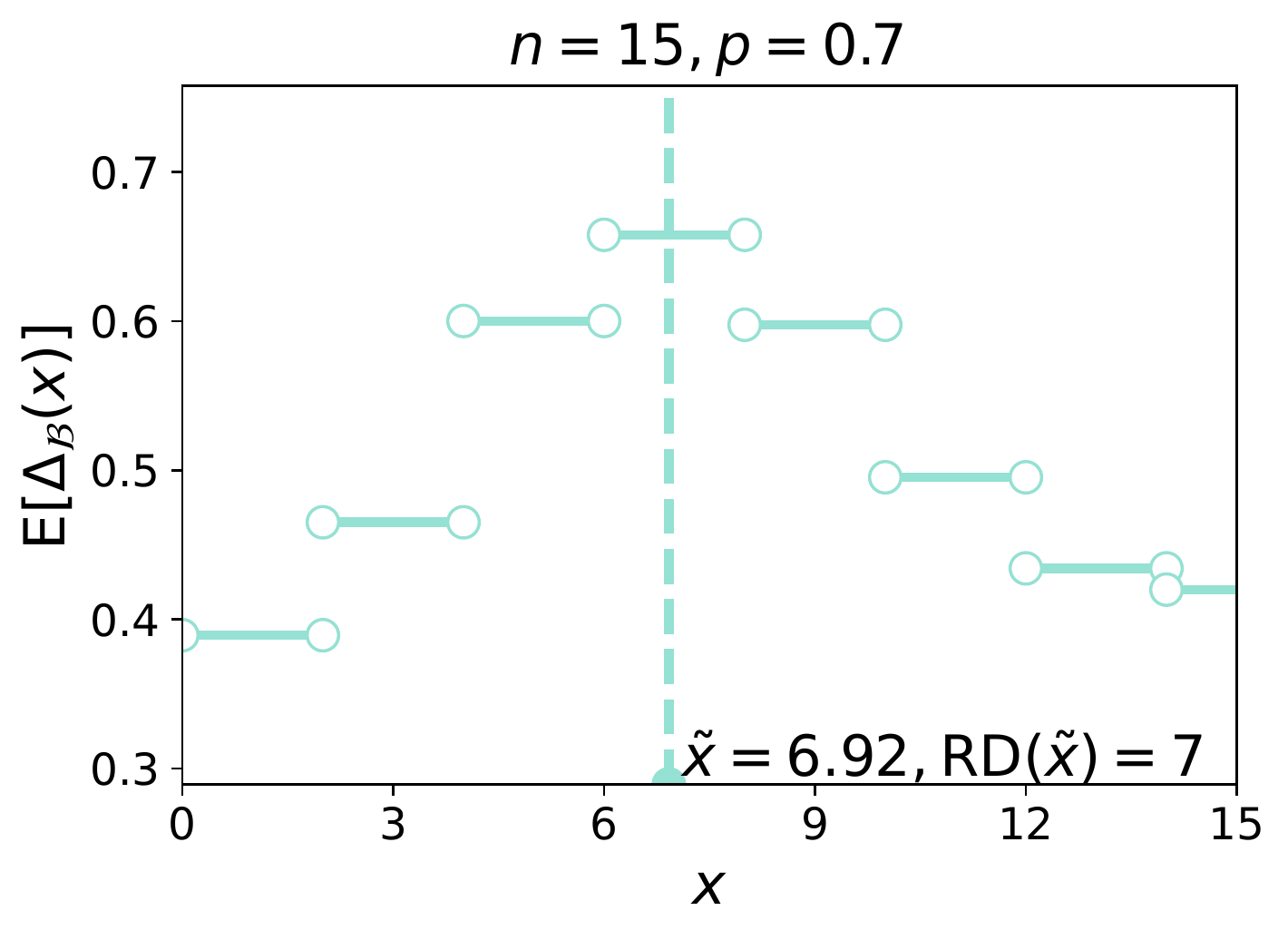}\label{fig:curve_certain_15}}
  \subfigure[Uniform,n=20]{\includegraphics[width=.32\linewidth]{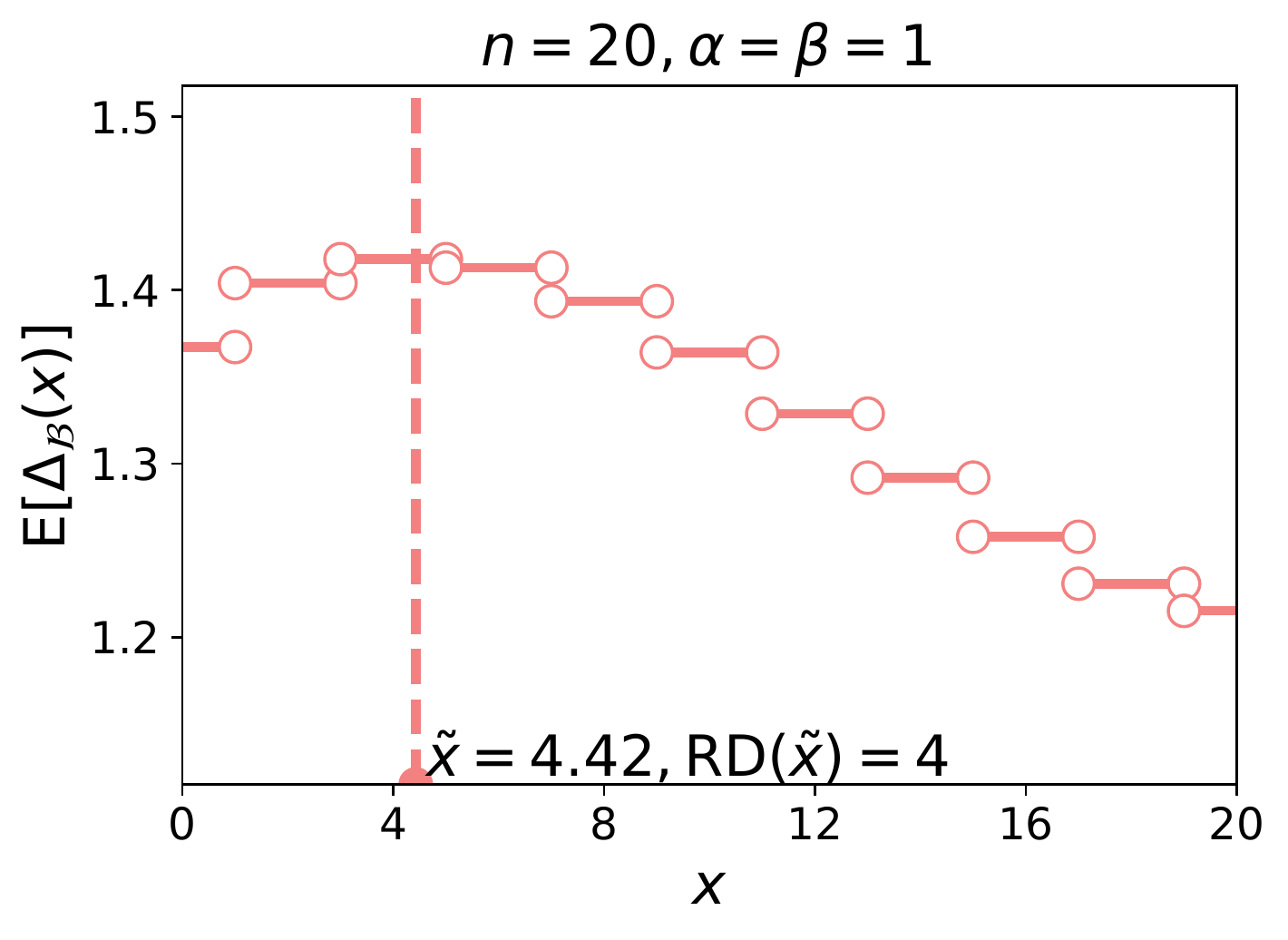}\label{fig:curve_uniform_20}}
  \subfigure[Symmetric,n=20]{\includegraphics[width=.32\linewidth]{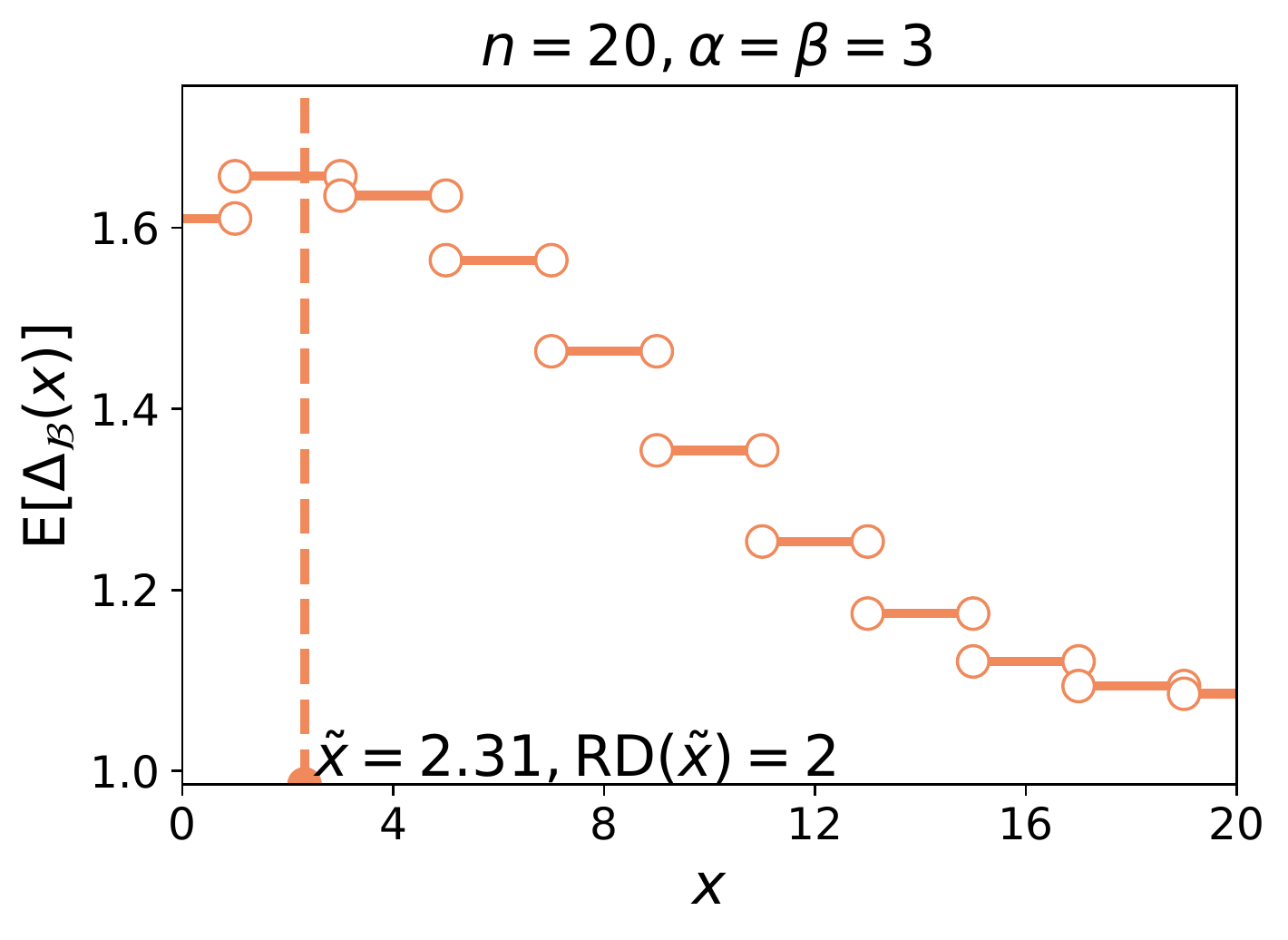}\label{fig:curve_symmetric_20}}
  \subfigure[Certain,n=20]{\includegraphics[width=.32\linewidth]{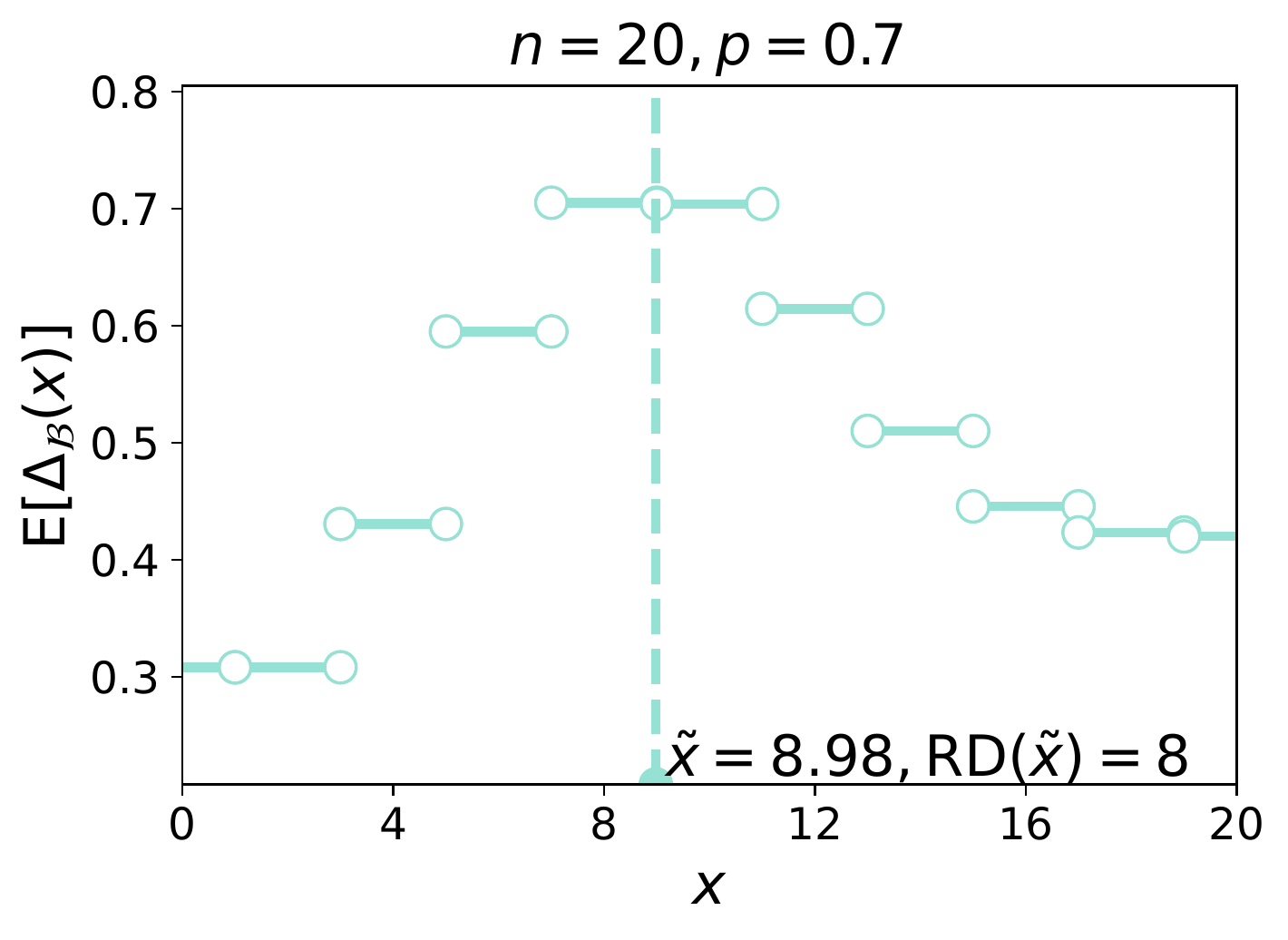}\label{fig:curve_certain_20}}
  \caption{\textbf{Relation between bonus size and overall surprise}}
  \label{fig:numericalv}
\end{figure}

\subsection{General Analysis}\label{sec:analyzegeneral}

In this subsection, we derive a formula which can be applied to all settings in the later sections. 
\begin{align}
\E[\Delta_\belset(x)]   =& \E[\Delta_\belset^{n-1}(x)]*(n+\alpha+\beta-2)*\harmo+\E[\Delta_\belset^n(x)]\tag{Lemma~\ref{lem:ratio}}\\ 
                          =& \bigg(\overbrace{\Pr[S_{n-2}=L_{n-2}]*\Delta^{n-1}_{S_{n-2}=L_{n-2}}}^\mytexta+\overbrace{\Pr[S_{n-2}=U_{n-2}]*\Delta^{n-1}_{S_{n-2}=U_{n-2}}}^\mytextb\label{eq:mainali} \notag\\
                           &+\underbrace{\sum_{j=L_{n-2}+1}^{U_{n-2}-1}\Pr[S_{n-2}=j]*\Delta^{n-1}_{S_{n-2}=j}}_\mytextm\bigg)*(n+\alpha+\beta-2)*\harmo\notag\\
                           &+\underbrace{\sum_{j=L_{n-1}}^{U_{n-1}}\Pr[S_{n-1}=j]*\Delta_{S_{n-1}=j}^n}_\mytextn \tag{recall \eqref{eq:lastround} and \eqref{eq:2tolast} in method overview}
\end{align}

\begin{figure}[!ht]\centering
  \includegraphics[width=.55\linewidth]{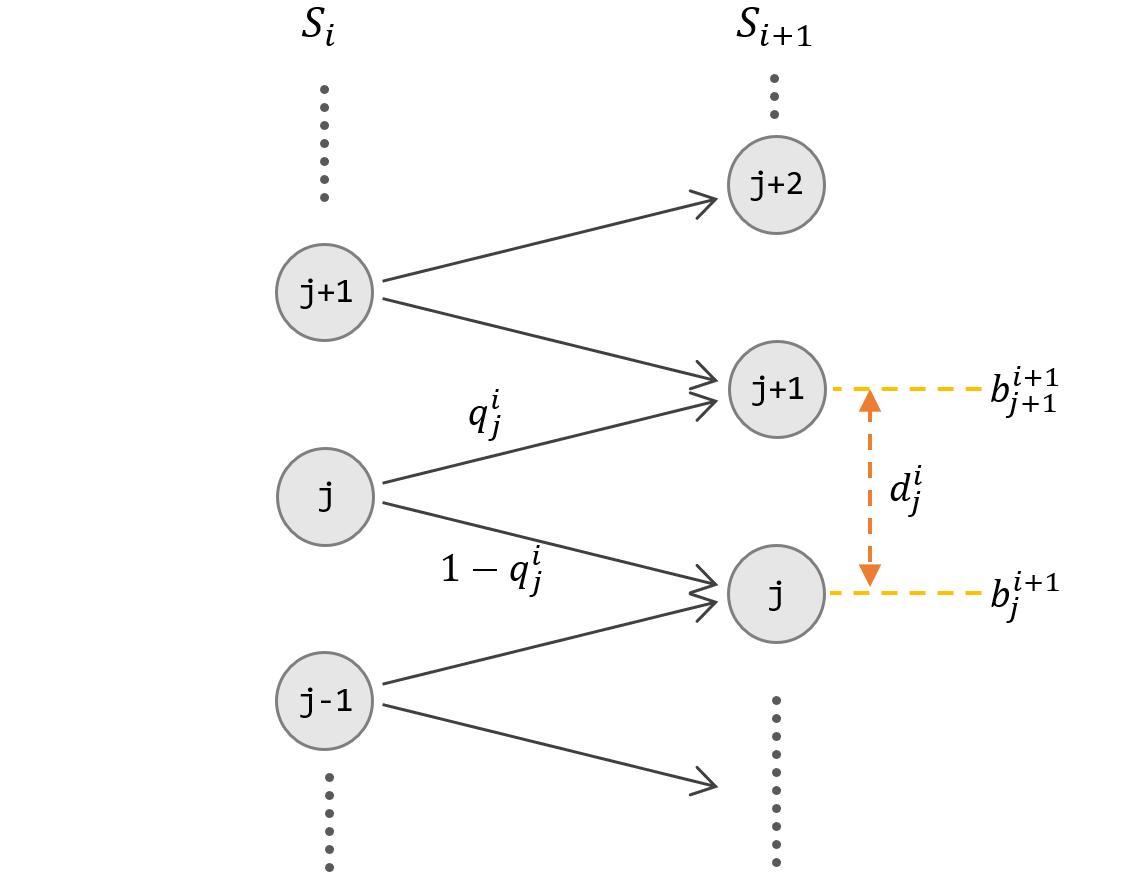}
  \label{fig:shorthand}
  \caption{\textbf{Illustration for the shorthand}}
\end{figure}

Here are shorthand notations:

\[\begin{cases}
q^i_j := \E[p|(\state_i=j)], &0\leq i\leq n-1\\
b^i_j := \Pr[O=1|(\state_i=j)], &0\leq i\leq n-1\\
d^i_j := b^{i+1}_{j+1} - b^{i+1}_{j}, &0\leq i\leq n-2\\
\end{cases}\]

The above definition for $d^i_j$ is only for $0\leq i\leq n-2$ since $d^{n-1}_j$ involves $b^n_j$ but the definition for $b^i_j $ is for $0\leq i\leq n-1$. The final round's belief value is either 0 or 1 and depends on both the number of rounds Alice wins among the first $n-1$ rounds ($S_{n-1}$) and whether Alice wins the final round ($\his_n=+$ or $\his_n=-$). Thus, we define the belief change in the final round directly as follows. \[
d^{n-1}_j := \Pr[O=1|(\his_n=+)\wedge (S_{n-1}=j)]-\Pr[O=1|(\his_n=-)\wedge (S_{n-1}=j)]\\
\]

In fact, for the no-surprise red/blue points in the final round (Figure~\ref{fig:last2example}), the belief change is 0, for other grey points, the belief change is 1. 

By substituting the above shorthand, we have 

\begin{align}
\E[\Delta_\belset(x)]   =& \bigg(\overbrace{\Pr[\state_{n-2}=L_{n-2}]*2 q^{n-2}_{L_{n-2}}*(1-q^{n-2}_{L_{n-2}})*d_{L_{n-2}}^{n-2}}^\mytexta\notag\\

&+\overbrace{\Pr[\state_{n-2}=U_{n-2}]*2q^{n-2}_{U_{n-2}}*(1-q^{n-2}_{U_{n-2}})*d_{U_{n-2}}^{n-2}}^\mytextb\notag\\

&+\overbrace{\sum_{j=L_{n-2}+1}^{U_{n-2}-1}\Pr[\state_{n-2}=j]*2q^{n-2}_{j}*(1-q^{n-2}_{j})*d_{j}^{n-2}}^\mytextm\bigg)*(n+\alpha+\beta-2)*\harmo\notag\\

&+\overbrace{\sum_{j=L_{n-1}}^{U_{n-1}}\Pr[\state_{n-1}=j]*2q^{n-1}_{j}*(1-q^{n-1}_{j})*d_{j}^{n-1}}^\mytextn \label{eq:general}
\end{align}

We further introduce a shorthand $Q^i_j :=\Pr[\state_i=j]*2q^i_j*(1-q^i_j), 0\leq i\leq n-1$ and by substituting this shorthand, we have 
\begin{align}
\E[\Delta_\belset(x)]   =& \bigg(\overbrace{Q_{L_{n-2}}^{n-2}*d_{L_{n-2}}^{n-2}}^\mytexta+\overbrace{Q_{U_{n-2}}^{n-2}*d_{U_{n-2}}^{n-2}}^\mytextb\label{eq:symmainali}\\                         &+\underbrace{\sum_{j=L_{n-2}+1}^{U_{n-2}-1}Q_{j}^{n-2}*d_{j}^{n-2}}_\mytextm\bigg)*(n+\alpha+\beta-2)*\harmo+\underbrace{\sum_{j=L_{n-1}}^{U_{n-1}}Q_{j}^{n-1}*d_{j}^{n-1}}_\mytextn \notag
\end{align}

As we mentioned in the overview, we pick the final round and the penultimate round to represent the overall expected surprise since the belief change, in these two rounds, has a simple representation, as stated in the following lemma.

\begin{lemma}
We have \[
q^{i}_j=\frac{j+\alpha}{i+\alpha+\beta}
\] 
\[
\Pr[\state_i=j]=\frac{(\alpha+\beta-1)\binom{i}{j}\binom{\alpha+\beta-2}{\alpha-1}}{(\alpha+\beta+i-1)\binom{i+\alpha+\beta-2}{j+\alpha-1}}\footnote{When $\alpha+\beta$ is not an integer, we use $\binom{n}{k}:=\frac{\Gamma(n+1)}{\Gamma(k+1)\Gamma(n-k+1)}$ as the continuous generalization.}
\] 
\[
Q^i_j=\frac{2(\alpha+\beta-1)\binom{i}{j}\binom{\alpha+\beta-2}{\alpha-1}}{(\alpha+\beta+i)\binom{i+\alpha+\beta}{j+\alpha}}
\]
and
\[
d^{n-2}_{j}=
\begin{cases}
0, & j<L_{n-2}\\
\frac{L_{n-1}+\alpha}{n+\alpha+\beta-1}, & j=L_{n-2}\\
\frac{1}{n+\alpha+\beta-1}, & L_{n-2}<j<U_{n-2}\\
\frac{n-1-U_{n-1}+\beta}{n+\alpha+\beta-1}, & j=U_{n-2}\\
0, & j>U_{n-2}
\end{cases}
\qquad
d^{n-1}_{j}=
\begin{cases}
0, & j<L_{n-1}\\
1, & L_{n-1}\leq j\leq U_{n-1}\\
0, & j>U_{n-1}
\end{cases}
\]
\label{lem:generald}
\end{lemma}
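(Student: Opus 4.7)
The proof splits into four direct computations, all driven by the Beta-posterior property of Claim~\ref{cla:beta} together with a boundary case analysis for the two $d$-sequences. The plan is to derive $q^i_j$ and $\Pr[\state_i=j]$ from the Beta-Binomial family, deduce $Q^i_j$ by substitution, and read off $d^{n-1}_j$ and $d^{n-2}_j$ from the match's win condition.

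First I would compute $q^i_j$: by Corollary~\ref{cor:beta} and Claim~\ref{cla:beta}, the posterior $p\mid\state_i=j$ follows $\betadis(\alpha+j,\beta+i-j)$, whose mean is $(j+\alpha)/(i+\alpha+\beta)$. Next, for $\Pr[\state_i=j]$ I would write the Beta-Binomial marginal
\[
\Pr[\state_i=j]=\binom{i}{j}\int_0^1\frac{p^{j+\alpha-1}(1-p)^{i-j+\beta-1}}{B(\alpha,\beta)}\,dp=\binom{i}{j}\,\frac{B(\alpha+j,\beta+i-j)}{B(\alpha,\beta)},
\]
then expand each Beta via $B(a,b)=\Gamma(a)\Gamma(b)/\Gamma(a+b)$ and repackage the Gamma ratios using the continuous convention $\binom{n}{k}=\Gamma(n+1)/(\Gamma(k+1)\Gamma(n-k+1))$. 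Pulling the factor $(\alpha+\beta-1)/(\alpha+\beta+i-1)$ out of $\Gamma(\alpha+\beta)/\Gamma(\alpha+\beta+i)$ converts the remaining Gamma ratio into $\binom{\alpha+\beta-2}{\alpha-1}/\binom{i+\alpha+\beta-2}{j+\alpha-1}$, yielding the claim. The formula for $Q^i_j=\Pr[\state_i=j]\cdot 2q^i_j(1-q^i_j)$ is then a substitution: the factor $(j+\alpha)(i+\beta-j)/(i+\alpha+\beta)^2$ is absorbed into the denominator binomial via the elementary identity $\binom{i+\alpha+\beta}{j+\alpha}=\binom{i+\alpha+\beta-2}{j+\alpha-1}\cdot\frac{(i+\alpha+\beta)(i+\alpha+\beta-1)}{(j+\alpha)(i+\beta-j)}$, and one $(i+\alpha+\beta-1)$ cancels against the denominator already present in $\Pr[\state_i=j]$, leaving $(\alpha+\beta+i)\binom{i+\alpha+\beta}{j+\alpha}$ as in the claim.

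For the $d$-sequences I would work directly from the win condition. Given $\state_{n-1}=j$ and $\his_n=+$, Alice ends with $j+x$ points versus Bob's $n-1-j$, so she wins iff $j\geq L_{n-1}=(n-x)/2$; given $\his_n=-$, she wins iff $j\geq U_{n-1}+1$. The parity assumption on $x$ makes both thresholds integers, so
\[
d^{n-1}_j=\mathbbm{1}[j\geq L_{n-1}]-\mathbbm{1}[j\geq U_{n-1}+1]=\mathbbm{1}[L_{n-1}\leq j\leq U_{n-1}],
\]
which matches the three-case formula. For $d^{n-2}_j$ I first assemble $b^{n-1}_j=\Pr[O=1\mid\state_{n-1}=j]$ by conditioning on $\his_n$ and using $\Pr[\his_n=+\mid\state_{n-1}=j]=q^{n-1}_j$, obtaining $b^{n-1}_j=q^{n-1}_j\,\mathbbm{1}[j\geq L_{n-1}]+(1-q^{n-1}_j)\,\mathbbm{1}[j\geq U_{n-1}+1]$. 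Since $L_{n-2}=L_{n-1}-1$ and $U_{n-2}=U_{n-1}$, I would then difference $b^{n-1}_{j+1}-b^{n-1}_j$ across the five regimes $j<L_{n-2}$, $j=L_{n-2}$, $L_{n-2}<j<U_{n-2}$, $j=U_{n-2}$, $j>U_{n-2}$; only the three middle regimes contribute, and they collapse via Step~1 to $(L_{n-1}+\alpha)/(n+\alpha+\beta-1)$, $1/(n+\alpha+\beta-1)$, and $(n-1-U_{n-1}+\beta)/(n+\alpha+\beta-1)$.

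The only place where care is required is the binomial-coefficient bookkeeping used to translate the Beta--Binomial marginal into the two-binomial form stated in the lemma; the $d$-sequence analysis is mechanical once the parity assumption on $x$ rules out half-integer thresholds at $L_{n-1}$ and $U_{n-1}$. I do not anticipate a genuine obstacle beyond this routine algebra.
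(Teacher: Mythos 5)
Your proposal is correct and differs from the paper in one meaningful place: how you establish the formula for $\Pr[\state_i=j]$. The paper proves it by induction on $i$, using the one-step transition $\Pr[\state_i=j]=\Pr[\state_{i-1}=j-1]\cdot\frac{j+\alpha-1}{i+\alpha+\beta-1}+\Pr[\state_{i-1}=j]\cdot\frac{i-j+\beta-1}{i+\alpha+\beta-1}$ (with boundary cases) and then verifying the closed form survives the recurrence. You instead write down the Beta--Binomial marginal
\[
\Pr[\state_i=j]=\binom{i}{j}\frac{B(\alpha+j,\beta+i-j)}{B(\alpha,\beta)},
\]
expand the Beta function via Gammas, and repackage. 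Both are valid; your integral route is shorter and invokes the standard conjugacy identity once, whereas the paper's induction is more elementary and self-contained (it never needs the value of the Beta integral, only the posterior-update fractions). Your binomial-coefficient identity $\binom{i+\alpha+\beta}{j+\alpha}=\binom{i+\alpha+\beta-2}{j+\alpha-1}\cdot\frac{(i+\alpha+\beta)(i+\alpha+\beta-1)}{(j+\alpha)(i+\beta-j)}$ is correct under the paper's continuous convention, and the cancellation you describe does produce the stated $Q^i_j$. The $q^i_j$ derivation and the five-case differencing for $d^{n-2}_j$ (using $L_{n-2}=L_{n-1}-1$, $U_{n-2}=U_{n-1}$, and $b^{n-1}_j=q^{n-1}_j\mathbbm{1}[j\geq L_{n-1}]+(1-q^{n-1}_j)\mathbbm{1}[j\geq U_{n-1}+1]$) match the paper's argument; you simply make explicit the conditioning on $\his_n$ that the paper states directly. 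No gaps.
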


We prove the lemma by delicate analysis based on the properties of Beta distribution. We defer the proof to appendix. Based on Lemma~\ref{lem:generald}, we substitute the final two rounds' belief change $d^{n-1}_j,d^{n-2}_j$ into formula~\eqref{eq:symmainali}:

\begin{align}
\E[\Delta_\belset(x)]=&\bigg(\overbrace{Q^{n-2}_{L_{n-2}}*\frac{L_{n-1}+\alpha}{n+\alpha+\beta-1}}^\mytexta+\overbrace{Q^{n-2}_{U_{n-2}}*\frac{n-1-U_{n-1}+\beta}{n+\alpha+\beta-1}}^\mytextb\notag\\
&+\underbrace{\sum_{j=L_{n-2}+1}^{U_{n-2}-1}Q^{n-2}_j*\frac{1}{n+\alpha+\beta-1}}_\mytextm\bigg)*(n+\alpha+\beta-2)*\harmo+\underbrace{\sum_{j=L_{n-1}}^{U_{n-1}}Q^{n-1}_j}_\mytextn\notag\\
=&\bigg(Q^{n-2}_{\frac{n-x-2}2}*\frac{\frac{n-x}2+\alpha}{n+\alpha+\beta-1}+Q^{n-2}_{\frac{n+x-2}2}*\frac{\frac{n-x}2+\beta}{n+\alpha+\beta-1}\notag\\
&+\sum_{j=\frac{n-x}2}^{\frac{n+x-4}2}Q^{n-2}_j*\frac{1}{n+\alpha+\beta-1}\bigg)*(n+\alpha+\beta-2)*\harmo+\sum_{j=\frac{n-x}2}^{\frac{n+x-2}2}Q^{n-1}_j \label{eq:gensurp}
\end{align} 
\footnote{For $x<2$, we have $L_{n-2}+1>U_{n-2}-1$, and for $x=0$, we have $L_{n-1}>U_{n-1}$. We define the summation from a larger subscript to a smaller superscript as zero. This definition is valid since in those cases, no surprise is generated.}

In order to find the optimal $x$, we calculate

\begin{align}
&\E[\Delta_\belset(x+1)]-\E[\Delta_\belset(x-1)] \notag\\
=&\bigg(Q^{n-2}_{\frac{n-x-3}2}*(\frac{n-x-1}2+\alpha)+Q^{n-2}_{\frac{n+x-1}2}*(\frac{n-x-1}2+\beta)-Q^{n-2}_{\frac{n-x-1}2}*(\frac{n-x+1}2+\alpha)\notag\\
&-Q^{n-2}_{\frac{n+x-3}2}*(\frac{n-x+1}2+\beta)+Q^{n-2}_{\frac{n-x-1}2}+Q^{n-2}_{\frac{n+x-3}2}\bigg)*\frac{(n+\alpha+\beta-2)\harmo}{n+\alpha+\beta-1}+Q^{n-1}_{\frac{n+x-1}2}+Q^{n-1}_{\frac{n-x-1}2}\notag\\
=&\bigg(\left(Q^{n-2}_{\frac{n-x-3}2}-Q^{n-2}_{\frac{n-x-1}2}\right)*(\frac{n-x-1}2+\alpha)+\left(Q^{n-2}_{\frac{n+x-1}2}-Q^{n-2}_{\frac{n+x-3}2}\right)*(\frac{n-x-1}2+\beta)\bigg)\notag\\
&*\frac{(n+\alpha+\beta-2)\harmo}{n+\alpha+\beta-1}+Q^{n-1}_{\frac{n+x-1}2}+Q^{n-1}_{\frac{n-x-1}2}
\label{eq:gendif}
\end{align}

Then the following claim shows that we can obtain the optimal bonus by finding ``local maximum'' $\Tilde{x}$. We defer the proof to appendix.

\begin{claim}[Local Maximum $\rightarrow$ Optimal Bonus]
\label{cla:roundopt}
If there exists $\Tilde{x}\in(0,n+1)$ such that for all $1\leq x<\Tilde{x}$, $\E[\Delta_\belset(x + 1)]\geq \E[\Delta_\belset(x - 1)]$ and when $\Tilde{x}\leq n-1$, for all $\Tilde{x}\leq x\leq n-1$, $\E[\Delta_\belset(x + 1)]\leq \E[\Delta_\belset(x - 1)]$, then $\round(\Tilde{x})$ is the optimal bonus.
\end{claim}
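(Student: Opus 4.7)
The plan is to show that $y^* := \round(\Tilde{x})$ attains $\max_{x \in \optmset} \E[\Delta_\belset(x)]$ by a chain-of-inequalities argument: an ascending chain from any smaller element of $\optmset$ up to $y^*$, and a descending chain from any larger element down to $y^*$.

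First I would pin down a tight characterization of where $\Tilde{x}$ can lie relative to $y^*$. Since $\optmset$ consists of integers spaced $2$ apart (same parity as $n$) and $\round$ picks the nearest such integer with ties broken toward the smaller element, a short case analysis shows $\Tilde{x} \in (y^* - 1, y^* + 1]$. The strict left endpoint is critical: if $\Tilde{x}$ equaled $y^* - 1$ exactly, both $y^*$ and $y^* - 2$ would be equidistant and the tie-breaking rule would select $y^* - 2$, contradicting $y^* = \round(\Tilde{x})$.

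For the ascending side: for any $y \in \optmset$ with $y < y^*$ (hence $y \leq y^* - 2$), I would set $x = y + 1$. Then $x$ has parity opposite to $n$, so $x \pm 1 \in \optmset$, and $1 \leq x \leq y^* - 1 < \Tilde{x}$ by the interval just established. Hypothesis (A) then yields $\E[\Delta_\belset(y + 2)] \geq \E[\Delta_\belset(y)]$, and iterating in steps of $2$ telescopes to $\E[\Delta_\belset(y)] \leq \E[\Delta_\belset(y^*)]$. Symmetrically, for $y \in \optmset$ with $y > y^*$, setting $x = y - 1$ yields $\Tilde{x} \leq y^* + 1 \leq x \leq n - 1$; in particular $\Tilde{x} \leq n - 1$, so hypothesis (B) applies and gives $\E[\Delta_\belset(y)] \leq \E[\Delta_\belset(y - 2)]$, which chains down to $\E[\Delta_\belset(y)] \leq \E[\Delta_\belset(y^*)]$. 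Combining the two chains proves $y^*$ is a maximizer over $\optmset$.

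The main subtlety is the interaction between the tie-breaking convention in $\round$ and the strict inequality $x < \Tilde{x}$ in hypothesis (A): without breaking ties to the smaller element, the boundary case $\Tilde{x} = y^* - 1$ would violate strictness on the ascending side. The ``when $\Tilde{x} \leq n - 1$'' guard in (B) needs no separate argument, since $\Tilde{x} > n - 1$ forces $y^* = n$ and hence there is no $y \in \optmset$ with $y > y^*$ to bound, making the descending chain vacuous.
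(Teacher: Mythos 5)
Your proof is correct and takes essentially the same approach as the paper: both establish the interval characterization $\round(\Tilde{x}) \in [\Tilde{x}-1,\,\Tilde{x}+1)$ (equivalently, $\Tilde{x}\in(\round(\Tilde{x})-1,\,\round(\Tilde{x})+1]$) and then chain the step-2 inequalities upward and downward to $\round(\Tilde{x})$. You simply spell out the telescoping chains, the role of the tie-breaking convention, and the vacuity of the descending chain when $\Tilde{x}>n-1$ more explicitly than the paper, which compresses those steps into a brief remark.
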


Later we show that formula~\eqref{eq:gendif} induces a linear algorithm to find the optimal bonus size in general and can be significantly simplified in the symmetric case and certain case.

\subsection{Symmetric Case}
\label{subsec:sym}

We start to analyze the symmetric case. 

\begin{observation}\label{ob:symmetric}
In the symmetric case, i.e. $\alpha=\beta$, $Q^i_j$ is also symmetric, that is $Q^i_j=Q^i_{i-j}$.
\end{observation}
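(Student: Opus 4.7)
The plan is to reduce the observation to checking that each factor in the definition $Q^i_j = \Pr[\state_i=j]\cdot 2 q^i_j(1-q^i_j)$ is invariant under $j\mapsto i-j$ when $\alpha=\beta$. Using the closed forms provided by Lemma~\ref{lem:generald}, this becomes a pair of elementary identities.

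First, for the belief factor, $q^i_j=\frac{j+\alpha}{i+\alpha+\beta}$ specializes, when $\alpha=\beta$, to $q^i_{i-j}=\frac{i-j+\alpha}{i+2\alpha}=1-\frac{j+\alpha}{i+2\alpha}=1-q^i_j$. Hence the symmetric product $q^i_j(1-q^i_j)$ is manifestly invariant under $j\mapsto i-j$.

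Next, for the probability factor, I would invoke the closed form $\Pr[\state_i=j]=\frac{(\alpha+\beta-1)\binom{i}{j}\binom{\alpha+\beta-2}{\alpha-1}}{(\alpha+\beta+i-1)\binom{i+\alpha+\beta-2}{j+\alpha-1}}$ and observe that, among the $j$-dependent factors, $\binom{i}{j}=\binom{i}{i-j}$ and (using $\alpha=\beta$) $\binom{i+2\alpha-2}{j+\alpha-1}=\binom{i+2\alpha-2}{(i+2\alpha-2)-(j+\alpha-1)}=\binom{i+2\alpha-2}{(i-j)+\alpha-1}$. Both binomials are therefore symmetric about $j=i/2$, and since the footnote interprets $\binom{n}{k}$ as $\Gamma(n+1)/(\Gamma(k+1)\Gamma(n-k+1))$, the same reflection identity holds for non-integer $\alpha+\beta$, so no case split is needed. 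Combining the two symmetries gives $Q^i_j=Q^i_{i-j}$.

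There is essentially no technical obstacle here; the main thing to get right is the bookkeeping of which factors depend on $j$ and confirming that the reflection identity for generalized binomial coefficients applies. A one-line conceptual check corroborates the computation: swapping the roles of Alice and Bob simultaneously sends $(\alpha,\beta)\mapsto(\beta,\alpha)$, $\state_i\mapsto i-\state_i$, and $q^i_j\mapsto 1-q^i_j$, so under $\alpha=\beta$ this is an automorphism of the entire setup that fixes $Q^i_j$ as a symmetric function of $q^i_j$ times a symmetrically distributed count, immediately yielding the claim.
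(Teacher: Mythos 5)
Your proof is correct and takes essentially the same approach as the paper: both rely on Lemma~\ref{lem:generald}'s closed forms and the reflection symmetry of (generalized) binomial coefficients under $\alpha=\beta$. The only cosmetic difference is that you verify symmetry of $\Pr[\state_i=j]$ and $q^i_j(1-q^i_j)$ separately, whereas the paper plugs directly into the combined formula for $Q^i_j$ and checks the same two binomial identities there; your closing ``swap Alice and Bob'' remark is a nice conceptual sanity check but not needed.
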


We defer the proof to appendix. Based on the above observation, we can further simplify the formula~\eqref{eq:gendif}

\begin{align}
&\E[\Delta_\belset(x+1)]-\E[\Delta_\belset(x-1)] \notag\\ 
=&\bigg(\left(Q^{n-2}_{\frac{n-x-3}2}-Q^{n-2}_{\frac{n-x-1}2}\right)*(\frac{n-x-1}2+\alpha)+\left(Q^{n-2}_{\frac{n+x-1}2}-Q^{n-2}_{\frac{n+x-3}2}\right)*(\frac{n-x-1}2+\alpha)\bigg)\notag\\ \tag{based on formula~\eqref{eq:gendif}}
&*\frac{(n+2\alpha-2)\harmo}{n+2\alpha-1}+Q^{n-1}_{\frac{n+x-1}2}+Q^{n-1}_{\frac{n-x-1}2}\notag \\ \tag{$Q^{n-2}_{\frac{n+x-1}2}=Q^{n-2}_{\frac{n-x-3}2},Q^{n-2}_{\frac{n+x-3}2}=Q^{n-2}_{\frac{n-x-1}2}$ according to Observation~\ref{ob:symmetric}}
=&\bigg(\left(Q^{n-2}_{\frac{n-x-3}2}-Q^{n-2}_{\frac{n-x-1}2}\right)*(\frac{n-x-1}2+\alpha)+\left(Q^{n-2}_{\frac{n-x-3}2}-Q^{n-2}_{\frac{n-x-1}2}\right)*(\frac{n-x-1}2+\alpha)\bigg)\notag\\ \tag{$Q^{n-1}_{\frac{n+x-1}2}=Q^{n-1}_{\frac{n-x-1}2}$ according to Observation~\ref{ob:symmetric}}
&*\frac{(n+2\alpha-2)\harmo}{n+2\alpha-1}+Q^{n-1}_{\frac{n-x-1}2}+Q^{n-1}_{\frac{n-x-1}2}\notag\\
=&\left(Q^{n-2}_{\frac{n-x-3}2}-Q^{n-2}_{\frac{n-x-1}2}\right)*(n-x-1+2\alpha)*\frac{(n+2\alpha-2)\harmo}{n+2\alpha-1}+2Q^{n-1}_{\frac{n-x-1}2}\label{eq:symdif}
\end{align}

The remaining task is to analyze $Q^i_j$s in the above formula. They share some components which can be used for further simplification. We start from the uniform case and analyze it step by step.

\paragraph{Uniform Case} In this case  $\alpha=\beta=1$ and we can substitute $\alpha, \beta$ in Lemma~\ref{lem:generald} and obtains that the probability that Alice wins any number of rounds in the first $i$ rounds is equal, that is
\[
\Pr[\state_i=j]=\frac{1}{i+1}
\]

and $Q^i_j$ is
\[
Q^i_j=\frac{2(j+1)(i+1-j)}{(i+1)(i+2)^2}
\]

We substitute $Q^i_j$ into formula~\eqref{eq:symdif}

\begin{align*}
&\E[\Delta_\belset(x+1)]-\E[\Delta_\belset(x-1)]\\
=&\left(Q^{n-2}_{\frac{n-x-3}2}-Q^{n-2}_{\frac{n-x-1}2}\right)*(n-x-1+2\alpha)*\frac{(n+2\alpha-2)\harmo}{n+2\alpha-1}+2Q^{n-1}_{\frac{n-x-1}2}\tag{based on formula~\eqref{eq:symdif}}\\
=&\left(\frac{(n-x-1)(n+x+1)}{2(n-1)n^2}-\frac{(n-x+1)(n+x-1)}{2(n-1)n^2}\right)*(n-x+1)*\frac{n\harmo}{n+1}\\
&+\frac{(n-x+1)(n+x-1)}{n(n+1)^2}\\
=&\frac{(-4x)*(n-x+1)}{2(n-1)n^2}*\frac{n\harmo}{n+1}+\frac{(n-x+1)(n+x-1)}{n(n+1)^2}\\
=&\frac{(n-x+1)((n-1-2(1+n)\harmo)x+n^2-1)}{(n-1)n(n+1)^2}
\end{align*}

To find $\Tilde{x}$ that satisfies conditions in Claim~\ref{cla:roundopt}, we solve the equation \[\frac{(n-x+1)((n-1-2(1+n)\harmo)x+n^2-1)}{(n-1)n(n+1)^2}=0,\] and get \[x=\begin{cases}\frac{n^2-1}{2 (1 + n)\harmo-n+1}\\ n+1\end{cases}.\] 
Recall that $x\leq n$, we discard the solution of $x=n+1$, and then pick $\Tilde{x}:=\frac{n^2-1}{2 (1 + n)\harmo-n+1}=\frac{n-1}{2 \harmo-\frac{n-1}{n+1}}$. When $x<\Tilde{x}$, we have $(n-1-2(1+n)\harmo)x+n^2-1>0$, thus $\E[\Delta_\belset(x+1)]-\E[\Delta_\belset(x-1)]>0$. Otherwise, $\E[\Delta_\belset(x+1)]-\E[\Delta_\belset(x-1)]\leq 0$. Based on Claim~\ref{cla:roundopt}, the optimal bonus is \[x^*(1,1,n)=\round(\frac{n-1}{2 \harmo-\frac{n-1}{n+1}}).\] 

\paragraph{Symmetric Case}

Lemma~\ref{lem:generald} shows that if the prior $p$ follows $\betadis(\alpha,\alpha)$, then the probability of Alice wins $j$ rounds in the first $i$ rounds is
\[
\Pr[\state_i=j]=\frac{(2\alpha-1)\binom{i}{j}\binom{2\alpha-2}{\alpha-1}}{(2\alpha+i-1)\binom{i+2\alpha-2}{j+\alpha-1}}
\]
and $Q^i_j$ is
\[
Q^i_j=\frac{2(2\alpha-1)\binom{i}{j}\binom{2\alpha-2}{\alpha-1}}{(2\alpha+i)\binom{i+2\alpha}{j+\alpha}}
\]

Then we substitute $Q^i_j$ into formula~\eqref{eq:symdif}.

\begin{align*}
&\E[\Delta_\belset(x + 1)] - \E[\Delta_\belset(x - 1)]\\
=&\left(Q^{n-2}_{\frac{n-x-3}2}-Q^{n-2}_{\frac{n-x-1}2}\right)*(n-x-1+2\alpha)*\frac{(n+2\alpha-2)\harmo}{n+2\alpha-1}+2Q^{n-1}_{\frac{n-x-1}2}\tag{based on formula~\eqref{eq:symdif}}\\
=&\left(\frac{2(2\alpha-1)\binom{n-2}{\frac{n-x-3}2}\binom{2\alpha-2}{\alpha-1}}{(n+2\alpha-2)\binom{n+2\alpha-2}{\frac{n-x-3}2+\alpha}}-\frac{2(2\alpha-1)\binom{n-2}{\frac{n-x-1}2}\binom{2\alpha-2}{\alpha-1}}{(n+2\alpha-2)\binom{n+2\alpha-2}{\frac{n-x-1}2+\alpha}}\right)*(n-x-1+2\alpha)*\frac{(n+2\alpha-2)\harmo}{n+2\alpha-1}\\
&+\frac{4(2\alpha-1)\binom{n-1}{\frac{n-x-1}2}\binom{2\alpha-2}{\alpha-1}}{(n+2\alpha-1)\binom{n+2\alpha-1}{\frac{n-x-1}2+\alpha}}\tag{substitute $Q^i_j$}\\
=&\frac{2(2\alpha-1)\binom{n-1}{\frac{n-x-1}2}\binom{2\alpha-2}{\alpha-1}}{\binom{n+2\alpha-1}{\frac{n-x-1}2+\alpha}}\bigg(\frac{(n-x-1+2\alpha)\harmo}{n-1}*\left(\frac{n-x-1}{n-x-1+2\alpha}-\frac{n+x-1}{n+x-1+2\alpha}\right)+\frac{2}{n+2\alpha-1}\bigg)\\
\propto&\frac{(n-x-1+2\alpha)\harmo}{n-1}*\left(\frac{n-x-1}{n-x-1+2\alpha}-\frac{n+x-1}{n+x-1+2\alpha}\right)+\frac{2}{n+2\alpha-1}\tag{Since $\alpha\geq 1$, $2\alpha-1>0$ thus the coefficient is positive}\\
\propto& \harmo(n+2\alpha-1)\left((n-x-1)(n+x-1+2\alpha)-(n-x-1+2\alpha)(n+x-1)\right)+2(n-1)\tag{Since the denominators $n-1$, $n-x-1+2\alpha$, $n+x-1+2\alpha$ and $n+2\alpha-1$ are positive}\\
=& \harmo(n+2\alpha-1)(-4\alpha x)+2(n-1)(n+x-1+2\alpha)\\
=& (2(n-1)-4\alpha(n+2\alpha-1)\harmo)x+2(n-1)(n+2\alpha-1)=-bx + c
\end{align*}

We define $\Tilde{x}$ as the solution of $-bx+c=0$:

\[\Tilde{x}:=\frac{c}{b}=\frac{(n+2\alpha-1)(n-1)}{2\alpha(n+2\alpha-1)\harmo-n+1}=\frac{n-1}{2\alpha\harmo-\frac{n-1}{n+2\alpha-1}}.\]

Moreover, for all $\alpha\geq 1$, \begin{align*}
b =& 4\alpha(n+2\alpha-1)\harmo -2(n-1)\\
  =& 4\alpha(n+2\alpha-1)*(\sum_{i=1}^{n-1}\frac{1}{i+2\alpha-1})-2(n-1)\\
  \geq& 4\alpha(n+2\alpha-1)*\frac{n-1}{n+2\alpha-2}-2(n-1)\\
  >&4(n-1)-2(n-1)\\
  =&2(n-1)>0
\end{align*}

Therefore, based on Claim~\ref{cla:roundopt}, the optimal bonus $x^*(\alpha,\alpha,n)$ is 
\[
x^*(\alpha,\alpha,n)=\round(\Tilde{x})=\round(\frac{n-1}{2\alpha\harmo-\frac{n-1}{n+2\alpha-1}})
\]

\subsection{Certain Case}

In the certain case, $\alpha=\lambda p,\beta = \lambda(1-p), \lambda\rightarrow \infty$. Thus, when $n$ is finite, the winning probability of Alice is fixed to $p$ across all rounds. Note that we only consider $\alpha\geq\beta$ without loss of generality. Therefore, $p\geq \frac12$. The number of rounds Alice wins follows a binomial distribution. Formally, the probability of Alice wins $j$ rounds in the first $i$ rounds is
\[
\Pr[\state_i=j]=\binom{i}{j} p^j (1-p)^{i-j}
\]
Then we calculate $Q^i_j$
\begin{align}
Q^i_j=2\binom{i}{j}p^{j+1} (1-p)^{i-j+1} \label{eq:certainq}
\end{align}

Moreover, we can apply the Main Technical Lemma~\ref{lem:ratio} to show that the first $n-1$ rounds have the same expected surprise. 
\begin{corollary}
When $\alpha=\lambda p,\beta = \lambda(1-p), \lambda\rightarrow \infty$, given $n$ that is finite, \[
\E[\Delta_\belset]=\sum_{i=1}^{n}\E[\Delta_\belset^i]=\E[\Delta_\belset^{n-1}]*(n-1)+\E[\Delta_\belset^{n}]
\]
\label{cor:certain}
\end{corollary}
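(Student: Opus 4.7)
The plan is to obtain this corollary as an immediate limiting consequence of Lemma~\ref{lem:ratio}, which already expresses the total surprise as a scalar multiple of $\E[\Delta_\belset^{n-1}]$ plus $\E[\Delta_\belset^n]$, with the multiplier depending only on $\alpha+\beta$ and $n$. Setting $\alpha+\beta = \lambda$, the per-round ratio in Lemma~\ref{lem:ratio} specializes to
\[\frac{\E[\Delta_\belset^{i}]}{\E[\Delta_\belset^{i+1}]} = \frac{i+\lambda}{i+\lambda-1},\]
which converges to $1$ for every fixed $i \leq n-2$ as $\lambda \to \infty$. Intuitively, in the certain case each round is an independent coin flip with a fixed bias $p$, so the first $n-1$ rounds play a symmetric role and must contribute equal expected surprise; the ratio collapsing to $1$ confirms this picture.

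Concretely, I would apply the explicit identity from Lemma~\ref{lem:ratio},
\[\E[\Delta_\belset] = \E[\Delta_\belset^{n-1}] \cdot (n+\lambda-2) \cdot \harmo_{\lambda}(n-1) + \E[\Delta_\belset^{n}],\]
and verify that the coefficient $(n+\lambda-2) \cdot \harmo_{\lambda}(n-1)$ converges to $n-1$. Rewriting it as $\sum_{i=1}^{n-1} \frac{n+\lambda-2}{i+\lambda-1}$, each of the $n-1$ summands has numerator and denominator linear in $\lambda$ with leading coefficient $1$, so each tends to $1$. Since $n$ is fixed the sum is finite, hence the limit commutes trivially with the summation and the coefficient tends to $n-1$.

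Finally, since $\E[\Delta_\belset]$, $\E[\Delta_\belset^{n-1}]$, and $\E[\Delta_\belset^n]$ are continuous functions of $(\alpha,\beta)$ for fixed finite $n$, passing to the limit $\lambda \to \infty$ on both sides yields the stated identity. There is no genuine obstacle here: the only care needed is the routine observation that, because $n$ is finite, the limit interchanges with the sum, so the argument amounts to a one-line substitution into Lemma~\ref{lem:ratio}.
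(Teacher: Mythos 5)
Your proposal is correct and follows essentially the same route as the paper: the paper's proof is the one-line observation that the ratio $\frac{\E[\Delta_\belset^{i}]}{\E[\Delta_\belset^{i+1}]}=\frac{i+\alpha+\beta}{i+\alpha+\beta-1}\to 1$ as $\lambda\to\infty$, which is precisely your starting point, and you simply spell out the resulting convergence of the coefficient $(n+\lambda-2)\,\harmo_{\lambda}(n-1)=\sum_{i=1}^{n-1}\frac{n+\lambda-2}{i+\lambda-1}\to n-1$ along with the finite-sum limit interchange that the paper leaves implicit.
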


\begin{proof}[Proof of Corollary~\ref{cor:certain}]
\[
\frac{\E[\Delta_\belset^{i}]}{\E[\Delta_\belset^{i+1}]}=\frac{i+\alpha+\beta}{i+\alpha+\beta-1}\rightarrow 1\] as $\lambda\rightarrow \infty$.
\end{proof}

By formula~\eqref{eq:gensurp}, we have 

\begin{align}
\E[\Delta_\belset(x)]
=&\bigg(Q^{n-2}_{\frac{n-x-2}2}*\frac{\frac{n-x}2+\alpha}{n+\alpha+\beta-1}+Q^{n-2}_{\frac{n+x-2}2}*\frac{\frac{n-x}2+\beta}{n+\alpha+\beta-1}\notag\\
&+\sum_{j=\frac{n-x}2}^{\frac{n+x-4}2}Q^{n-2}_j*\frac{1}{n+\alpha+\beta-1}\bigg)*(n-1)+\sum_{j=\frac{n-x}2}^{\frac{n+x-2}2}Q^{n-1}_j \tag{Formula~\eqref{eq:gensurp} and Corollary~\ref{cor:certain}}\\

=&\bigg(Q^{n-2}_{\frac{n-x-2}2}*p+Q^{n-2}_{\frac{n+x-2}2}*(1-p)\bigg)*(n-1)+\sum_{j=\frac{n-x}2}^{\frac{n+x-2}2}Q^{n-1}_j\tag{$\alpha=\lambda p,\beta = \lambda(1-p), \lambda\rightarrow \infty$}\\
= & \left(2\binom{n-2}{\frac{n-\last-2}2}p^{\frac{n-\last+2}2}(1-p)^{\frac{n+\last}2}+2\binom{n-2}{\frac{n+\last-2}2}p^{\frac{n+\last}2}(1-p)^{\frac{n-\last+2}2}\right)*(n-1)\notag\\
&+\sum_{j=\frac{n-\last}2}^{\frac{n+\last-2}2}2\binom{n-1}{j}p^{j+1}(1-p)^{n-j}\tag{Apply formula~\eqref{eq:certainq} to substitute $Q^i_j$}
\end{align}

Then we calculate

\begin{align*}
& \E[\Delta_\belset(x + 1)] - \E[\Delta_\belset(x - 1)] \\
= & \underbrace{2(n-1)\bigg(\binom{n-2}{\frac{n-\last-3}2}p^{\frac{n-\last+1}2}(1-p)^{\frac{n+\last+1}2}+\binom{n-2}{\frac{n+\last-1}2}p^{\frac{n+\last+1}2}(1-p)^{\frac{n-\last+1}2}} \\
&\overbrace{-\binom{n-2}{\frac{n-\last-1}2}p^{\frac{n-\last+3}2}(1-p)^{\frac{n+\last-1}2}\bigg)-\binom{n-2}{\frac{n+\last-3}2}p^{\frac{n+\last-1}2}(1-p)^{\frac{n-\last+3}2}}^\text{difference of first n-1 round’s expected surprise} \\
&+\overbrace{\sum_{i=\frac{n-\last-1}2}^{\frac{n+\last-1}2}2\binom{n-1}{i}p^{i+1}(1-p)^{n-i}-\sum_{i=\frac{n-\last+1}2}^{\frac{n+\last-3}2}2\binom{n-1}{i}p^{i+1}(1-p)^{n-i}}^\text{difference of final round's expected surprise}\notag\\
= & (n-x-1)\left(\binom{n-1}{\frac{n-\last-1}2}p^{\frac{n-\last+1}2}(1-p)^{\frac{n+\last+1}2}\right) +
\binom{n-1}{\frac{n+\last-1}2}p^{\frac{n+\last+1}2}(1-p)^{\frac{n-\last+1}2}\\
& - (n+x-1)\left(\binom{n-1}{\frac{n-\last-1}2}p^{\frac{n-\last+3}2}(1-p)^{\frac{n+\last-1}2}+
\binom{n-1}{\frac{n+\last-1}2}p^{\frac{n+\last-1}2}(1-p)^{\frac{n-\last+3}2}\right)\\
& + 2\binom{n-1}{\frac{n+\last-1}2}p^{\frac{n+\last+1}2}(1-p)^{\frac{n-\last+1}2}+2\binom{n-1}{\frac{n-\last-1}2}p^{\frac{n-\last+1}2}(1-p)^{\frac{n+\last+1}2}\\
\tag{Apply $n\binom{n-1}{k}=(n-k)\binom{n}{k}$}\\
= & \binom{n-1}{\frac{n+\last-1}2}p^{\frac{n-\last+1}2}(1-p)^{\frac{n-\last+1}2}\left((n-x+1)((1-p)^x+p^x)-(n+x-1)(p(1-p)^{x-1}+p^{x-1}(1-p))\right)\\
= & \binom{n-1}{\frac{n+\last-1}2}p^{\frac{n-\last+1}2}(1-p)^{\frac{n-\last+1}2}\left((2np-n-(x-1))p^{x-1}+(n-2np-(x-1))(1-p)^{x-1}\right) \\
\propto & (2np-n-(x-1))p^{x-1}+(n-2np-(x-1))(1-p)^{x-1} \tag{Since $\binom{n-1}{\frac{n+\last-1}2}p^{\frac{n-\last+1}2}(1-p)^{\frac{n-\last+1}2}>0$}
\end{align*}

We define $F(x):=(2np-n-(x-1))p^{x-1}+(n-2np-(x-1))(1-p)^{x-1},x\in[1,n+1)$ and analyze the solution of $F(x)=0$. We show several examples of $F(x)$ with different $n$ in Figure~\ref{fig:F_example}. We can see that, intuitively, when $n$ is large enough, $F(x)=0$ has 2 solutions. One is $x=1$, and the other solution is close to $2np-n$ when $n$ is large.

\begin{figure}[!ht]\centering
  \subfigure[$n=5$]{\includegraphics[width=.24\linewidth]{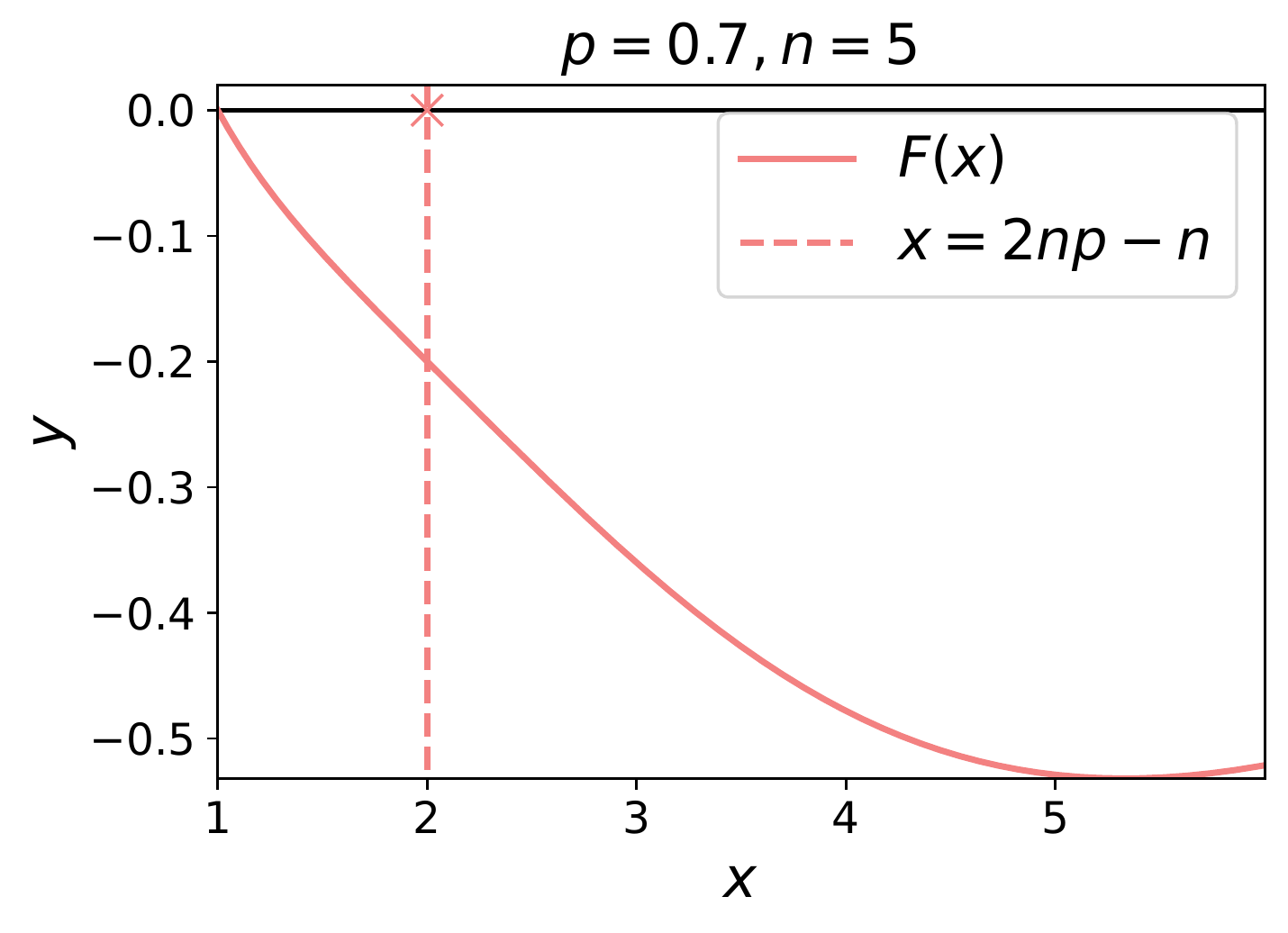}}
  \subfigure[$n=6$]{\includegraphics[width=.24\linewidth]{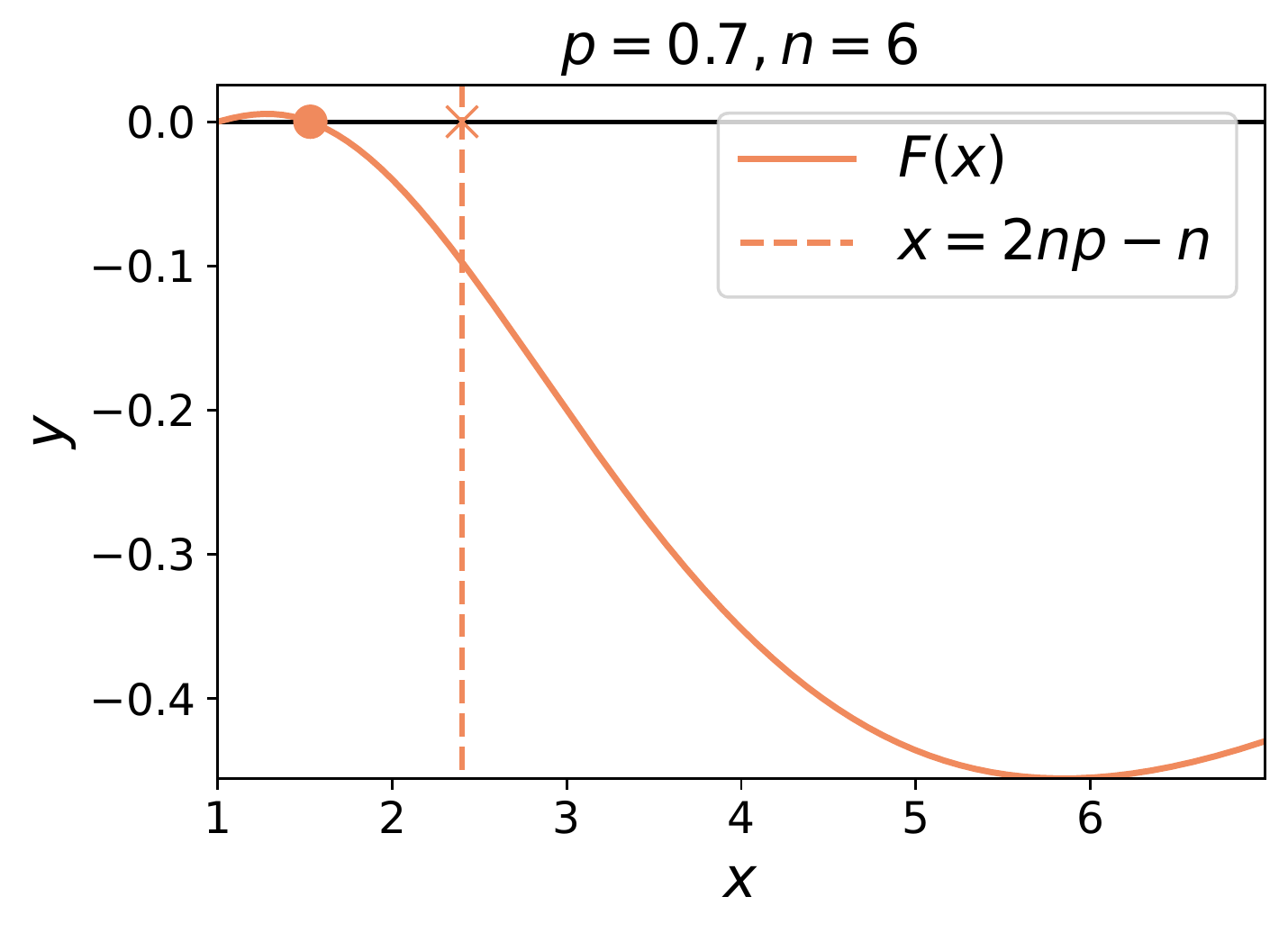}}
  \subfigure[$n=7$]{\includegraphics[width=.24\linewidth]{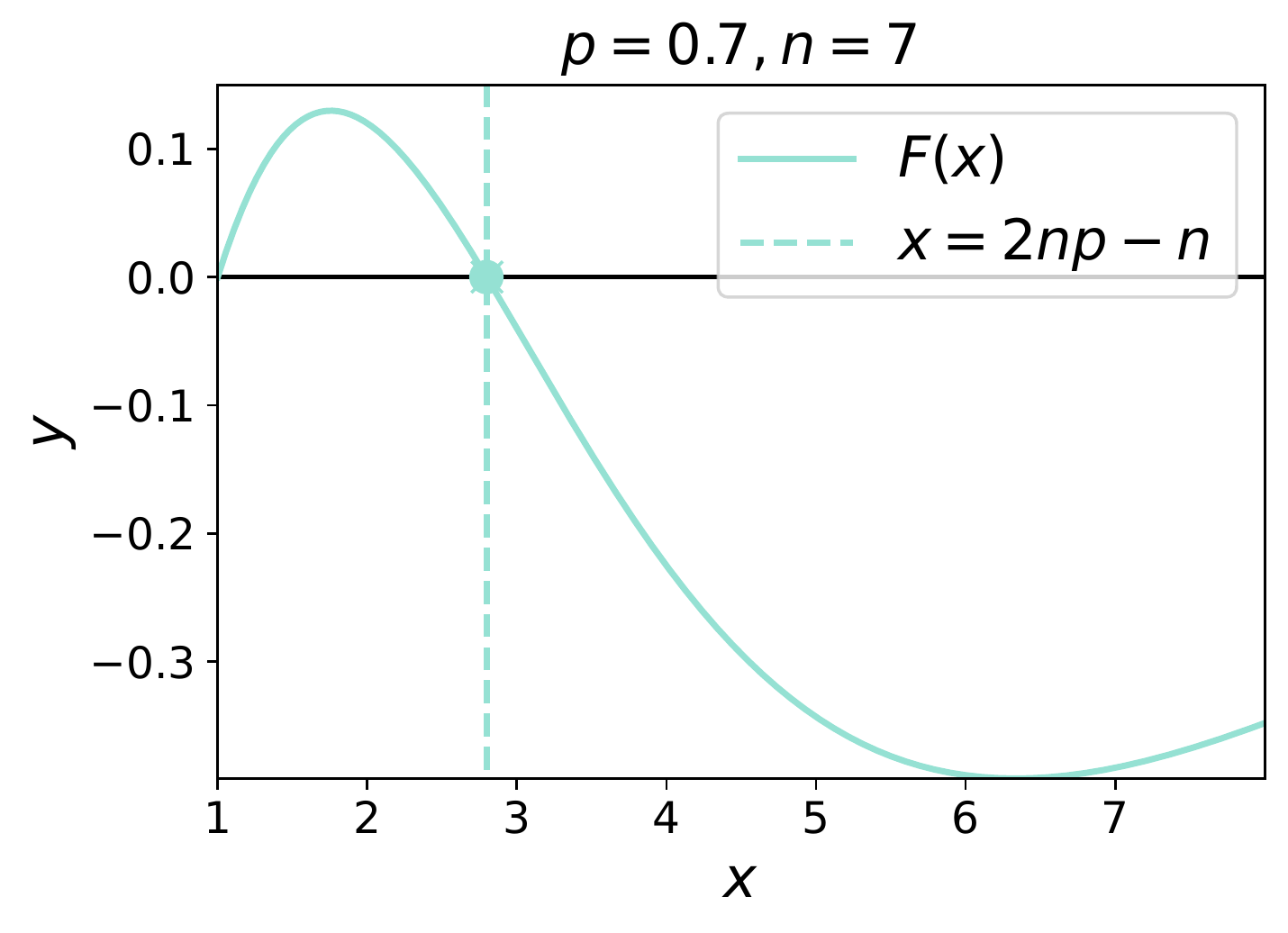}}
  \subfigure[$n=8$]{\includegraphics[width=.24\linewidth]{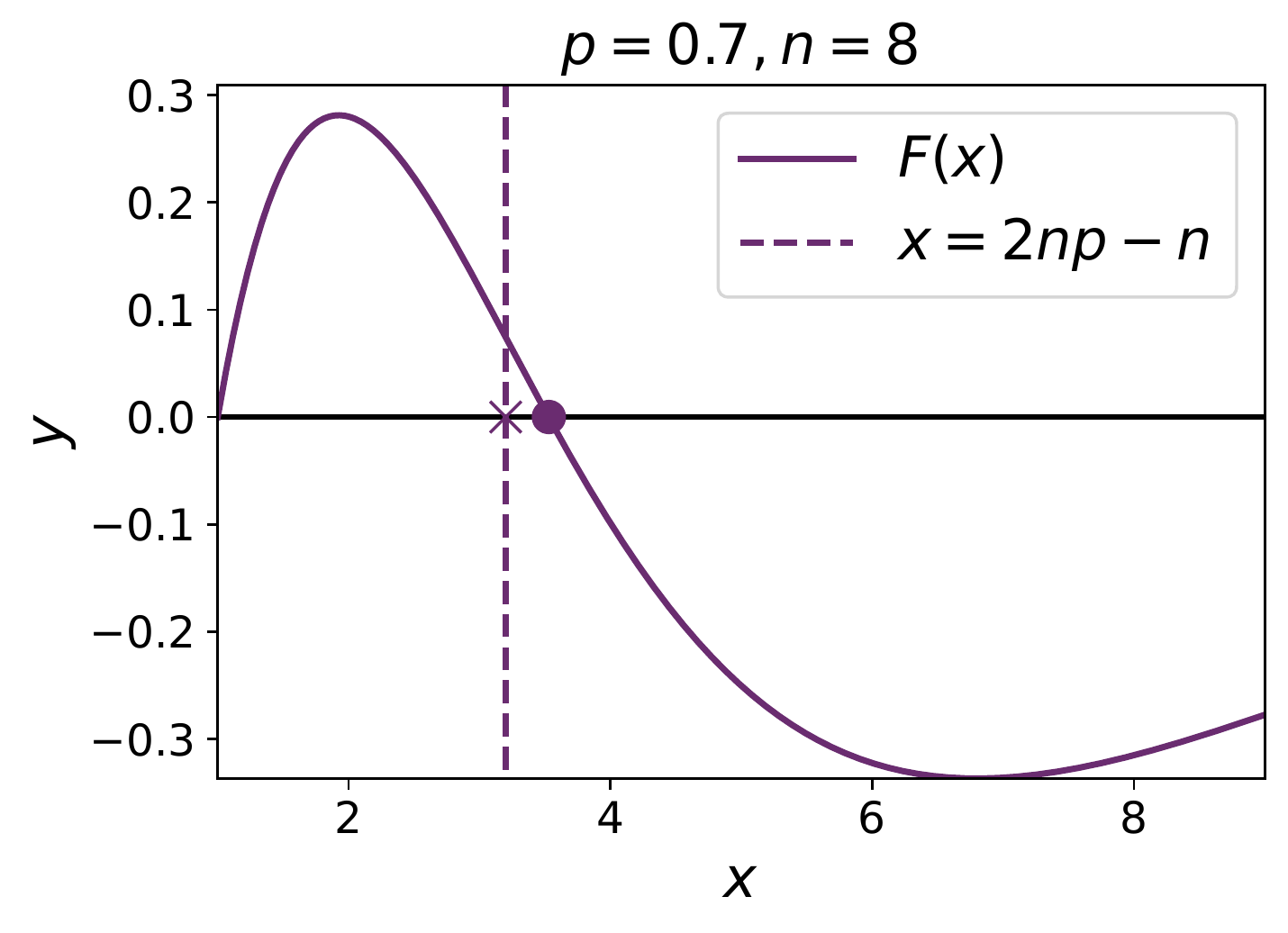}}
  \caption{\textbf{Examples for function F}}
  \label{fig:F_example}
\end{figure}

\begin{lemma}\label{lem:fx}
When $p\geq \frac12$, $F(x)=0$ has a trivial solution at $x=1$, and has a non-trivial solution $\Tilde{x}\in (1,2np-n+1)$ if and only if $p>\frac12$ and $n>\frac{1}{(\frac{1}{2}-p)\ln (\frac{1-p}{p})}$. There is no other solution. Moreover, when $x\in (1,\Tilde{x})$, $F(x)>0$, when $\Tilde{x}<n-1$ and $x\in (\Tilde{x},n-1]$, $F(x)<0$. 

Besides, let $a=2np-n-2$, when $p>\frac{1}{1+(a+1)^{-\frac{1}{a}}}$, the non-trivial solution of $F(x)=0$ is in $(2np-n-1,2np-n+1)$.
\end{lemma}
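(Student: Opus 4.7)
Let $c:=2np-n\ge 0$, so $F(x)=(c-(x-1))p^{x-1}-(c+(x-1))(1-p)^{x-1}$. I would start with the boundary facts: $F(1)=0$ is immediate, and on $[c+1,n+1)$ the coefficient $c-(x-1)\le 0$ while $c+(x-1)>0$, so $F(x)<0$ there. Thus any nontrivial zero must lie in $(1,c+1)$. The degenerate case $p=\tfrac12$ (so $c=0$) collapses this interval and gives $F(x)=-2(x-1)(\tfrac12)^{x-1}<0$ for $x>1$, matching ``neither iff-condition holds, so no nontrivial root.''

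For $p>\tfrac12$, on $(1,c+1)$ both $c-x+1$ and $c+x-1$ are positive, so dividing by $(c+x-1)(1-p)^{x-1}>0$ shows $F$ has the same sign as
\[
\phi(x):=\ln(c-x+1)-\ln(c+x-1)-(x-1)\ln\frac{1-p}{p},
\]
with $\phi(1)=0$ and $\phi(x)\to-\infty$ as $x\to(c+1)^-$. The pivotal calculation is
\[
\phi''(x)=\frac{1}{(c+x-1)^2}-\frac{1}{(c-x+1)^2}=\frac{-4c(x-1)}{(c-x+1)^2(c+x-1)^2}<0\quad\text{on }(1,c+1),
\]
so $\phi'$ is strictly decreasing there. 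A short manipulation shows $\phi'(1)=\ln\frac{p}{1-p}-\tfrac{2}{c}>0$ iff $n>\frac{1}{(1/2-p)\ln((1-p)/p)}$. If $\phi'(1)\le 0$, then $\phi'<0$ on $(1,c+1)$, so $\phi$ is strictly decreasing and no further zero exists. If $\phi'(1)>0$, then $\phi$ first rises to a unique positive maximum (at the unique critical point of $\phi$), then falls to $-\infty$, yielding a unique $\tilde{x}\in(1,c+1)$ with $\phi(\tilde{x})=0$, together with the sign pattern $F>0$ on $(1,\tilde{x})$ and $F<0$ on $(\tilde{x},c+1)$. Gluing this with $F<0$ on $[c+1,n+1)$ gives the sign statement on $(\tilde{x},n-1]$.

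For the ``moreover'' part I would evaluate $F$ at $x=c-1=2np-n-1$, which lies strictly inside $(1,c+1)$ exactly because $a=c-2>0$, and simplify to
\[
F(c-1)=2\bigl[p^{a}-(a+1)(1-p)^{a}\bigr].
\]
Direct algebra yields $F(c-1)>0\iff p/(1-p)>(a+1)^{1/a}\iff p>\tfrac{1}{1+(a+1)^{-1/a}}$, precisely the stated hypothesis. Since $F>0$ only on $(1,\tilde{x})$, positivity of $F(c-1)$ forces $c-1<\tilde{x}$, and combined with $\tilde{x}<c+1$ this gives $\tilde{x}\in(2np-n-1,\,2np-n+1)$.

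The main obstacle is establishing the unimodality of $\phi$ on $(1,c+1)$; once the factorization $\phi''(x)=-4c(x-1)/[(c-x+1)(c+x-1)]^{2}$ is in hand, everything else is routine sign bookkeeping. A secondary subtlety is rewriting $\phi'(1)>0$ in the exact form stated in the lemma, since $(1/2-p)$ and $\ln((1-p)/p)$ are both negative for $p>1/2$, so the sign flips have to be tracked carefully when inverting the inequality into $n>\cdots$.
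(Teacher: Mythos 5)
Your proof is correct and follows essentially the same route as the paper's: both reduce the sign of $F$ to the sign of the log-transformed function (your $\phi$, the paper's $f$ after the substitution $t=x-1$), show concavity via the second derivative, and characterize the nontrivial root through the sign of the derivative at the left endpoint; the ``moreover'' part is likewise proved by evaluating $F$ at $x=2np-n-1$ and checking $F(2np-n+1)<0$. The only differences are cosmetic (you keep the variable $x$ rather than shifting to $t=x-1$, and you write the $F(c-1)$ computation a bit more cleanly than the paper, whose displayed intermediate expression has a minor sign typo but lands on the same final condition $p>\frac{1}{1+(a+1)^{-1/a}}$).
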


We defer the proof to appendix. $F(x)$ has a solution $\Tilde{x}$ in $(1,2np-n+1)$ when $p>\frac12$ and $n>\frac{1}{(\frac{1}{2}-p)\ln (\frac{1-p}{p})}$. When $x\in [1,\Tilde{x})$, $F(x)\geq 0$, i.e. $\E[\Delta_\belset(x + 1)] - \E[\Delta_\belset(x - 1)] \geq 0$. 
When $\Tilde{x}\leq n-1$ and $x\in[\Tilde{x},n-1]$, $\E[\Delta_\belset(x + 1)] - \E[\Delta_\belset(x - 1)]\leq 0$. Therefore, the conditions of Claim \ref{cla:roundopt} are satisfied. We apply Claim \ref{cla:roundopt} to obtain:
        \[
        x^*(\alpha,\beta,n) = \begin{cases}
        \round(\Tilde{x}) & \text{if $p>\frac12$ and $n>\frac{1}{(\frac{1}{2}-p)\ln(\frac{1-p}{p})}$}\\
        \round(1) & \text{otherwise}\\
        \end{cases}
        \]
        where $\Tilde{x}$ is the non-trivial solution of the equation $F(x)=0$.

Finally, we study the approximation of the $x^*(\alpha,\beta,n)$ in the certain case and show that under certain conditions, it is around the "expected lead" $\round(2np-n)$, the number of points the weaker player needs to comeback in expectation.

Lemma~\ref{lem:fx} shows that when $p>\frac{1}{1+(a+1)^{-\frac{1}{a}}}$, the non-trivial solution of $F(x)=0$ is in $(2np-n-1,2np-n+1)$. Recall that $\round(x)\in [x-1,x+1)$. Then we have $|\round(\Tilde{x})-\round(2np-n)|<3$. Moreover, $\round(x)$ is an integer that has the same parity as $n$. Therefore, $|\round(\Tilde{x})-\round(2np-n)|\leq 2$, i.e. the difference between the approximation $\round(2np-n)$ and the optimal bonus $\round(\Tilde{x})$ is $\leq 2$. 

\subsection{General Beta Prior Setting}

We provide an $O(n)$ algorithm for general Beta prior setting. A natural idea is to enumerate all possible bonus $x$ and calculate the corresponding expected total surprise value $\E[\Delta_\belset(x)]$. However, $\E[\Delta_\belset(x)]$ contains an $O(n)$ summation and lead to an $O(n^3)$ running time algorithm. 

Recall formula~\eqref{eq:gendif}, 
\begin{equation}
\label{eq:difference}
\begin{aligned}
    &\E[\Delta_\belset(x+1)]-\E[\Delta_\belset(x-1)]\\
    =&\bigg(\left(Q^{n-2}_{\frac{n-x-3}2}-Q^{n-2}_{\frac{n-x-1}2}\right)*(\frac{n-x-1}2+\alpha)+\left(Q^{n-2}_{\frac{n+x-1}2}-Q^{n-2}_{\frac{n+x-3}2}\right)*(\frac{n-x-1}2+\beta)\bigg)\notag\\
    &*\frac{(n+2\alpha-2)\harmo}{n+2\alpha-1}+Q^{n-1}_{\frac{n+x-1}2}+Q^{n-1}_{\frac{n-x-1}2}
\end{aligned}
\end{equation}
Note that in this formula, only $\harmo$ and $Q^i_j$ cannot be calculated in O(1) time. We can preprocess $\harmo$ in the time of $O(n)$. Then notice that if we can calculate all the $Q^i_j$ in the formula within $O(1)$ time, we can compute the difference between $\E[\Delta_\belset(x+1)]$ and $\E[\Delta_\belset(x-1)]$ in $O(1)$ time. 

Recall Lemma \ref{lem:generald}

\begin{align*}
Q^i_j&=\frac{2(\alpha+\beta-1)\binom{i}{j}\binom{\alpha+\beta-2}{\alpha-1}}{(\alpha+\beta+i)\binom{i+\alpha+\beta}{j+\alpha}}\\
&=\frac{\Gamma(i+1)\Gamma(\alpha+\beta)\Gamma(j+\alpha+1)\Gamma(i-j+\beta+1)}{(\alpha+\beta+i) \Gamma(i+\alpha+\beta+1)\Gamma(j+1)\Gamma(i-j+1)\Gamma(\alpha)\Gamma(\beta)}\\
&=\frac{1}{\alpha+\beta+i}*\frac{\Gamma(i+1)}{\Gamma(j+1)\Gamma(i-j+1)}*\frac{\Gamma(j+\alpha+1)}{\Gamma(\alpha)}*\frac{\Gamma(i-j+\beta+1)}{\Gamma(\beta)}*\frac{\Gamma(\alpha+\beta)}{\Gamma(i+\alpha+\beta+1)}
\end{align*}

We consider the following four parts separately 
\[
\begin{cases}
\Gamma(1+y)\\
\frac{\Gamma(\alpha+y)}{\Gamma(\alpha)}\\
\frac{\Gamma(\beta+y)}{\Gamma(\beta)}\\
\frac{\Gamma(\alpha+\beta)}{\Gamma(\alpha+\beta+y)}\\
\end{cases}
y\in \{0, 1, 2, \ldots, n\}
\]

Due to the property of the Gamma function $\forall z> 0, \Gamma(z+1)=z\Gamma(z)$, we can use the recursive method to preprocess the above four parts in $O(n)$ time. Then for any $i,j\in\{0, 1, 2, \ldots, n\},j\leq i$, we can calculate the value of $Q_j^i$ in $O(1)$ time.

Based on this, when we enumerate all possible bonus $x$ in ascending order, we can calculate the value of $\E[\Delta_\belset(x+1)]$ based on $\E[\Delta_\belset(x-1)]$ in $O(1)$ time. We then find the optimal bonus $x$. We present the pseudo code in Algorithm~\ref{algo:optimal_x}. The total time and space complexity of the algorithm is $O(n)$.

\begin{algorithm}[!ht]
\SetAlgoNoLine
\caption{Calculate optimal bonus $x^*$}
\label{algo:optimal_x}
    \KwIn{Number of rounds $n$, the parameters of the prior Beta distribution  $\alpha,\beta$}
    \KwOut{Optimal bonus $x^*$}
    \For{$i$ in $\{0, 1,\ldots,n\}$}{
    Initialize $\Gamma(i+1),\frac{\Gamma(\alpha+i)}{\Gamma(\alpha)},\frac{\Gamma(\beta+i)}{\Gamma(\beta)},\frac{\Gamma(\alpha+\beta)}{\Gamma(\alpha+\beta+i)}$}
    $surp\_max:=0$\\
    $surp\_sum:=0$\\
    $x:= n\%2$\\
    \For{$i$ in $\{n\%2+1,\ldots,n-1\}$} {
        $surp\_sum += \E[\Delta_\belset(i+1)]-\E[\Delta_\belset(i-1)]$\tcc*{Formula~\eqref{eq:gendif}}
        \If{$surp\_sum>surp\_max$}{
            $surp\_max:=surp\_sum$\\
            $x:= i + 1$
        }
    }
    \Return $x^*=x$
\end{algorithm}

We implement the algorithm to conduct numerical experiments for $n=5, n=10, n=20, n=40$ (Figure~\ref{fig:contourf_finite}) and the figure becomes more and more similar to the asymptotic case (Figure~\ref{fig:infinite}) as $n$ increases. 

\section{Asymptotic Case}

In the asymptotic case, we can use a continuous integral to approximate the discrete summation. Here we define bonus ratio $\mu:=\frac{x}{n}$ and use the integration of $\mu$ to approximate the overall surprise. Formally,

\begin{theorem}
For all $\alpha\geq\beta\geq 1$, there exists a function $\asyf_{\alpha,\beta,n}(\mu)$ such that $\forall \mu\in (0,1)$, $\E[\Delta_{\mathcal{B}}(\mu*n)]=\asyf_{\alpha,\beta,n}(\mu)*(1+O(\frac1n))$. When we define $\mu^*:=\arg\max_{\mu} \asyf_{\alpha,\beta,n}(\mu)$, 
\begin{itemize}
    \item \textbf{Symmetric $\alpha=\beta$} 
    \[\mu^*= \frac{1}{2\alpha \harmo_{2\alpha}(n-1)-1}\]
    \item \textbf{Near-certain $\alpha=\lambda p, \beta=\lambda (1-p)$} fixing $p$, for all sufficiently small $\epsilon>0$, when $\lambda>O(\log\frac{1}{\epsilon})$\footnote{See detailed conditions in Lemma~\ref{lem:gu}}, the optimal $\mu^*$ is around the ``expected lead'', 
    \[\mu^*\in (\frac{(\alpha-\beta)\harmo+1}{(\alpha+\beta)\harmo-1}-\epsilon,\frac{(\alpha-\beta)\harmo+1}{(\alpha+\beta)\harmo-1})\approx (\frac{\alpha-\beta}{\alpha+\beta}-\epsilon,\frac{\alpha-\beta}{\alpha+\beta})=(2p-1-\epsilon, 2p-1)\]
    \item \textbf{General} $\mu^*$ is the unique solution of $G(\mu)=0$ and $\mu^*<\frac{(\alpha-\beta)\harmo+1}{(\alpha+\beta)\harmo-1}$.
\end{itemize}
\[G(\mu):=(1+\mu)^{\alpha-\beta}\left(\frac{(\alpha-\beta)\harmo+1}{(\alpha+\beta)\harmo-1}-\mu\right)+(1-\mu)^{\alpha-\beta}\left(\frac{(-\alpha+\beta)\harmo+1}{(\alpha+\beta)\harmo-1}-\mu\right)\]
\label{the:asymptotic}
\end{theorem}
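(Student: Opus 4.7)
The plan is to push the finite-case analysis of Section 4 through an asymptotic limit in which the discrete sum in \eqref{eq:gensurp} collapses to an integral. Starting from \eqref{eq:gensurp}, substitute $x = \mu n$ and apply Stirling's approximation to the Gamma-ratios in the closed form for $Q^{n-1}_j$ and $Q^{n-2}_j$ supplied by Lemma~\ref{lem:generald}. For $j = (1\pm\mu)n/2 + O(1)$ with $\alpha,\beta$ held fixed, the resulting expansion takes the shape
\[
Q^{n-1}_{(1-\mu)n/2} \;=\; \frac{c_{\alpha,\beta}}{n}(1-\mu)^{\alpha}(1+\mu)^{\beta}\bigl(1+O(1/n)\bigr),
\]
and analogously at $(1+\mu)n/2$ with $\alpha,\beta$ interchanged. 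The $\Theta(n)$ summands in the $\mytextm$ part form a Riemann sum approximating an integral of a Beta-type density over $[-\mu,\mu]$; together with the $(n+\alpha+\beta-2)\harmo$ prefactor from Lemma~\ref{lem:ratio} and the endpoint and final-round contributions, this yields the desired closed-form $\asyf_{\alpha,\beta,n}(\mu)$ with a multiplicative $(1+O(1/n))$ error.

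To locate $\mu^*$, differentiate $\asyf_{\alpha,\beta,n}$ in $\mu$. The discrete first-difference \eqref{eq:gendif} carries over to this derivative under the same substitutions, and after algebra the result factors as
\[
\frac{\partial \asyf_{\alpha,\beta,n}}{\partial \mu}(\mu) \;\propto\; G(\mu)
\]
with a strictly positive proportionality constant, so interior critical points are exactly the zeros of $G$. In the symmetric case $\alpha=\beta$ the prefactors $(1\pm\mu)^{\alpha-\beta}$ are both equal to $1$ and the equation $G(\mu)=0$ becomes linear, immediately giving the stated $\mu^* = 1/(2\alpha\harmo_{2\alpha}(n-1)-1)$. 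In the near-certain case $\alpha-\beta = \lambda(2p-1)$ grows without bound, so $(1-\mu)^{\alpha-\beta}$ is exponentially smaller than $(1+\mu)^{\alpha-\beta}$ whenever $\mu$ is bounded away from $0$; the auxiliary Lemma~\ref{lem:gu} quantifies how quickly the second bracket in $G$ becomes negligible, forcing the first bracket to vanish to within $\epsilon$ and pinning $\mu^*$ in the claimed $\epsilon$-neighbourhood of $(\alpha-\beta)/(\alpha+\beta)$.

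Uniqueness of the zero in $(0,1)$ and the bound $\mu^* < \mu_0 := ((\alpha-\beta)\harmo+1)/((\alpha+\beta)\harmo-1)$ follow from elementary sign and monotonicity information: $G(0) = 2/((\alpha+\beta)\harmo-1) > 0$ (using $(\alpha+\beta)\harmo > 1$ for $n$ large), whereas direct evaluation gives
\[
G(\mu_0) \;=\; (1-\mu_0)^{\alpha-\beta}\cdot \frac{-2(\alpha-\beta)\harmo}{(\alpha+\beta)\harmo-1} \;\leq\; 0
\]
since $\alpha\geq\beta$. Monotone decrease of $G$ on $(0,\mu_0)$ is then obtained from a sign calculation on $G'$ that uses $\alpha\geq\beta\geq 1$, yielding a unique root which is automatically a maximum of $\asyf_{\alpha,\beta,n}$.

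The main obstacle I expect is rigorously establishing the uniform $(1+O(1/n))$ error. The endpoint contributions $\mytexta$ and $\mytextb$ in \eqref{eq:gensurp} each involve a single $Q^{n-2}$ term of order $1/n$, which individually is comparable to any single summand of the middle $\Theta(n)$-term sum; one must show their combined contribution is only $O(1/n)$ relative to the middle sum's $\Theta(1)$ total. This in turn needs a Stirling bound uniform in $\mu$ on compact subintervals of $(0,1)$, together with a separate treatment of the regimes $\mu\to 0$ and $\mu\to 1$ where $L_{n-1}$ and $U_{n-1}$ nearly collide and the Riemann approximation degenerates. Once this uniform control is in place, the derivative-based argument above determines $\mu^*$ cleanly in all three cases.
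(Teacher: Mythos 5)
Your overall route matches the paper's: extract the $\Theta(1/n)$ pointwise approximations of the $Q^{n-1}_j,Q^{n-2}_j$ terms at $j=(1\pm\mu)n/2$ (the paper does this via rising-factorial/Gautschi bounds rather than Stirling, but the conclusion is the same), turn the middle $\Theta(n)$-term sum into the Riemann integral of $2\betadens{\theta}\theta(1-\theta)$, assemble $\asyf_{\alpha,\beta,n}$, and then reduce the critical-point analysis to the sign and zeros of $G$. Two remarks, one substantive.

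The substantive problem is your claim that ``$G$ is monotone decreasing on $(0,\mu_0)$.'' This is false in general. Writing $w=\alpha-\beta$, $\ell=\frac{w\harmo+1}{(\alpha+\beta)\harmo-1}$, $f=\frac{w\harmo-1}{(\alpha+\beta)\harmo-1}$, one has $G'(0)=w(\ell+f)-2=\frac{2w^2\harmo}{(\alpha+\beta)\harmo-1}-2$, which is strictly positive whenever $w^2>\alpha+\beta-1/\harmo$; e.g.\ $\alpha=4,\beta=1$ for $n$ large. So $G$ can be increasing near $0$, and a direct monotonicity argument collapses exactly in the skewed regime where the near-certain case lives. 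The paper's Lemma~\ref{lem:gu} instead passes to $g(\mu)=w\ln\frac{1+\mu}{1-\mu}-\ln\frac{f+\mu}{\ell-\mu}$, shows $g'(\mu)=0$ is a quadratic with at most one root in $[0,1]$, hence $g$ (and so the sign of $G$) is either decreasing or unimodal on $(0,\ell)$; combined with $g(0)>0$ and $g(\mu)\to-\infty$ as $\mu\to\ell^-$ this still gives a unique zero. You need this ``at most one sign change of the derivative'' step, not monotonicity, and your sketch does not supply it.

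A smaller point: your concern about uniform error control as $\mu\to0$ or $\mu\to1$ is unnecessary for the theorem as stated. The paper proves $\E[\Delta_{\mathcal B}(\mu n)]=\asyf_{\alpha,\beta,n}(\mu)(1+O(1/n))$ for each fixed $\mu\in(0,1)$, so the constants in the $O(1/n)$ are allowed to depend on $\mu$ (through factors like $1/(1-\mu)$); the endpoint-degeneracy issue you flag never arises. The rest of your reduction — derivative $\propto G(\mu)$ with positive constant, linearization of $G$ when $\alpha=\beta$, and the evaluation $G(\mu_0)=(1-\mu_0)^{\alpha-\beta}\cdot\frac{-2(\alpha-\beta)\harmo}{(\alpha+\beta)\harmo-1}\le 0$ — is correct and agrees with the paper.
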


\begin{figure}[!ht]\centering
  \subfigure[$\alpha-\beta=0$]{\includegraphics[width=.24\linewidth]{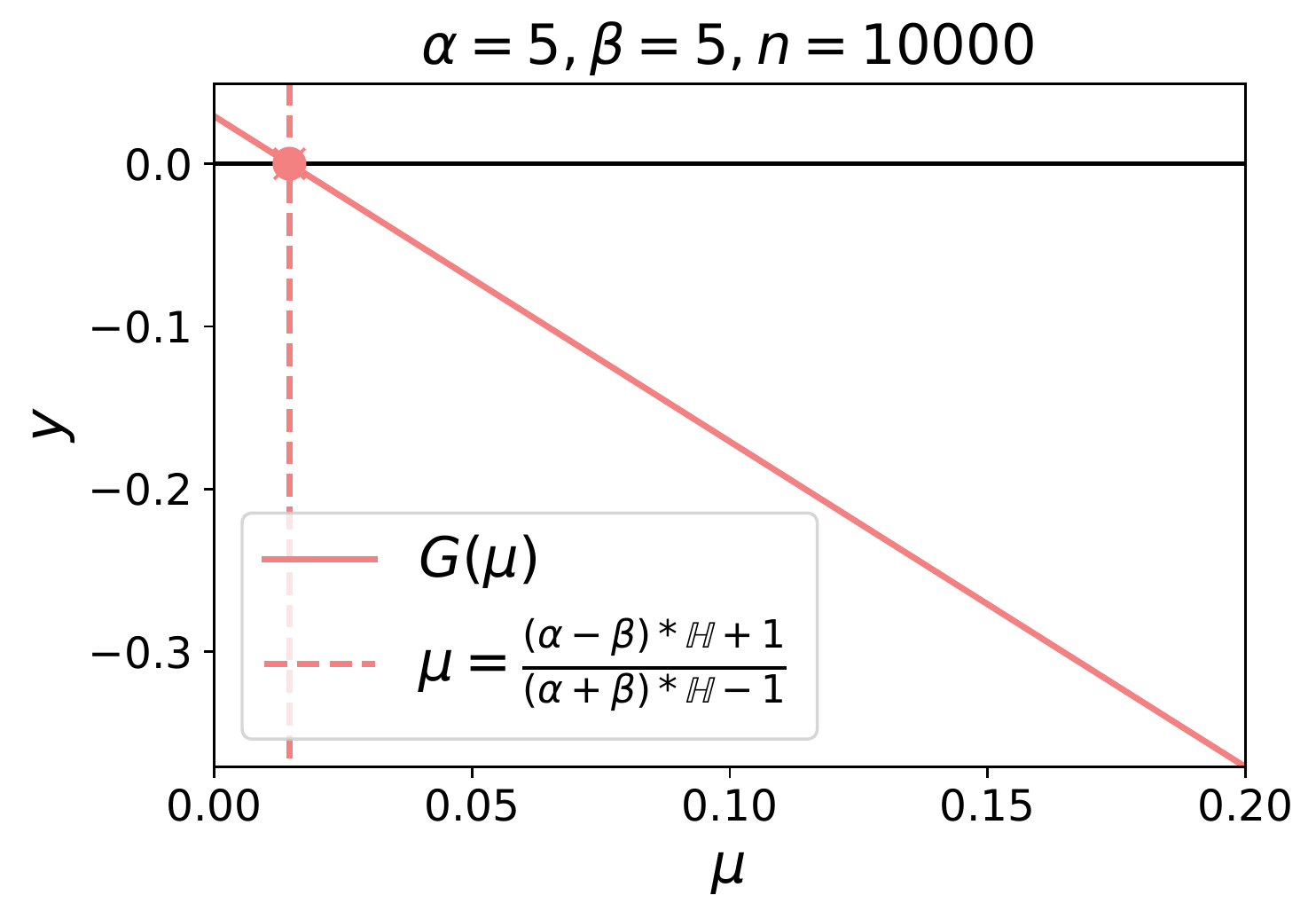}}
  \subfigure[$\alpha-\beta=2$]{\includegraphics[width=.24\linewidth]{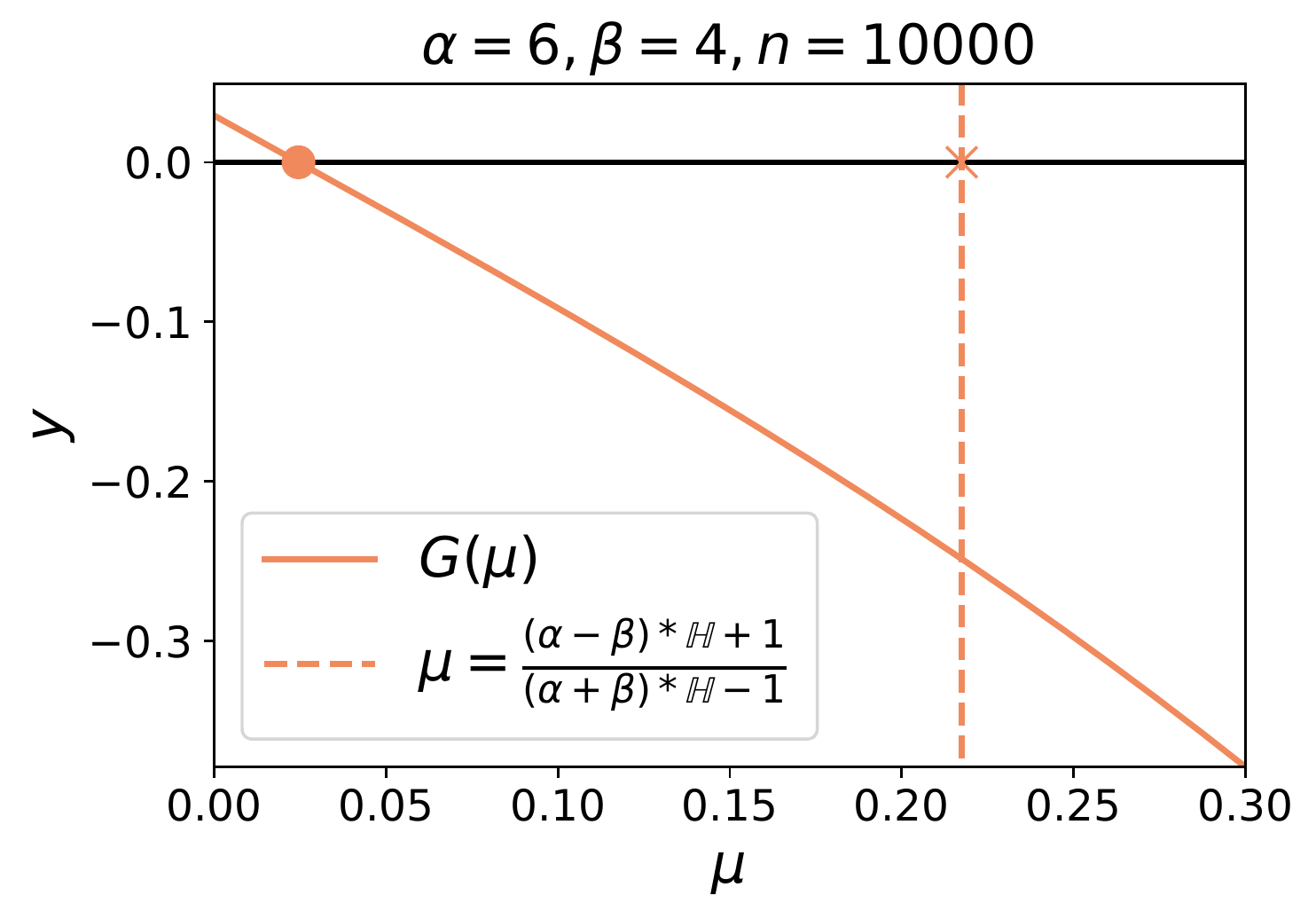}}
  \subfigure[$\alpha-\beta=4$]{\includegraphics[width=.24\linewidth]{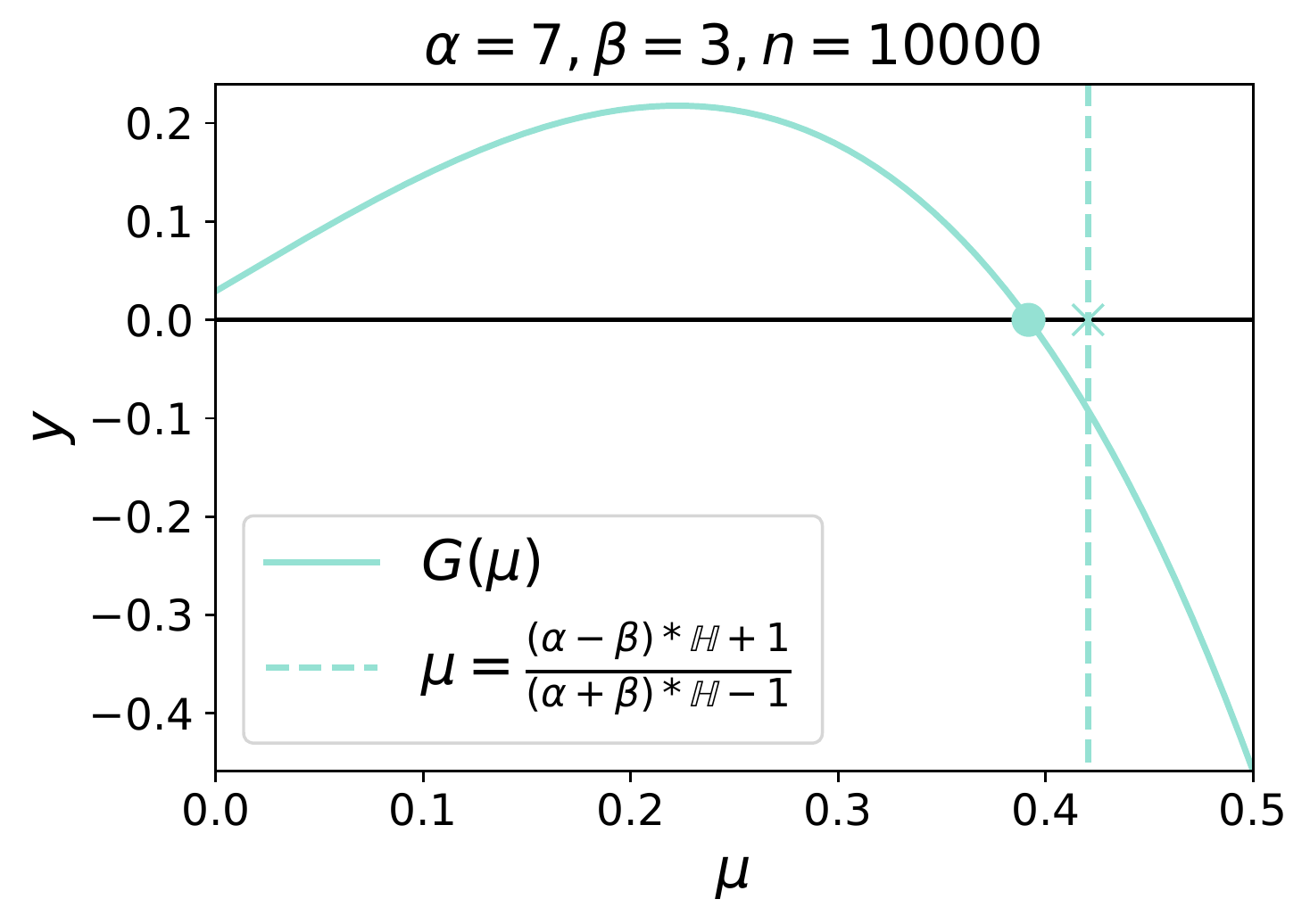}}
  \subfigure[$\alpha-\beta=6$]{\includegraphics[width=.24\linewidth]{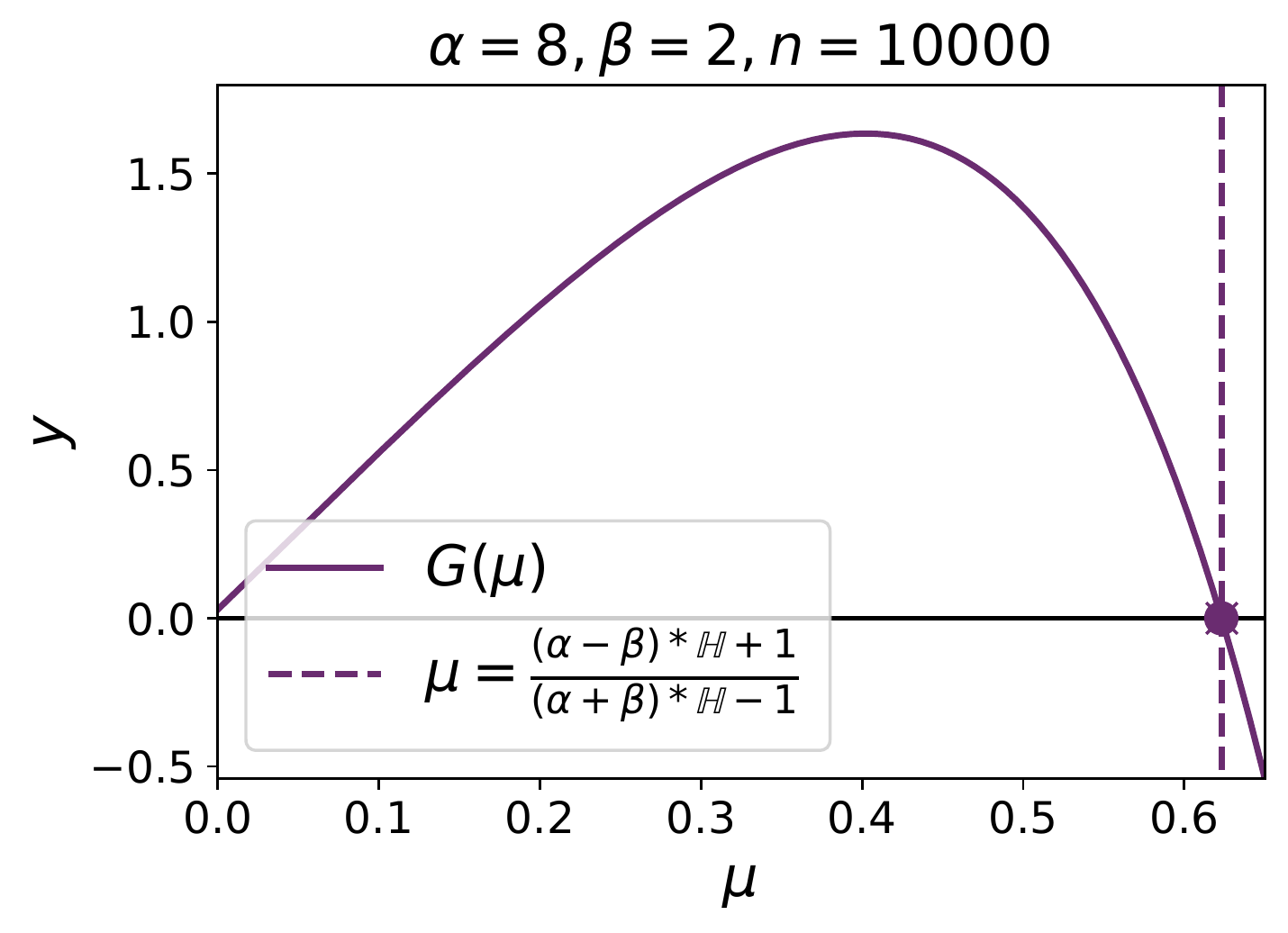}}
  \caption{\textbf{Examples for function G}}
  \label{fig:G_example}
\end{figure}

\begin{figure}[!ht]\centering
  \includegraphics[width=1\linewidth]{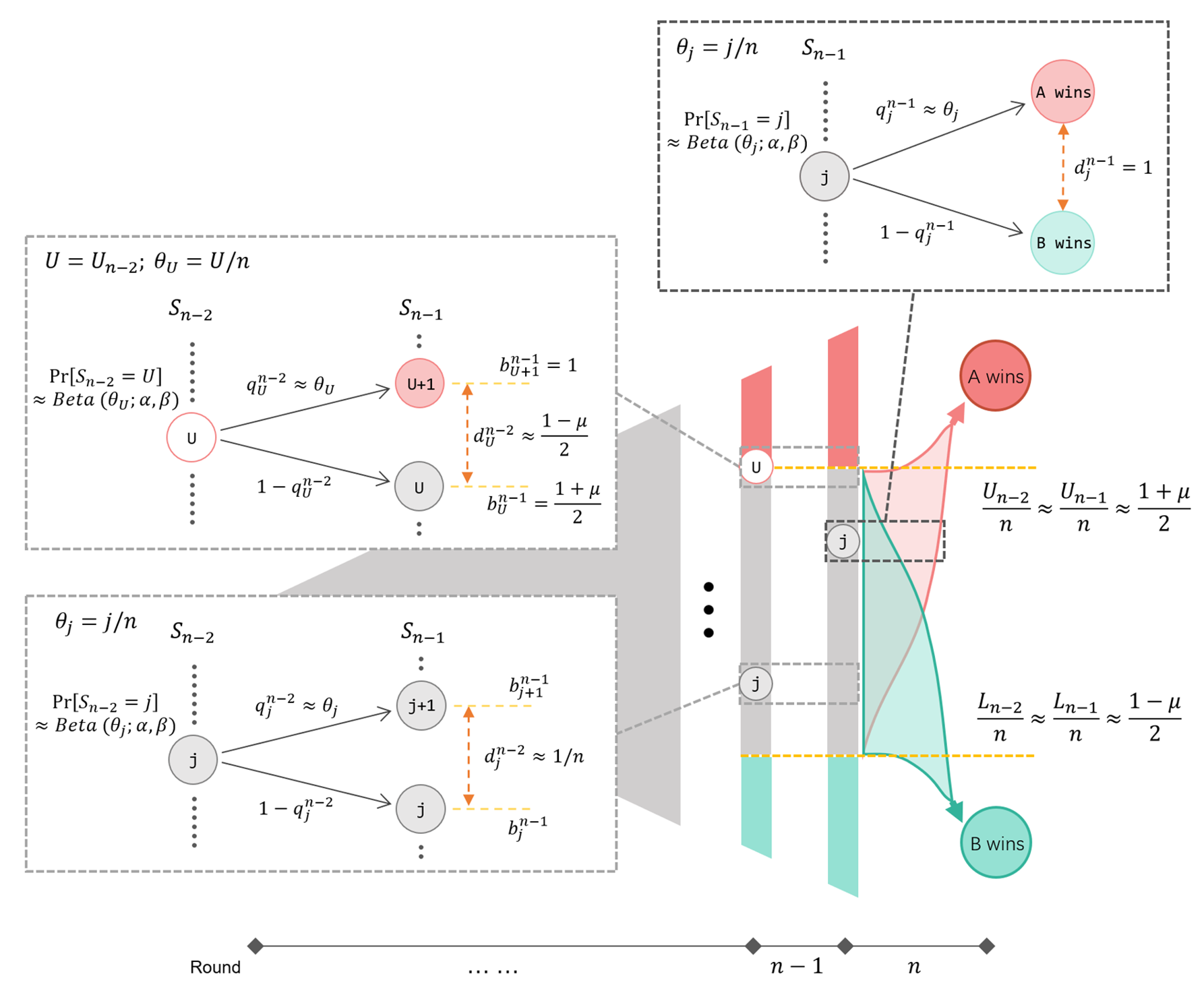}
  \caption{\textbf{Asymptotic case}}
  \label{fig:asymptotic}
\end{figure}

We define $\betadens{\theta}$ as the density function of Beta distribution, i.e. $\betadens{\theta}=\frac1{B(\alpha,\beta)}\theta^{\alpha-1}(1-\theta)^{\beta-1}$.\footnote{$B(\alpha,\beta)$ is beta function, i.e.  $B(\alpha,\beta)=\frac{\Gamma(\alpha)\Gamma(\beta)}{\Gamma(\alpha+\beta)}$} In the asymptotic case when $n$ is sufficiently large, we can simplify Lemma~\ref{lem:generald} to obtain Lemma~\ref{lem:asymtotic}, illustrated in Figure~\ref{fig:asymptotic}. The simplification for $d,L,U$ is straightforward. For $\Pr[\state_{n-1}=j]$, note that informally, due to law of large number, when $n$ is sufficiently large, the state in the $n-1$ or $n-2$ round concentrate to $p n$. Thus, given that $p$ follows distribution $\betadens{\theta}$, we have $\Pr[\state_{n-1}=j]\approx \Pr[\state_{n-2}=j]$ which are approximately proportional to $\betadens{\theta_j}$, where $\theta_j=\frac{j}n$

\begin{lemma}[Informal]
When $n$ is sufficiently large, fixed $\alpha,\beta,\mu\in(0,1)$, we have
\[
\begin{cases}
\frac{L_{n-2}}{n}\approx \frac{L_{n-1}}{n}\approx \frac{1-\mu}2\\
\frac{U_{n-2}}{n}\approx \frac{U_{n-1}}{n}\approx \frac{1+\mu}2\\
\end{cases}
\]
For any $L_{n-1}\leq j\leq U_{n-1}$, let $\theta_j=\frac{j}{n}$, we have
\[
\Pr[\state_{n-1}=j]\approx \Pr[\state_{n-2}=j]\approx \frac{\betadens{\theta_j}}n
\]
\[
q^{n-1}_j\approx q^{n-2}_j\approx \theta_j
\]
and
\[
d^{n-2}_j\approx
\begin{cases}
0,&\theta_j<\frac{L_{n-2}}{n}\\
\frac{1-\mu}2,&\theta_j=\frac{L_{n-2}}{n}\\
\frac{1}{n},&\frac{L_{n-2}}{n}<\theta_j<\frac{U_{n-2}}{n}\\
\frac{1-\mu}2,&\theta_j=\frac{U_{n-2}}{n}\\
0,&\theta_j>\frac{U_{n-2}}{n}
\end{cases}
\qquad
d^{n-1}_j=
\begin{cases}
0,&\theta_j<\frac{L_{n-1}}n\\
1,&\frac{L_{n-1}}n\leq \theta_j\leq\frac{U_{n-1}}n\\
0,&\theta_j>\frac{U_{n-1}}n
\end{cases}
\]
\label{lem:asymtotic}
\end{lemma}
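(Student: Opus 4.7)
The plan is to derive each approximation by substituting the exact formulas of Lemma~\ref{lem:generald} and expanding to leading order as $n\to\infty$ with $\alpha,\beta\geq 1$ and $\mu=x/n\in(0,1)$ held fixed. I split the task into three groups: (i) the endpoints $L_{n-1},U_{n-1},L_{n-2},U_{n-2}$, (ii) the values $q^i_j$ and $d^i_j$, and (iii) the state probabilities $\Pr[\state_i=j]$.

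For (i), the claim follows directly from the definitions $L_{n-1}=(n-x)/2$, $U_{n-1}=(n+x-2)/2$, $L_{n-2}=(n-x-2)/2$, $U_{n-2}=(n+x-2)/2$; dividing by $n$ and substituting $\mu=x/n$ yields $(1-\mu)/2+O(1/n)$ and $(1+\mu)/2+O(1/n)$, with the $n-1$ and $n-2$ versions agreeing to leading order. For (ii), Lemma~\ref{lem:generald} gives $q^i_j=(j+\alpha)/(i+\alpha+\beta)$, so for $i\in\{n-1,n-2\}$ and $\theta_j=j/n$ bounded in a compact subset of $(0,1)$, I get $q^i_j=\theta_j+O(1/n)$. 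For $d^{n-2}_j$ the interior values equal exactly $1/(n+\alpha+\beta-1)=1/n+O(1/n^2)$; at the endpoints, $(L_{n-1}+\alpha)/(n+\alpha+\beta-1)$ and $(n-1-U_{n-1}+\beta)/(n+\alpha+\beta-1)$ both simplify to $(1-\mu)/2+O(1/n)$ after substituting the explicit forms of $L_{n-1}$ and $U_{n-1}$ and noting $n-1-U_{n-1}=(n-x)/2$. The values of $d^{n-1}_j$ in Lemma~\ref{lem:generald} are already exact ($0$ or $1$) and require no approximation.

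The main work lies in step (iii): showing $\Pr[\state_i=j]=\betadens{\theta_j}/n\cdot(1+O(1/n))$ uniformly for $L_{n-1}\leq j\leq U_{n-1}$ and $i\in\{n-1,n-2\}$. Recognizing $\Pr[\state_i=j]$ as a Beta--Binomial pmf (equivalently, rewriting Lemma~\ref{lem:generald} in terms of Gamma functions via $\binom{n}{k}=\Gamma(n+1)/[\Gamma(k+1)\Gamma(n-k+1)]$), I apply the standard ratio asymptotic $\Gamma(z+a)/\Gamma(z+b)=z^{a-b}(1+O(1/z))$ to the three factors
\[
\frac{\Gamma(j+\alpha)}{\Gamma(j+1)}=j^{\alpha-1}(1+O(1/j)),\quad \frac{\Gamma(i-j+\beta)}{\Gamma(i-j+1)}=(i-j)^{\beta-1}(1+O(1/(i-j))),\quad \frac{\Gamma(i+1)}{\Gamma(i+\alpha+\beta)}=i^{-(\alpha+\beta-1)}(1+O(1/i)).
\]
Multiplying these together with the residual constant $\Gamma(\alpha+\beta)/[\Gamma(\alpha)\Gamma(\beta)]=1/B(\alpha,\beta)$ gives $(1/i)(j/i)^{\alpha-1}(1-j/i)^{\beta-1}/B(\alpha,\beta)\cdot(1+O(1/n))$, which is exactly $\betadens{\theta_j}/n$ up to the stated error. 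The calculation is identical for $i=n-1$ and $i=n-2$, since the two only differ by a multiplicative $1+O(1/n)$.

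The main obstacle is controlling the Gamma-ratio error uniformly in $j$ over $[L_{n-1},U_{n-1}]$. For $\mu$ strictly inside $(0,1)$, each $\theta_j$ lies in the compact sub-interval $[(1-\mu)/2,(1+\mu)/2]\subset(0,1)$, so $j$ and $i-j$ are both $\Theta(n)$ and every relative error above is uniformly $O(1/n)$. This uniformity is what will later license converting the discrete sums in \eqref{eq:symmainali} into Riemann integrals against $\betadens{\theta}$. Since the lemma is explicitly labeled informal, I would not chase constants further; the downstream formal statement in Theorem~\ref{the:asymptotic} only requires each ``$\approx$'' to be read as equality up to a multiplicative $1+O(1/n)$ factor uniform over the stated range of $j$.
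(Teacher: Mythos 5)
Your proposal matches the paper's proof essentially step for step: direct substitution for the endpoints and for $q^i_j$, $d^i_j$, and a Gamma-function-ratio argument for $\Pr[\state_i=j]/(\betadens{\theta_j}/n)$ followed by a ratio comparison to carry the result from $i=n-1$ to $i=n-2$. The only cosmetic difference is that you invoke the asymptotic $\Gamma(z+a)/\Gamma(z+b)=z^{a-b}(1+O(1/z))$ as a standard fact, whereas the paper re-derives it (as the claim $x^{\overline{t}}/x^t=1+O(1/n)$, handling non-integer exponents via Gautschi's inequality); both are correct.
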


Recall the general formula~\eqref{eq:general} here:

\begin{align*}
\E[\Delta_\belset(x)]   =& \bigg(\overbrace{\Pr[\state_{n-2}=L_{n-2}]*2 q^{n-2}_{L_{n-2}}*(1-q^{n-2}_{L_{n-2}})*d_{L_{n-2}}^{n-2}}^\mytexta\notag\\

&+\overbrace{\Pr[\state_{n-2}=U_{n-2}]*2q^{n-2}_{U_{n-2}}*(1-q^{n-2}_{U_{n-2}})*d_{U_{n-2}}^{n-2}}^\mytextb\notag\\

&+\overbrace{\sum_{j=L_{n-2}+1}^{U_{n-2}-1}\Pr[\state_{n-2}=j]*2q^{n-2}_{j}*(1-q^{n-2}_{j})*d_{j}^{n-2}}^\mytextm\bigg)*(n+\alpha+\beta-2)*\harmo\notag\\

&+\overbrace{\sum_{j=L_{n-1}}^{U_{n-1}}\Pr[\state_{n-1}=j]*2q^{n-1}_{j}*(1-q^{n-1}_{j})*d_{j}^{n-1}}^\mytextn 
\end{align*}

By substituting Lemma~\ref{lem:asymtotic} into formula~\eqref{eq:general}, we can obtain an approximation for the overall expected surprise $\E[\Delta_{\mathcal{B}}(\mu*n)]$:

\begin{align}
&\asyf_{\alpha,\beta,n}(\mu)\notag\\
:=&\bigg(\overbrace{\frac{\betadens{\frac{1-\mu}{2}}}n*2*\frac{1+\mu}{2}*\frac{1-\mu}{2}*\frac{1-\mu}{2}}^\mytexta+\overbrace{\int_{\frac{1-\mu}2}^{\frac{1+\mu}2}\betadens{\theta}*2*\theta*(1-\theta)*\frac{1}{n}d\theta}^\mytextm\notag\\
                    &+\underbrace{\frac{\betadens{\frac{1+\mu}{2}})}n*2*\frac{1+\mu}{2}*\frac{1-\mu}{2}*\frac{1-\mu}{2}}_\mytextb\bigg)*n\harmo+\underbrace{\int_{\frac{1-\mu}2}^{\frac{1+\mu}2}\betadens{\theta}*2*\theta*(1-\theta)d\theta}_\mytextn\notag\\
                    =&\left(\int_{\frac{1-\mu}2}^{\frac{1+\mu}2}2\betadens{\theta}\theta(1-\theta)d\theta\right)*(\harmo+1)\\
                    &+\left(\betadens{\frac{1+\mu}{2}})+\betadens{\frac{1-\mu}{2}})\right)\frac{(1-\mu)^2(1+\mu)\harmo}{4}\notag
\end{align}

Formally, we need a formal version of Lemma~\ref{lem:asymtotic} to delicately analyze the relationship between $\asyf_{\alpha,\beta,n}(\mu)$ and $\E[\Delta_{\mathcal{B}}(\mu*n)]$ and prove the main result of Theorem~\ref{the:asymptotic}. We defer the formal version of Lemma~\ref{lem:asymtotic} and the proof to appendix.

We then analyze the property of $\asyf_{\alpha,\beta,n}(\mu)$.

In order to find the optimal $\mu$, we calculate the derivation of $\asyf_{\alpha,\beta,n}(\mu)$
\begin{align*}
&\frac{d\asyf_{\alpha,\beta,n}(\mu)}{d\mu}\\ =&\left(\betadens{\frac{1+\mu}{2}}+\betadens{\frac{1-\mu}{2}}\right)\frac{1-\mu^2}{4}*(\harmo+1)\\ 
                    &+\frac{d\left(\betadens{\frac{1+\mu}{2}}+\betadens{\frac{1-\mu}{2}}\right)}{d\mu}*\frac{(1-\mu)^2(1+\mu)\harmo}{4}\\
                    &+\left(\betadens{\frac{1+\mu}{2}}+\betadens{\frac{1-\mu}{2}}\right)*\frac{d((1-\mu)^2(1+\mu))}{d\mu}*\frac{\harmo}{4}\\
                    =&\left(\betadens{\frac{1+\mu}{2}}+\betadens{\frac{1-\mu}{2}}\right)\frac{1-\mu^2}{4}*(\harmo+1)\\ 
                    &+\frac{\left((\alpha-\beta-(\alpha+\beta-2)\mu)\betadens{\frac{1+\mu}{2}}+(-\alpha+\beta-(\alpha+\beta-2) \mu )\betadens{\frac{1-\mu}{2}}\right)(1-\mu)\harmo}{4}\\ 
                    &+\left(\betadens{\frac{1+\mu}{2}}+\betadens{\frac{1-\mu}{2}}\right)\frac{(1-\mu)(-3\mu-1)\harmo}{4}\\
                    =&\frac{(1-\mu)\betadens{\frac{1+\mu}{2}}}{4}\left(-((\alpha+\beta)\harmo-1)\mu+(\alpha-\beta)\harmo+1\right)\tag{Combining like terms}\\
                    &+\frac{(1-\mu)\betadens{\frac{1-\mu}{2}}}{4}\left(-((\alpha+\beta)\harmo-1)\mu+(-\alpha+\beta)\harmo+1\right)\\
                    \propto & (1+\mu)^{\alpha-1}*(1-\mu)^{\beta-1}\left(-((\alpha+\beta)\harmo-1)\mu+(\alpha-\beta)\harmo+1\right)\tag{Substituting density of Beta distribution}\\
                    &+(1-\mu)^{\alpha-1}*(1+\mu)^{\beta-1}\left(-((\alpha+\beta)\harmo-1)\mu+(-\alpha+\beta)\harmo+1\right)\\ \tag{$\mu<1$}
                    \propto & (1+\mu)^{\alpha-\beta}\left(-((\alpha+\beta)\harmo-1)\mu+(\alpha-\beta)\harmo+1\right)\\
                    &+(1-\mu)^{\alpha-\beta}\left(-((\alpha+\beta)\harmo-1)\mu+(-\alpha+\beta)\harmo+1\right)\\
                    \propto & (1+\mu)^{\alpha-\beta}\left(\frac{(\alpha-\beta)\harmo+1}{(\alpha+\beta)\harmo-1}-\mu\right)+(1-\mu)^{\alpha-\beta}\left(\frac{(-\alpha+\beta)\harmo+1}{(\alpha+\beta)\harmo-1}-\mu\right)\\
\end{align*}

Let \[G(\mu):=(1+\mu)^{\alpha-\beta}\left(\frac{(\alpha-\beta)\harmo+1}{(\alpha+\beta)\harmo-1}-\mu\right)+(1-\mu)^{\alpha-\beta}\left(\frac{(-\alpha+\beta)\harmo+1}{(\alpha+\beta)\harmo-1}-\mu\right)\] 

The examples of $G(\mu)$ are illustrated in Figure~\ref{fig:G_example}. 

\begin{lemma}[Property of $G(\mu)$]\label{lem:gu}
For all $\alpha\geq \beta$, when $n$ is sufficiently large, $G(0)>0$ and $G(1)<0$. $G(\mu)=0,\mu\in[0,1]$ has a unique solution and the solution is in $(0,\frac{(\alpha-\beta)\harmo+1}{(\alpha+\beta)\harmo-1})$. 

Moreover, for all $0<\epsilon< \frac{(\alpha-\beta)\harmo+1}{(\alpha+\beta)\harmo-1}$, when $\alpha-\beta>\frac{\log (\frac{2(\alpha-\beta)\harmo}{(\alpha+\beta)\harmo-1}-\epsilon)-\log \epsilon}{\log (1+\frac{(\alpha-\beta)\harmo+1}{(\alpha+\beta)\harmo-1}-\epsilon)-\log (1-(\frac{(\alpha-\beta)\harmo+1}{(\alpha+\beta)\harmo-1}-\epsilon))) } \footnote {This is approximately $\frac{\log\frac{\alpha-\beta}{\alpha+\beta}+\log\frac{1}{\epsilon}}{\log \alpha-\log \beta }<\frac{1}{\log \alpha-\log \beta } \log\frac{1}{\epsilon} $}$, the solution is within $(\frac{(\alpha-\beta)\harmo+1}{(\alpha+\beta)\harmo-1}-\epsilon,\frac{(\alpha-\beta)\harmo+1}{(\alpha+\beta)\harmo-1})\approx (\frac{\alpha-\beta}{\alpha+\beta}-\epsilon, \frac{\alpha-\beta}{\alpha+\beta})$. 
\end{lemma}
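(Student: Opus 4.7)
Set $\mu_0 := \frac{(\alpha-\beta)\harmo+1}{(\alpha+\beta)\harmo-1}$ and $\mu_1 := \frac{-(\alpha-\beta)\harmo+1}{(\alpha+\beta)\harmo-1}$, so that $G(\mu) = (1+\mu)^{\alpha-\beta}(\mu_0-\mu) + (1-\mu)^{\alpha-\beta}(\mu_1-\mu)$. Observe the identities $\mu_0+\mu_1 = \frac{2}{(\alpha+\beta)\harmo-1}$ and $\mu_0-\mu_1 = \frac{2(\alpha-\beta)\harmo}{(\alpha+\beta)\harmo-1}$, both positive for $n$ large (since $\harmo = \Theta(\log n) \to \infty$), and $\mu_0 < 1$ in the limit (it tends to $\frac{\alpha-\beta}{\alpha+\beta}$).

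\textbf{Step 1: Boundary evaluations and existence.} Plugging in gives $G(0) = \mu_0+\mu_1 = \frac{2}{(\alpha+\beta)\harmo-1} > 0$, and $G(1) = 2^{\alpha-\beta}(\mu_0-1) < 0$ once $n$ is large enough that $\mu_0 < 1$. The first summand of $G$ vanishes at $\mu = \mu_0$, leaving $G(\mu_0) = (1-\mu_0)^{\alpha-\beta}(\mu_1-\mu_0) \le 0$, strict when $\alpha > \beta$. (The case $\alpha = \beta$ reduces $G$ to the linear function $2(\mu_0 - \mu)$ with unique zero $\mu^* = \mu_0$, matching the symmetric formula; henceforth assume $\alpha > \beta$.) Moreover, for $\mu \in (\mu_0, 1]$ both summands of $G$ are non-positive and the first is strictly negative, so $G < 0$ on $(\mu_0, 1]$, confining any zero to $(0, \mu_0]$. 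The intermediate value theorem applied to $G(0) > 0 > G(\mu_0)$ then produces at least one zero in $(0, \mu_0)$.

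\textbf{Step 2: Uniqueness (the main obstacle).} On $(0, \mu_0)$, $G(\mu) = 0$ is equivalent, after dividing by $(1-\mu)^{\alpha-\beta}(\mu_0-\mu) > 0$, to
\begin{equation*}
\left(\frac{1+\mu}{1-\mu}\right)^{\alpha-\beta} \;=\; \frac{\mu-\mu_1}{\mu_0-\mu},
\end{equation*}
an equation between two strictly increasing functions of $\mu$: the left side rises from $1$ to a finite limit, the right side from $-\mu_1/\mu_0 < 1$ up to $+\infty$ as $\mu \to \mu_0^-$. To preclude multiple crossings I would evaluate $G'$ at any hypothetical zero $\mu^*$. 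Writing $C := (1+\mu^*)^{\alpha-\beta}(\mu_0-\mu^*) = (1-\mu^*)^{\alpha-\beta}(\mu^*-\mu_1) > 0$ and substituting back, the derivative collapses to
\begin{equation*}
G'(\mu^*) \;=\; C\left[\frac{2(\alpha-\beta)}{1-(\mu^*)^2} - \frac{1}{\mu_0-\mu^*} - \frac{1}{\mu^*-\mu_1}\right].
\end{equation*}
Since $G(0) > 0$ and $G(\mu_0) < 0$, the smallest zero has $G'(\mu^*) \le 0$; any second zero would require $G$ to bounce back up, forcing a zero with $G' \ge 0$ followed by another with $G' \le 0$. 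Ruling this out reduces to showing the bracketed expression cannot change sign in the critical way across candidate zeros, which I expect to follow from combining the structural identity $C(\mu_0-\mu^*)^{-1} = (1+\mu^*)^{\alpha-\beta}$ with elementary bounds on $\mu^*$ in $(0,\mu_0)$. This monotonicity-at-zeros argument is the most technically delicate step of the proof.

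\textbf{Step 3: Localization near $\mu_0$.} To sharpen $\mu^* \in (\mu_0-\epsilon, \mu_0)$ under the stated hypothesis on $\alpha-\beta$, evaluate
\begin{equation*}
G(\mu_0-\epsilon) \;=\; (1+\mu_0-\epsilon)^{\alpha-\beta}\,\epsilon \;-\; (1-\mu_0+\epsilon)^{\alpha-\beta}(\mu_0-\mu_1-\epsilon).
\end{equation*}
Positivity of $G(\mu_0-\epsilon)$ is equivalent to
\begin{equation*}
(\alpha-\beta)\bigl[\log(1+\mu_0-\epsilon) - \log(1-\mu_0+\epsilon)\bigr] \;>\; \log(\mu_0-\mu_1-\epsilon) - \log\epsilon,
\end{equation*}
and substituting $\mu_0-\mu_1 = \frac{2(\alpha-\beta)\harmo}{(\alpha+\beta)\harmo-1}$ recovers exactly the lower bound on $\alpha-\beta$ in the lemma statement. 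Combined with $G(\mu_0) < 0$ and continuity, this forces the unique zero into $(\mu_0-\epsilon, \mu_0)$. The asymptotic $\approx(\frac{\alpha-\beta}{\alpha+\beta}-\epsilon,\frac{\alpha-\beta}{\alpha+\beta})$ follows since $\frac{(\alpha-\beta)\harmo+1}{(\alpha+\beta)\harmo-1} \to \frac{\alpha-\beta}{\alpha+\beta}$ as $n\to\infty$.
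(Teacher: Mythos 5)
Your Steps 1 and 3 are correct and essentially coincide with the paper's argument: the boundary values $G(0)=\mu_0+\mu_1>0$, $G(1)=2^{\alpha-\beta}(\mu_0-1)<0$, confinement of zeros to $(0,\mu_0)$, and the localization condition obtained by requiring $G(\mu_0-\epsilon)>0$ all match (the paper works with the logarithmic form $g(\mu):=(\alpha-\beta)\ln\frac{1+\mu}{1-\mu}-\ln\frac{\mu-\mu_1}{\mu_0-\mu}$, but $G>0\Leftrightarrow g>0$ on $(0,\mu_0)$, so this is the same computation).

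The gap is in Step 2, and you acknowledge it yourself (``which I expect to follow'', ``the most technically delicate step''). Your bracketed expression
\[
\frac{2(\alpha-\beta)}{1-\mu^2}-\frac{1}{\mu_0-\mu}-\frac{1}{\mu-\mu_1}
\]
is in fact exactly $g'(\mu)$, so you have identified the right object, but you never show it behaves as required. The paper closes this by observing that $g'(\mu)=0$ is equivalent to a quadratic equation
\[
\phi(\mu):=(\ell+f-2w)\mu^2+2w(\ell-f)\mu+2\ell w f-\ell-f=0
\]
(in the paper's shorthand $w=\alpha-\beta$, $\ell=\mu_0$, $f=-\mu_1$), and then arguing by elementary case analysis on the leading coefficient $\ell+f-2w\le 0$ and on the sign of $(\alpha-\beta)^2-(\alpha+\beta)$ that $\phi$ has at most one root in $[0,1]$ for $n$ large. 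Combined with $g'(0)=2w-\frac{\ell+f}{\ell f}$ being of whichever sign and $g'(\mu)\to-\infty$ as $\mu\to\mu_0^-$, this forces $g'$ to be either always negative or positive-then-negative on $(0,\mu_0)$, hence $g$ is unimodal, hence $G$ has a unique zero. Without this quadratic reduction (or some replacement for it), your ``monotonicity-at-zeros'' sketch does not actually rule out a second crossing; the mere fact that both sides of $\bigl(\frac{1+\mu}{1-\mu}\bigr)^{\alpha-\beta}=\frac{\mu-\mu_1}{\mu_0-\mu}$ are increasing does not suffice, as you note. To complete the proof you should carry out the paper's quadratic analysis, or supply an alternative argument that $g'$ changes sign at most once on $[0,\mu_0)$.
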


We defer the proof to appendix. The above lemma implies that $\asyf_{\alpha,\beta,n}(\mu)$ first increases and then decreases. Thus, the global optimal $\mu^*$ is also the local maximum, $G(\mu)=0$'s unique solution. Thus, the results for general case in Theorem~\ref{the:asymptotic} follow from the lemma. In the symmetric case, $\alpha=\beta$, \begin{align*}
G(\mu)&=\frac{2}{(\alpha+\beta)\harmo-1}-2\mu=0\\
\Rightarrow \mu^*&=\frac{1}{2\alpha \harmo-1}.
\end{align*} The results for near certain case also directly follow from the above lemma. Therefore, we finish the proof. 

\section{Conclusion and Discussion}

In a multi-round competition, we show that we can increase the audience's overall surprise by setting a proper bonus in the final round. We further show that the optimal bonus size depends on the audience's prior and in the following settings, we obtain solutions of various forms for both the case of a finite number of rounds and the asymptotic case: 
\begin{description}
\item [Symmetric] the audience's prior belief does not lean towards any player, here we obtain a clean closed-form solution in both the finite the and asymptotic case; 
\item [Certain] the audience is a priori certain about the two players' relative abilities, here the optimal bonus is a special function's solution, and approximately and asymptotically equal to the ``expected lead'', the number of points the weaker player will need to come back in expectation;
\item [General] the optimal bonus can be obtained by a linear algorithm and, in the asymptotic case, is a special function's solution. 
\end{description}

One natural extension is to validate our theoretical predictions using field experiments. We can potentially conduct online field experiments, e.g., AB test, to examine the effectiveness of the scoring rules. Moreover, the results from field experiments could potentially capture features that might be neglected in the existing models and consequently inform the development of new theories.

Regarding the theoretical work, one future direction is to incorporate the time factor into the model since a line of psychology literature \cite{kahneman1993more,baddeley1993recency} shows that the audience may judge their experience largely on their feeling in the later part of the game (end-effect). We can generalize our analysis to factor in that surprise may be more valued in some time periods (e.g. the last round) than others. Another future direction is extending our results to the setting where we allow the scores to increase from the first round to the final round (e.g., 1,2,3,4,.....), or the bonus size depends on the results of previous rounds, or the setting where the game may not end in a pre-determined fixed time (e.g. Quidditch), or even a more complicated game setting (e.g. board games). 

\newpage
\bibliographystyle{plainnat}
\bibliography{main.bbl}
\appendix\section{Proof of Lemma \ref{lem:generald}}
{
\renewcommand{\thetheorem}{\ref{lem:generald}}
\begin{lemma}
We have \[
q^{i}_j=\frac{j+\alpha}{i+\alpha+\beta}
\] 
\[
\Pr[\state_i=j]=\frac{(\alpha+\beta-1)\binom{i}{j}\binom{\alpha+\beta-2}{\alpha-1}}{(\alpha+\beta+i-1)\binom{i+\alpha+\beta-2}{j+\alpha-1}}
\] 
\[
Q^i_j=\frac{2(\alpha+\beta-1)\binom{i}{j}\binom{\alpha+\beta-2}{\alpha-1}}{(\alpha+\beta+i)\binom{i+\alpha+\beta}{j+\alpha}}
\]
and
\[
d^{n-2}_{j}=
\begin{cases}
0, & j<L_{n-2}\\
\frac{L_{n-1}+\alpha}{n+\alpha+\beta-1}, & j=L_{n-2}\\
\frac{1}{n+\alpha+\beta-1}, & L_{n-2}<j<U_{n-2}\\
\frac{n-1-U_{n-1}+\beta}{n+\alpha+\beta-1}, & j=U_{n-2}\\
0, & j>U_{n-2}
\end{cases}
\qquad
d^{n-1}_{j}=
\begin{cases}
0, & j<L_{n-1}\\
1, & L_{n-1}\leq j\leq U_{n-1}\\
0, & j>U_{n-1}
\end{cases}
\]
\end{lemma}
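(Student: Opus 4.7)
The plan is to compute each of the four quantities $q^i_j$, $\Pr[\state_i=j]$, $Q^i_j$, and the belief increments $d^{n-1}_j, d^{n-2}_j$ in sequence, leveraging the Beta-conjugacy (Claim~\ref{cla:beta} and Corollary~\ref{cor:beta}) so that we only ever need to work with the state $\state_i$ rather than the full history.

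First I would handle $q^i_j$. By Claim~\ref{cla:beta}, the posterior of $p$ given any history with $j$ wins out of $i$ is $\betadis(\alpha+j,\beta+i-j)$, whose mean is $(j+\alpha)/(i+\alpha+\beta)$; Corollary~\ref{cor:beta} says this mean depends only on $\state_i=j$, giving $q^i_j=(j+\alpha)/(i+\alpha+\beta)$. Next, for $\Pr[\state_i=j]$, I would integrate the binomial pmf against the Beta prior density to obtain the standard beta-binomial formula
\[
\Pr[\state_i=j]=\binom{i}{j}\frac{B(\alpha+j,\beta+i-j)}{B(\alpha,\beta)}=\binom{i}{j}\frac{\Gamma(\alpha+j)\Gamma(\beta+i-j)\Gamma(\alpha+\beta)}{\Gamma(\alpha+\beta+i)\Gamma(\alpha)\Gamma(\beta)},
\]
and then rewrite the Gamma-function ratio as binomial coefficients using $\Gamma(a+1)=a\Gamma(a)$ to match the stated form. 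The formula for $Q^i_j=\Pr[\state_i=j]\cdot 2q^i_j(1-q^i_j)$ then follows by direct substitution and a little Gamma-function bookkeeping (using $\Gamma(j+\alpha+1)=(j+\alpha)\Gamma(j+\alpha)$ etc.) to absorb the extra $(j+\alpha)(i-j+\beta)/(i+\alpha+\beta)^2$ factor into the binomial coefficient $\binom{i+\alpha+\beta}{j+\alpha}$ in the denominator.

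The belief-increment computations are the cleanest part. In the final round, if $\state_{n-1}=j$ then the signed lead after $n-1$ rounds is $2j-n+1$, and comparing against $\pm x$ shows that whoever wins the final round wins the competition exactly when $j\in[L_{n-1},U_{n-1}]$; in that range $d^{n-1}_j=1$, and outside it both outcomes give the same winner so $d^{n-1}_j=0$. For $d^{n-2}_j=b^{n-1}_{j+1}-b^{n-1}_j$, I would first write
\[
b^{n-1}_j=\begin{cases}0,&j<L_{n-1}\\ q^{n-1}_j=\frac{j+\alpha}{n-1+\alpha+\beta},&L_{n-1}\le j\le U_{n-1}\\ 1,&j>U_{n-1}\end{cases}
\]
(the middle case is because Alice wins iff she wins the final round, with probability $\E[p\mid\state_{n-1}=j]$), and then do a case split on where $j$ and $j+1$ fall relative to $L_{n-1}=L_{n-2}+1$ and $U_{n-1}=U_{n-2}$. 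The four non-trivial cases ($j<L_{n-2}$, $j=L_{n-2}$, $L_{n-2}<j<U_{n-2}$, $j=U_{n-2}$, $j>U_{n-2}$) then give the piecewise formula: a middle increment of $1/(n+\alpha+\beta-1)$, boundary jumps of $(L_{n-1}+\alpha)/(n+\alpha+\beta-1)$ and $1-(U_{n-1}+\alpha)/(n+\alpha+\beta-1)=(n-1-U_{n-1}+\beta)/(n+\alpha+\beta-1)$, and $0$ elsewhere.

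The only mildly technical step is the Gamma-to-binomial repackaging in the $\Pr[\state_i=j]$ and $Q^i_j$ identities, which is also the only place where the footnote about non-integer $\alpha+\beta$ becomes relevant: I would verify that the stated binomial coefficients, interpreted via the $\Gamma$-generalization $\binom{n}{k}=\Gamma(n+1)/(\Gamma(k+1)\Gamma(n-k+1))$, match my integral expression term by term. Everything else is either a direct application of conjugacy or an elementary case analysis, so I expect no real obstacle.
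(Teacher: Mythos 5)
Your proposal is correct, and the only place where it takes a genuinely different route from the paper is in deriving the formula for $\Pr[\state_i=j]$. You invoke the standard beta-binomial identity
\[
\Pr[\state_i=j]=\binom{i}{j}\frac{B(\alpha+j,\beta+i-j)}{B(\alpha,\beta)},
\]
obtained by integrating the binomial pmf against the Beta prior density, and then repackage the Gamma-function ratio into the paper's binomial-coefficient form. The paper instead proves the same formula by induction on $i$: it writes down the one-step recurrence
\[
\Pr[\state_i=j]=\Pr[\state_{i-1}=j-1]\cdot\tfrac{j+\alpha-1}{i+\alpha+\beta-1}+\Pr[\state_{i-1}=j]\cdot\tfrac{i-j+\beta-1}{i+\alpha+\beta-1}
\]
(with the obvious boundary cases $j=0,i$), verifies the base case $i=0$, and pushes the closed form through by algebra. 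Your integration route is shorter and makes the marginalization over $p$ transparent, at the cost of assuming familiarity with the beta-binomial compound distribution; the paper's induction is more elementary and self-contained, and its recurrence is also what one would actually use to compute these probabilities numerically. The remaining parts of your argument --- $q^i_j$ from conjugacy, $Q^i_j$ by direct substitution, and the piecewise $b^{n-1}_j$ followed by a case split on $j$ relative to $L_{n-1}=L_{n-2}+1$ and $U_{n-1}=U_{n-2}$ for the increments --- match the paper's proof essentially step for step, including the final simplification $1-\tfrac{U_{n-1}+\alpha}{n+\alpha+\beta-1}=\tfrac{n-1-U_{n-1}+\beta}{n+\alpha+\beta-1}$.
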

\addtocounter{theorem}{-1}
}
\begin{proof}[Proof of Lemma \ref{lem:generald}]

Recall that $\state_i=j$ means Alice wins $j$ rounds in the first $i$ rounds. 

When the prior $p$ follows $\betadis(\alpha,\beta)$, the posterior $p|(\state_i=j)$ follows $\betadis(\alpha+j,\beta+i-j)$ (Claim \ref{cla:beta}). 

We can calculate the expectation of $p|(\state_{n-1}=j)$,
\[
q^{n-1}_j=\E[p|\state_{n-1}=j]=\frac{j+\alpha}{n+\alpha+\beta-1}
\]

We use induction to show that \[
\forall i,j\Pr[\state_i=j]=\frac{(\alpha+\beta-1)\binom{i}{j}\binom{\alpha+\beta-2}{\alpha-1}}{(\alpha+\beta+i-1)\binom{i+\alpha+\beta-2}{j+\alpha-1}}. 
\]

For the base case $i=0$, note that $j\leq i$ is also zero, the above equation holds naturally since \[1=\Pr[\state_0=0]=\frac{(\alpha+\beta-1)\binom{i}{j}\binom{\alpha+\beta-2}{\alpha-1}}{(\alpha+\beta+i-1)\binom{i+\alpha+\beta-2}{j+\alpha-1}}.\]

For the induction, we first show the relationship between the distribution of $\state_i$ and $\state_{i-1}$. 

\[
\Pr[\state_i=j]=
\begin{cases}
\Pr[\state_{i-1}=0]*\frac{i+\beta-1}{i+\alpha+\beta-1},&j=0\\
\Pr[\state_{i-1}=j-1]*\frac{j+\alpha-1}{i+\alpha+\beta-1}+\Pr[\state_{i-1}=j]*\frac{i-j+\beta-1}{i+\alpha+\beta-1},&0<j<i\\
\Pr[\state_{i-1}=i-1]*\frac{i+\alpha-1}{i+\alpha+\beta-1},&j=i\\
\end{cases}
\]

We assume that the equation holds for $i-1$, i.e.,  \[\forall 0\leq j\leq i-1, \Pr[\state_{i-1}=j]=\frac{(\alpha+\beta-1)\binom{(i-1)}{j}\binom{\alpha+\beta-2}{\alpha-1}}{(\alpha+\beta+(i-1)-1)\binom{(i-1)+\alpha+\beta-2}{j+\alpha-1}}\]

Then we have
\begin{align*}
\Pr[\state_i=0] =&\Pr[\state_{i-1}=0]*\frac{i+\beta-1}{i+\alpha+\beta-1}\\
                =&\frac{(\alpha+\beta-1)\binom{\alpha+\beta-2}{\alpha-1}}{(\alpha+\beta+(i-1)-1)\binom{(i-1)+\alpha+\beta-2}{\alpha-1}}*\frac{i+\beta-1}{i+\alpha+\beta-1}\\
                =&\frac{(\alpha+\beta-1)\binom{\alpha+\beta-2}{\alpha-1}}{(\alpha+\beta+i-1)\binom{i+\alpha+\beta-2}{\alpha-1}}\\
                =&\frac{(\alpha+\beta-1)\binom{i}{j}\binom{\alpha+\beta-2}{\alpha-1}}{(\alpha+\beta+i-1)\binom{i+\alpha+\beta-2}{j+\alpha-1}}\tag{Note that $j=0$}\\
\Pr[\state_i=j] =&\Pr[\state_{i-1}=j-1]*\frac{j+\alpha-1}{i+\alpha+\beta-1}+\Pr[\state_{i-1}=j]*\frac{i-j+\beta-1}{i+\alpha+\beta-1}\\
                =&\frac{(\alpha+\beta-1)\binom{(i-1)}{(j-1)}\binom{\alpha+\beta-2}{\alpha-1}}{(\alpha+\beta+(i-1)-1)\binom{(i-1)+\alpha+\beta-2}{(j-1)+\alpha-1}}*\frac{j+\alpha-1}{i+\alpha+\beta-1}+\\
                &\frac{(\alpha+\beta-1)\binom{(i-1)}{j}\binom{\alpha+\beta-2}{\alpha-1}}{(\alpha+\beta+(i-1)-1)\binom{(i-1)+\alpha+\beta-2}{j+\alpha-1}}*\frac{i-j+\beta-1}{i+\alpha+\beta-1}\\
                =&\frac{(\alpha+\beta-1)\binom{(i-1)}{(j-1)}\binom{\alpha+\beta-2}{\alpha-1}}{(\alpha+\beta+i-1)\binom{i+\alpha+\beta-2}{i+\alpha-1}}+\frac{(\alpha+\beta-1)\binom{(i-1)}{j}\binom{\alpha+\beta-2}{\alpha-1}}{(\alpha+\beta+i-1)\binom{i+\alpha+\beta-2}{\alpha-1}}\\
                =&\frac{(\alpha+\beta-1)\binom{i}{j}\binom{\alpha+\beta-2}{\alpha-1}}{(\alpha+\beta+i-1)\binom{i+\alpha+\beta-2}{j+\alpha-1}}\\
\Pr[\state_i=i] =&\Pr[\state_{i-1}=i-1]*\frac{i+\alpha-1}{i+\alpha+\beta-1}\\
                =&\frac{(\alpha+\beta-1)\binom{\alpha+\beta-2}{\alpha-1}}{(\alpha+\beta+(i-1)-1)\binom{(i-1)+\alpha+\beta-2}{i+\alpha-1}}*\frac{i+\alpha-1}{(i-1)+\alpha+\beta-1}\\
                =&\frac{(\alpha+\beta-1)\binom{\alpha+\beta-2}{\alpha-1}}{(\alpha+\beta+i-1)\binom{i+\alpha+\beta-2}{i+\alpha-1}}\\
                =&\frac{(\alpha+\beta-1)\binom{i}{j}\binom{\alpha+\beta-2}{\alpha-1}}{(\alpha+\beta+i-1)\binom{i+\alpha+\beta-2}{j+\alpha-1}}\tag{Note that $j=i$}\\
\end{align*}

Thus, the equation also holds for $i$. Therefore, we have
\[
\forall i,j\Pr[\state_i=j]=\frac{(\alpha+\beta-1)\binom{i}{j}\binom{\alpha+\beta-2}{\alpha-1}}{(\alpha+\beta+i-1)\binom{i+\alpha+\beta-2}{j+\alpha-1}}
\]

Then we can calculate $Q^i_j$
\begin{align*}
Q^i_j   &=\Pr[\state_i=j]*2*\E[p|\state_i=j]*(1-\E[p|\state_i=j])\\
        &=\Pr[\state_i=j]*2*q^i_j*(1-q^i_j)\\
        &=\frac{(\alpha+\beta-1)\binom{i}{j}\binom{\alpha+\beta-2}{\alpha-1}}{(\alpha+\beta+i-1)\binom{i+\alpha+\beta-2}{j+\alpha-1}}*2*\frac{j+\alpha}{i+\alpha+\beta}*\frac{i-j+\beta}{i+\alpha+\beta}\\
        &=\frac{2(\alpha+\beta-1)\binom{i}{j}\binom{\alpha+\beta-2}{\alpha-1}}{(\alpha+\beta+i)\binom{i+\alpha+\beta}{j+\alpha}}
\end{align*}

Recall in Section~\ref{sec:analyzegeneral}, we have defined $b^{n-1}_j=\Pr[O=1|(\state_{n-1}=j)]$  and \[d^{n-1}_j = \Pr[O=1|(\his_n=+)\wedge (S_{n-1}=j)]-\Pr[O=1|(\his_n=-)\wedge (S_{n-1}=j)].\] When $j<L_{n-1}$ ($j>U_{n-1}$), Alice must lose (win) regardless of the final round's outcome. When $L_{n-1}\leq j\leq U_{n-1}$, Alice wins if and only if she wins the final round. Therefore, 
\[
b^{n-1}_j=
\begin{cases}
0,&j<L_{n-1}\\
\frac{j+\alpha}{n+\alpha+\beta-1},& L_{n-1}\leq j\leq U_{n-1}\\
1,&j>U_{n-1}
\end{cases}
\qquad
d^{n-1}_{j}=
\begin{cases}
0, & j<L_{n-1}\\
1, & L_{n-1}\leq j\leq U_{n-1}\\
0, & j>U_{n-1}
\end{cases}
\]
Moreover, recall that $d^{n-2}_j=b^{n-1}_{j+1}-b^{n-1}_j$, thus 
\[
d^{n-2}_{j}=
\begin{cases}
0, & j<L_{n-2}\\
\frac{L_{n-1}+\alpha}{n+\alpha+\beta-1}, & j=L_{n-2}\\
\frac{1}{n+\alpha+\beta-1}, & L_{n-2}<j<U_{n-2}\\
\frac{n-1-U_{n-1}+\beta}{n+\alpha+\beta-1}, & j=U_{n-2}\\
0, & j>U_{n-2}
\end{cases}
\]

\end{proof}

\section{Formal Proof for Asymptotic Case}
{
\renewcommand{\thetheorem}{\ref{lem:asymtotic}}
\begin{lemma}[Formal]

When $n$ is sufficiently large, fixed $\alpha,\beta,\mu\in(0,1)$, we have
\[
\begin{cases}
\frac{L_{n-2}}{n}=\frac{1-\mu}2*(1+O(\frac1n))\\
\frac{L_{n-1}}{n}=\frac{1-\mu}2\\
\frac{U_{n-2}}{n}=\frac{1+\mu}2*(1+O(\frac1n))\\
\frac{U_{n-1}}{n}=\frac{1+\mu}2*(1+O(\frac1n))
\end{cases}
\]
For all $L_{n-1}\leq j\leq U_{n-1}$, let $\theta_j:=\frac{j}{n}$, we have
\[
\begin{cases}
\Pr[\state_{n-1}=j]=\frac{\betadens{\theta_j}}n*(1+O(\frac1n))\\
\Pr[\state_{n-2}=j]=\frac{\betadens{\theta_j}}n*(1+O(\frac1n))
\end{cases}
\qquad
\begin{cases}
q^{n-1}_j=\theta_j*(1+O(\frac1n))\\
q^{n-2}_j=\theta_j*(1+O(\frac1n))
\end{cases}
\]
and
\[
d^{n-2}_j=
\begin{cases}
0,&j<L_{n-2}\\
\frac{1-\mu}2*(1+O(\frac1n)),&j=L_{n-2}\\
\frac{1}{n}*(1+O(\frac1n)),&L_{n-2}<j<U_{n-2}\\
\frac{1-\mu}2*(1+O(\frac1n)),&j=U_{n-2}\\
0,&j>U_{n-2}
\end{cases}
\qquad
d^{n-1}_j=
\begin{cases}
0,&j<L_{n-1}\\
1,&L_{n-1}\leq j\leq U_{n-1}\\
0,&j>U_{n-1}
\end{cases}
\]
\end{lemma}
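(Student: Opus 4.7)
The plan is to derive each sub-claim of the formal lemma by direct substitution into the exact formulas of Lemma~\ref{lem:generald}, together with standard Gamma-ratio asymptotics for the probability claims. The easy sub-claims are geometric and algebraic; the main work is the probability asymptotics, where the only subtlety is uniformity of the error over the relevant range of $j$.

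First I would dispatch the boundary and belief statements. Since $L_{n-1}=(n-x)/2$, $U_{n-1}=(n+x-2)/2$, $L_{n-2}=(n-x-2)/2$, $U_{n-2}=(n+x-2)/2$, substituting $x=\mu n$ and dividing by $n$ immediately yields the claimed $L/n$ and $U/n$ formulas with multiplicative correction $1+O(1/n)$. The statement $d^{n-1}_j\in\{0,1\}$ is already exact in Lemma~\ref{lem:generald}. For $d^{n-2}_j$ at the boundaries, the formulas $(L_{n-1}+\alpha)/(n+\alpha+\beta-1)$ and $(n-1-U_{n-1}+\beta)/(n+\alpha+\beta-1)$ reduce to $(1-\mu)/2\cdot(1+O(1/n))$ after plugging in $L_{n-1}, U_{n-1}$ and $x=\mu n$; the middle value $1/(n+\alpha+\beta-1)=(1/n)(1+O(1/n))$ is immediate. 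For $q^{n-1}_j=(j+\alpha)/(n+\alpha+\beta-1)$ and its $n-2$ analogue, whenever $j=\Theta(n)$ we have $(j+\alpha)/(n+O(1))=\theta_j(1+O(1/n))$.

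The main step is the probability asymptotic. Using Lemma~\ref{lem:generald} and the identities $(\alpha+\beta-1)\binom{\alpha+\beta-2}{\alpha-1}=\Gamma(\alpha+\beta)/(\Gamma(\alpha)\Gamma(\beta))=1/B(\alpha,\beta)$ and $(\alpha+\beta+n-2)\Gamma(n+\alpha+\beta-2)=\Gamma(n+\alpha+\beta-1)$, I would rewrite
\[
\Pr[\state_{n-1}=j]=\frac{1}{B(\alpha,\beta)}\cdot\frac{\Gamma(j+\alpha)}{\Gamma(j+1)}\cdot\frac{\Gamma(n-j+\beta-1)}{\Gamma(n-j)}\cdot\frac{\Gamma(n)}{\Gamma(n+\alpha+\beta-1)}.
\]
Applying the standard expansion $\Gamma(z+a)/\Gamma(z+b)=z^{a-b}(1+O(1/z))$ to each of the three ratios yields $j^{\alpha-1}\cdot(n-j)^{\beta-1}\cdot n^{-(\alpha+\beta-1)}$, up to a factor $1+O(1/n)$. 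Substituting $j=\theta_j n$ and $n-j=(1-\theta_j)n$ collects everything into $\theta_j^{\alpha-1}(1-\theta_j)^{\beta-1}/n$, which is exactly $\betadens{\theta_j}/n$. Repeating the computation with $i=n-2$ in place of $i=n-1$ changes all Gamma arguments by $O(1)$, so the asymptotic form is identical.

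The main obstacle is ensuring that the implicit $O(1/n)$ in the Gamma-ratio expansion is uniform over the range $L_{n-1}\le j\le U_{n-1}$. This is what lets us plug the pointwise estimate into the summations $\sum_j Q^{n-1}_j$ and $\sum_j Q^{n-2}_j d^{n-2}_j$ from formula~\eqref{eq:general} and pull out a single $1+O(1/n)$. Uniformity holds because, for fixed $\mu\in(0,1)$, both $j$ and $n-j$ lie in $[\tfrac{(1-\mu)n}{2}-O(1),\tfrac{(1+\mu)n}{2}+O(1)]$, so both are $\Theta(n)$; hence each Gamma ratio has relative error at most $C(\alpha,\beta)/n$ uniformly. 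With this uniform estimate in hand, substituting the asymptotic forms of $\Pr[\state=j]$, $q$, and $d$ into the decomposition of $\E[\Delta_\belset(\mu n)]$ converts each discrete sum into a Riemann sum for the corresponding integral over $[(1-\mu)/2,(1+\mu)/2]$ against $\betadens{\theta}$, yielding $\E[\Delta_\belset(\mu n)]=\asyf_{\alpha,\beta,n}(\mu)(1+O(1/n))$ as required by Theorem~\ref{the:asymptotic}.
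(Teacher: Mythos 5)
Your proposal is correct and follows essentially the same route as the paper's proof: direct substitution into the exact formulas of Lemma~\ref{lem:generald} for the geometric, $q$, and $d$ claims, followed by Gamma-ratio asymptotics (with uniformity over $j=\Theta(n)$) for the probability claims. The only cosmetic difference is that you invoke $\Gamma(z+a)/\Gamma(z+b)=z^{a-b}(1+O(1/z))$ as a standard fact, whereas the paper derives the equivalent statement from scratch via rising factorials and Gautschi's inequality (its Claim~\ref{cla:rising_power}); both are valid.
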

\addtocounter{theorem}{-1}
}

\begin{proof}[Proof of Lemma \ref{lem:asymtotic}]
First recall the definition of $L_{n-2},L_{n-1},U_{n-2},U_{n-1}$,

\[
\begin{cases}
L_{n-2}=\frac{n-x-2}2\\
L_{n-1}=\frac{n-x}2\\
U_{n-2}=\frac{n+x-2}2\\
U_{n-1}=\frac{n+x-2}2\\
\end{cases}
\]

Then by substituting $\mu=\frac{x}{n}$, 
\[
\begin{cases}
\frac{L_{n-2}}{n}=\frac{1-\mu}2-\frac1n\\
\frac{L_{n-1}}{n}=\frac{1-\mu}2\\
\frac{U_{n-2}}{n}=\frac{1+\mu}2-\frac1n\\
\frac{U_{n-1}}{n}=\frac{1+\mu}2-\frac1n
\end{cases}
\]

Given fixed $\mu\in(0,1)$,

\[
\begin{cases}
\frac{L_{n-2}}{n}=\frac{1-\mu}2*(1-\frac1n\big/\frac{1-\mu}2)=\frac{1-\mu}2*(1+O(\frac1n))\\
\frac{L_{n-1}}{n}=\frac{1-\mu}2\\
\frac{U_{n-2}}{n}=\frac{1+\mu}2*(1-\frac1n\big/\frac{1+\mu}2)=\frac{1-\mu}2*(1+O(\frac1n))\\
\frac{U_{n-1}}{n}=\frac{1+\mu}2*(1-\frac1n\big/\frac{1+\mu}2)=\frac{1-\mu}2*(1+O(\frac1n))\\
\end{cases}
\]

Second, recall the formula of $\Pr[S_{n-1}=j]$ and $\Pr[S_{n-2}=j]$. 
\[
\begin{cases}
\Pr[S_{n-1}=j]=\frac{(\alpha+\beta-1)\binom{n-1}j\binom{\alpha+\beta-2}{\alpha-1}}{(\alpha+\beta+n-2)\binom{n+\alpha+\beta-3}{j+\alpha-1}}\\
\Pr[S_{n-2}=j]=\frac{(\alpha+\beta-1)\binom{n-2}j\binom{\alpha+\beta-2}{\alpha-1}}{(\alpha+\beta+n-3)\binom{n+\alpha+\beta-4}{j+\alpha-1}}\\
\end{cases}
\]

To show that $\Pr[S_{n-1}=j]=\frac{\betadens{\theta_j}}n*(1+O(\frac1n))$, we calculate the ratio between $\Pr[S_{n-1}=j]$ and $\frac{\betadens{\theta_j}}n$

\begin{align}
    &\Pr[S_{n-1}=j]\bigg/\frac{\betadens{\theta_j}}n \notag\\
    =&\frac{(\alpha+\beta-1)\binom{n-1}j\binom{\alpha+\beta-2}{\alpha-1}}{(\alpha+\beta+n-2)\binom{n+\alpha+\beta-3}{j+\alpha-1}}\bigg/\frac{\Gamma(\alpha+\beta)(\frac{j}{n})^{\alpha-1}*(\frac{n-j}{n})^{\beta-1}}{n\Gamma(\alpha)\Gamma(\beta)} \notag\\
    =&\frac{n\Gamma(n)\Gamma(j+\alpha)\Gamma(n-j+\beta-1)}{(\alpha+\beta+n-2)\Gamma(n+\alpha+\beta-2)\Gamma(j+1)\Gamma(n-j)*(\frac{j}{n})^{\alpha-1}*(\frac{n-j}{n})^{\beta-1}} \tag{Note that $(\alpha+\beta-1)\binom{\alpha+\beta-2}{\alpha-1}=\frac{\Gamma(\alpha+\beta)}{\Gamma(\alpha)\Gamma(\beta)}$, $\binom{n-1}j=\frac{\Gamma(n)}{\Gamma(j+1)\Gamma(n-j)}$,  $\binom{n+\alpha+\beta-3}{j+\alpha-1}=\frac{\Gamma(n+\alpha+\beta-2)}{\Gamma(j+\alpha)\Gamma(n-j+\beta-1)}$}\\
    =&\frac{n}{n+\alpha+\beta-2}*\frac{(j+1)^{\overline{\alpha-1}\footnotemark[18] }}{j^{\alpha-1}}*\frac{(n-j)^{\overline{\beta-1}}}{(n-j)^{\beta-1}}*\frac{n^{\alpha+\beta-2}}{n^{\overline{\alpha+\beta-2}}} \tag{Note that $\frac{\Gamma(j+\alpha)}{\Gamma(j+1)}=(j+1)^{\overline{\alpha-1}}$, $\frac{\Gamma(n-j+\beta-1)}{\Gamma(n-j)}=(n-j)^{\overline{\beta-1}}$, $\frac{\Gamma(n)}{\Gamma(n+\alpha+\beta-2)}=\frac{1}{n^{\overline{\alpha+\beta-2}}}$} \\
    =&\frac{(j+1)^{\overline{\alpha-1}}}{j^{\alpha-1}}*\frac{(n-j)^{\overline{\beta-1}}}{(n-j)^{\beta-1}}*\left(\frac{n}{n+\alpha+\beta-2}*\frac{n^{\alpha+\beta-2}}{n^{\overline{\alpha+\beta-2}}}\right)\notag\\
    =&\frac{j+\alpha-1}{j}*\frac{j^{\overline{\alpha-1}}}{j^{\alpha-1}}*\frac{(n-j)^{\overline{\beta-1}}}{(n-j)^{\beta-1}}*\frac{n^{\alpha+\beta-1}}{n^{\overline{\alpha+\beta-1}}} \notag
\end{align}

\footnotetext[21]{We define $x^{\overline{t}}$ as the continuous generalization of rising factorials, i.e. $x^{\overline{t}}=\frac{\Gamma(x+t)}{\Gamma(x)}$.}

Then we need to prove that each item of the above formula is $1+O(\frac1n)$. 

Since $L_{n-1}\leq j\leq U_{n-1}$ and $L_{n-1}=\frac{1-\mu}2*(n+O(1)),U_{n-1}=\frac{1+\mu}2*(n+O(1))$, we have $j=\Theta(n)$ and $n-j=\Theta(n)$. Therefore, $\frac{j+\alpha-1}{j}=1+O(\frac1n)$.

It's left to prove that $\frac{j^{\overline{\alpha-1}}}{j^{\alpha-1}}$, $\frac{(n-j)^{\overline{\beta-1}}}{(n-j)^{\beta-1}}$ and $\frac{n^{\alpha+\beta-1}}{n^{\overline{\alpha+\beta-1}}}$ are all $1+O(\frac1n)$. We give the following claim: 
\begin{claim}
For any $x=\Theta(n),t=\Theta(1)$, $\frac{x^{\overline{t}}}{x^t}=1+O(\frac1n)$
\label{cla:rising_power}
\end{claim}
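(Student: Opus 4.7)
The plan is to reduce the claim to the classical Gamma-ratio asymptotic
\[
\frac{\Gamma(x+t)}{\Gamma(x)} = x^{t}\bigl(1 + O(1/x)\bigr), \qquad x\to\infty,\ t \text{ fixed},
\]
since $x=\Theta(n)$ then immediately gives $x^{\overline{t}}/x^{t} = 1 + O(1/n)$, which is exactly the claim. Thus the entire task reduces to proving this asymptotic with an implicit constant that depends only on (a bound for) $t$.

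To establish the asymptotic, I will take logarithms and work with the digamma function $\psi = (\log\Gamma)'$. Writing
\[
\log\frac{\Gamma(x+t)}{\Gamma(x)} = \int_{x}^{x+t} \psi(u)\,du
\]
and substituting the standard expansion $\psi(u) = \log u - \tfrac{1}{2u} + O(1/u^{2})$, I will evaluate each piece directly. The $\log u$ integral equals $(x+t)\log(x+t) - x\log x - t$; Taylor-expanding $\log(1+t/x) = t/x - t^{2}/(2x^{2}) + O(1/x^{3})$ collapses this to $t\log x + t^{2}/(2x) + O(1/x^{2})$. The $-1/(2u)$ integral equals $-\tfrac{1}{2}\log(1 + t/x) = -t/(2x) + O(1/x^{2})$, and the $O(1/u^{2})$ remainder integrates to $O(1/x^{2})$ because the integration interval has bounded length. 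Collecting, $\log[\Gamma(x+t)/\Gamma(x)] - t\log x = t(t-1)/(2x) + O(1/x^{2})$, and exponentiating yields the desired $x^{t}\bigl(1 + O(1/x)\bigr)$. An equivalent alternative route is to apply Stirling's formula $\log\Gamma(z) = (z-\tfrac{1}{2})\log z - z + \tfrac{1}{2}\log(2\pi) + O(1/z)$ to numerator and denominator and cancel; both routes give the same conclusion.

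There will be no genuine obstacle beyond bookkeeping; the only point requiring care is uniformity of the error. On the interval $[x, x+t]$ we have $u = \Theta(n)$ since $t = \Theta(1)$ and $x = \Theta(n)$, so each remainder in the digamma expansion integrates to a quantity whose implicit constant depends only on the $\Theta(1)$-bound for $t$. Hence the $O(1/x)$ bound is genuinely $O(1/n)$, completing the proof of the claim.
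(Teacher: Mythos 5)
Your proof is correct, but it takes a genuinely different route from the paper. You establish the Gamma-ratio asymptotic $\Gamma(x+t)/\Gamma(x) = x^{t}(1 + O(1/x))$ analytically, by writing $\log[\Gamma(x+t)/\Gamma(x)] = \int_{x}^{x+t}\psi(u)\,du$ and plugging in the asymptotic expansion of the digamma function (equivalently, by Stirling); the bookkeeping even extracts the leading correction $t(t-1)/(2x)$. The paper proceeds by elementary inequalities instead: for integer $t$ it squeezes $x^{t} \leq x^{\overline{t}} = \prod_{i=1}^{t}(x+i-1) \leq (x+t-1)^{t}$, and for non-integer $t$ it interpolates between adjacent integer shifts using Gautschi's inequality $x^{1-s} < \Gamma(x+1)/\Gamma(x+s) < (x+1)^{1-s}$ for $s\in(0,1)$. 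Your approach buys a sharper, quantitatively explicit expansion and is arguably the more standard way to obtain such asymptotics; the paper's approach buys self-containedness and avoids invoking any asymptotic series for $\psi$ or $\log\Gamma$, relying on a single clean inequality in the non-integer case. Both establish the claim with the required uniformity in $t=\Theta(1)$, and your attention to the point that every $u\in[x,x+t]$ is $\Theta(n)$ correctly handles the one place where uniformity could have slipped.
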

\begin{proof}[Proof of Claim \ref{cla:rising_power}]
When $t$ is an integer, we have
\[
x^t\leq x^{\overline{t}}=\frac{\Gamma(x+t)}{\Gamma(x)}=\prod_{i=1}^t(x+i-1)\leq(x+t-1)^t,
\]
then
\[
1\leq \frac{x^{\overline{t}}}{x^t}\leq (1+\frac{t-1}{x})^t
\]
Since $1+\frac{t-1}x=1+O(\frac1n)$, then $(1+\frac{t}x)^t=(1+O(\frac1n))^t=1+O(\frac1n)$, and we have $\frac{x^{\overline{t}}}{x^t}=1+O(\frac1n)$.

When $t$ is not an integer, we need to use Gautschi's inequality (see 5.6.4 in~\citet{lozier2003nist}) \[\forall s\in(0,1), x^{1-s}<\frac{\Gamma(x+1)}{\Gamma(x+s)}<(x+1)^{1-s}.\]

Let $c$ be the smallest integer larger than $t$, then we have
\begin{align*}
x^c<x^{\overline{c}}=&\frac{\Gamma(x+c)}{\Gamma(x)}<(x+c-1)^c\tag{Due to previous analysis}\\
(x+c-1)^{c-t}<&\frac{\Gamma(x+c)}{\Gamma(x+t)}<(x+c)^{c-t}\tag{Gautschi's inequality}\\
\Rightarrow \frac{x^c}{(x+c)^{c-t}}<x^{\overline{t}}=&\frac{\Gamma(x+t)}{\Gamma(x)}<(x+c-1)^t
\end{align*}
then
\[
\frac{x}{x+c}< (\frac{x}{x+c})^{c-t}< \frac{x^{\overline{t}}}{x^t}<(1+\frac{c-1}{x})^t<(1+\frac{c-1}{x})^c
\]
Since $\frac{x}{x+c}=1+O(\frac1n)$ and $(1+\frac{c-1}x)=1+O(\frac1n)$, then $(1+\frac{c-1}{x})^c=1+O(\frac1n)$.

Thus we proved $\frac{x^{\overline{t}}}{x^t}=1+O(\frac1n)$.
\end{proof}
Based on Claim \ref{cla:rising_power}, we have $\frac{(j)^{\overline{\alpha-1}}}{j^{\alpha-1}}=1+O(\frac{1}{n})$, $\frac{(n-j)^{\overline{\beta-1}}}{(n-j)^{\beta-1}}=1+O(\frac{1}{n})$ and $\frac{n^{\alpha+\beta-1}}{n^{\overline{\alpha+\beta-1}}}=\frac1{1+O(\frac{1}{n})}=1+O(\frac1n)$

Then 
\begin{align*}
&\Pr[S_{n-1}=j]\big/\frac{\betadens{\theta_j}}n\\
=&(1+O(\frac1n))*(1+O(\frac1n))*(1+O(\frac1n))*(1+O(\frac1n))*(1+O(\frac1n))*(1+O(\frac1n))\\
=&1+O(\frac1n)
\end{align*}

In order to prove $\Pr[S_{n-2}=j]=\frac{\betadens{\theta_j}}n*(1+O(\frac1n))$, we first prove $\Pr[S_{n-2}=j]=\Pr[S_{n-1}=j]*(1+O(\frac1n))$
\begin{align*}
\Pr[S_{n-2}=j]\bigg/\Pr[S_{n-1}=j]=&\frac{(\alpha+\beta+n-3)\binom{n+\alpha+\beta-3}{j+\alpha-1}\binom{n-2}j}{(\alpha+\beta+n-2)\binom{n+\alpha+\beta-4}{j+\alpha-1}\binom{n-1}j}\\
=&\frac{\alpha+\beta+n-3}{\alpha+\beta+n-2}*\frac{n+\alpha+\beta-3}{n-j+\beta-2}*\frac{n-j-1}{n-1}\\
=&\frac{\alpha+\beta+n-3}{\alpha+\beta+n-2}*\frac{n+\alpha+\beta-3}{n-1}*\frac{n-j-1}{n-j+\beta-2}\\
=&(1+O(\frac1n))*(1+O(\frac1n))*(1+O(\frac1n))\tag{$\alpha,\beta$ are constants}\\
=&1+O(\frac1n)
\end{align*}

Since $\Pr[S_{n-1}=j]=\frac{\betadens{\theta_j}}n*(1+O(\frac1n))$, we proved $\Pr[S_{n-2}=j]=\frac{\betadens{\theta_j}}n*(1+O(\frac1n))$

Third, recall the formula of $q_j^{n-1}$ and $q_j^{n-2}$
\[
\begin{cases}
q_j^{n-1}=\frac{j+\alpha}{n+\alpha+\beta-1}\\
q_j^{n-2}=\frac{j+\alpha}{n+\alpha+\beta-2}\\
\end{cases}
\]
We have
\[
\begin{cases}
q_j^{n-1}\big/\theta_j=\frac{j+\alpha}{n+\alpha+\beta-1}\big/\frac{j}{n}=\frac{j+\alpha}j*\frac{n}{n+\alpha+\beta-1}\\
q_j^{n-2}\big/\theta_j=\frac{j+\alpha}{n+\alpha+\beta-2}\big/\frac{j}{n}=\frac{j+\alpha}j*\frac{n}{n+\alpha+\beta-2}\\
\end{cases}
\]
Thus
\[
\begin{cases}
q_j^{n-1}\big/\theta_j=(1+O(\frac1n))*(1+O(\frac1n))=1+O(\frac1n)\\
q_j^{n-2}\big/\theta_j=(1+O(\frac1n))*(1+O(\frac1n))=1+O(\frac1n)\\
\end{cases}
\Rightarrow
\begin{cases}
q^{n-1}_j=\theta_j*(1+O(\frac1n))\\
q^{n-2}_j=\theta_j*(1+O(\frac1n))
\end{cases}
\]

Finally, recall the formula of $d^{n-2}_j$
\[
d^{n-2}_j=
\begin{cases}
0, & j<L_{n-2}\\
\frac{L_{n-1}+\alpha}{n+\alpha+\beta-1}, & j=L_{n-2}\\
\frac{1}{n+\alpha+\beta-1}, & L_{n-2}<j<U_{n-2}\\
\frac{n-1-U_{n-1}+\beta}{n+\alpha+\beta-1}, & j=U_{n-2}\\
0, & j>U_{n-2}
\end{cases}
\]
We have
\[
d^{n-2}_j=
\begin{cases}
\frac{L_{n-1}+\alpha}{n+\alpha+\beta-1}=\frac{L_{n-1}}n*(1+O(\frac1n)), & j=L_{n-2}\\
\frac{1}{n+\alpha+\beta-1}=\frac1n*(1+O(\frac1n)), & L_{n-2}<j<U_{n-2}\\
\frac{n-1-U_{n-1}+\beta}{n+\alpha+\beta-1}=\frac{n-U_{n-1}}n*(1+O(\frac1n)), & j=U_{n-2}
\end{cases}
\]
By substituting the value of $L_{n-2}$ and $U_{n-2}$, we get

\[
d^{n-2}_j=
\begin{cases}
0,&j<L_{n-2}\\
\frac{1-\mu}2*(1+O(\frac1n)),&j=L_{n-2}\\
\frac{1}{n}*(1+O(\frac1n)),&L_{n-2}<j<U_{n-2}\\
\frac{1-\mu}2*(1+O(\frac1n)),&j=U_{n-2}\\
0,&j>U_{n-2}
\end{cases}
\]
\end{proof}

We use Lemma~\ref{lem:asymtotic} to prove \[\E[\Delta_\belset(n*\mu)]=\asyf_{\alpha,\beta,n}(\mu)*(1+O(\frac1n)).\]

We prove the four parts separately.

\paragraph{\mytextabig} We need to prove 
\[
\Pr[\state_{n-2}=L_{n-2}]*2 q^{n-2}_{L_{n-2}}*(1-q^{n-2}_{L_{n-2}})*d_{L_{n-2}}^{n-2}=\frac{\betadens{\frac{1-\mu}{2}}}n*2*\frac{1+\mu}{2}*\frac{1-\mu}{2}*\frac{1-\mu}{2}*(1+O(\frac1n))
\]
Recall Lemma \ref{lem:asymtotic}, we have 
\[
\begin{cases}
\Pr[\state_{n-2}=L_{n-2}]&=\frac{\betadens{\frac{1-\mu}2}}{n}*(1+O(\frac1n))\\
q_{L_{n-2}}^{n-2}&=\frac{1-\mu}2*(1+O(\frac1n))\\
1-q_{L_{n-2}}^{n-2}&=\frac{1+\mu}2*(1+O(\frac1n))\\
d_{L_{n-2}}^{n-2}&=\frac{1-\mu}2*(1+O(\frac1n))
\end{cases}
\]
By substituting the above items into the \textbf{\mytexta} part, we get
\begin{align*}
&\Pr[\state_{n-2}=L_{n-2}]*2 q^{n-2}_{L_{n-2}}*(1-q^{n-2}_{L_{n-2}})*d_{L_{n-2}}^{n-2}\\
=&\frac{\betadens{\frac{1-\mu}{2}}}n*(1+O(\frac1n))*2*\frac{1-\mu}{2}*(1+O(\frac1n))*\frac{1+\mu}{2}*(1+O(\frac1n))*\frac{1-\mu}{2}*(1+O(\frac1n))\\
=&\frac{\betadens{\frac{1-\mu}{2}}}n*2*\frac{1+\mu}{2}*\frac{1-\mu}{2}*\frac{1-\mu}{2}*(1+O(\frac1n))
\end{align*}
\paragraph{\mytextbbig}
We need to prove 
\[
\Pr[\state_{n-2}=U_{n-2}]*2 q^{n-2}_{U_{n-2}}*(1-q^{n-2}_{U_{n-2}})*d_{U_{n-2}}^{n-2}=\frac{\betadens{\frac{1+\mu}{2}}}n*2*\frac{1+\mu}{2}*\frac{1-\mu}{2}*\frac{1-\mu}{2}*(1+O(\frac1n))
\]
Recall Lemma \ref{lem:asymtotic}, we have 
\[
\begin{cases}
\Pr[\state_{n-2}=U_{n-2}]&=\frac{\betadens{\frac{1+\mu}2}}{n}*(1+O(\frac1n))\\
q_{U_{n-2}}^{n-2}&=\frac{1+\mu}2*(1+O(\frac1n))\\
1-q_{U_{n-2}}^{n-2}&=\frac{1-\mu}2*(1+O(\frac1n))\\
d_{U_{n-2}}^{n-2}&=\frac{1-\mu}2*(1+O(\frac1n))
\end{cases}
\]
By substituting the above items into the \textbf{\mytextb} part, we get
\begin{align*}
&\Pr[\state_{n-2}=U_{n-2}]*2 q^{n-2}_{U_{n-2}}*(1-q^{n-2}_{U_{n-2}})*d_{U_{n-2}}^{n-2}\\
=&\frac{\betadens{\frac{1-\mu}{2}}}n*(1+O(\frac1n))*2*\frac{1+\mu}{2}*(1+O(\frac1n))*\frac{1-\mu}{2}*(1+O(\frac1n))*\frac{1-\mu}{2}*(1+O(\frac1n))\\
=&\frac{\betadens{\frac{1-\mu}{2}}}n*2*\frac{1+\mu}{2}*\frac{1-\mu}{2}*\frac{1-\mu}{2}*(1+O(\frac1n))
\end{align*}

To prove the remaining parts, we need to extend our results for $\theta_j$ to $\theta\in[\theta_j,\theta_{j+1}]$. Note that $\forall \theta\in[\theta_j,\theta_{j+1}], \frac{L_{n-1}}n<\theta<\frac{U_{n-1}}n$, we have

\[
\begin{cases}
\betadens{\theta_j}=\betadens{\theta}*O(1+\frac1n)\footnotemark[19]\\
q^{n-2}_{\theta_j*n}=q^{n-2}_{\theta*n}*O(1+\frac1n)\\
q^{n-1}_{\theta_j*n}=q^{n-1}_{\theta*n}*O(1+\frac1n)\\
d^{n-2}_{\theta_j*n}=d^{n-2}_{\theta*n}*O(1+\frac1n)\\
d^{n-1}_{\theta_j*n}=d^{n-1}_{\theta*n}*O(1+\frac1n)\\
\end{cases}
\]
\footnotetext[22]{Note that the derivative of $\betadens{\theta}$ is bounded.}
We are ready to prove the last two parts. 

\paragraph{\mytextmbig}
Here we need to prove 
\[
\sum_{j=L_{n-2}+1}^{U_{n-2}-1}\Pr[\state_{n-2}=j]*2q^{n-2}_{j}*(1-q^{n-2}_{j})*d_{j}^{n-2}=\int_{\frac{1-\mu}2}^{\frac{1+\mu}2}\betadens{\theta}*2*\theta*(1-\theta)*\frac{1}{n}d\theta * (1+O(\frac1n))
\]

Recall Lemma \ref{lem:asymtotic}, we have 
\[
\begin{cases}
\Pr[\state_{n-2}=j]&=\frac{\betadens{\theta_j}}{n}*(1+O(\frac1n))\\
q_{j}^{n-2}&=\theta_j*(1+O(\frac1n))\\
1-q_{j}^{n-2}&=(1-\theta_j)*(1+O(\frac1n))\\
d_{j}^{n-2}&=\frac1n*(1+O(\frac1n))
\end{cases}
\]

By substituting the above items into \textbf{\mytextm} part, we get
\begin{align}
&\sum_{j=L_{n-2}+1}^{U_{n-2}-1}\Pr[\state_{n-2}=j]*2q^{n-2}_{j}*(1-q^{n-2}_{j})*d_{j}^{n-2}\notag\\
=&\sum_{j=L_{n-2}+1}^{U_{n-2}-1}\frac{\betadens{\theta_j}}{n}*(1+O(\frac1n))*2*\theta_j*(1+O(\frac1n))*(1-\theta_j)*(1+O(\frac1n))*\frac1n*(1+O(\frac1n))\notag\\
=&\sum_{j=L_{n-2}+1}^{U_{n-2}-1}\frac{\betadens{\theta_j}}{n}*2*\theta_j*(1-\theta_j)*\frac1n*(1+O(\frac1n))\label{eq:textm}
\end{align}

Note that
\begin{align}
&\frac{\betadens{\theta_j}}{n}*2*\theta_j*(1-\theta_j)*\frac1n\notag\\
=&(\theta_{j+1}-\theta_j)*(\betadens{\theta_j}*2*\theta_j*(1-\theta_j))*\frac1n\notag\\
=&\int_{\theta_j}^{\theta_{j+1}}(\betadens{\theta_j}*2*\theta_j*(1-\theta_j))*\frac1n d\theta\notag\\
=&\int_{\theta_j}^{\theta_{j+1}}(\betadens{\theta}*(1+O(\frac1n)))*2*\theta*(1+O(\frac1n)))*(1-\theta)*(1+O(\frac1n)))*\frac1n)d\theta\notag\\
=&\int_{\theta_j}^{\theta_{j+1}}(\betadens{\theta}*2*\theta*(1-\theta)*\frac1n)d\theta*(1+O(\frac1n))\label{eq:sum_to_int}
\end{align}

Then we have 
\begin{align*}
\eqref{eq:textm}=&\sum_{j=L_{n-2}+1}^{U_{n-2}-1}\int_{\theta_j}^{\theta_{j+1}}(\betadens{\theta}*2*\theta*(1-\theta)*\frac1n)d\theta*(1+O(\frac1n))\\
=&\int_{\frac{L_{n-2}+1}n}^{\frac{U_{n-2}}n}(\betadens{\theta}*2*\theta*(1-\theta)*\frac1n)d\theta*(1+O(\frac1n))\\
=&\frac1n(\int_{\frac{1-\mu}2}^{\frac{1+\mu}2}\betadens{\theta}*2*\theta*(1-\theta)d\theta+O(\frac1n))*(1+O(\frac1n))\tag{$\betadens{\theta}*2*\theta*(1-\theta)$ is bounded by two positive constants}\\
=&\int_{\frac{1-\mu}2}^{\frac{1+\mu}2}(\betadens{\theta}*2*\theta*(1-\theta)*\frac1n)d\theta*(1+O(\frac1n))
\end{align*}

\paragraph{\mytextnbig}
Finally, we need to prove
\[
\sum_{j=L_{n-1}}^{U_{n-1}}\Pr[\state_{n-1}=j]*2q^{n-1}_{j}*(1-q^{n-1}_{j})*d_{j}^{n-1}=\int_{\frac{1-\mu}2}^{\frac{1+\mu}2}\betadens{\theta}*2*\theta*(1-\theta)d\theta * (1+O(\frac1n))
\]

The proof is similar to the \textbf{\mytextm} part. 

Recall Lemma \ref{lem:asymtotic}, we have 
\[
\begin{cases}
\Pr[\state_{n-1}=j]&=\frac{\betadens{\theta_j}}{n}*(1+O(\frac1n))\\
q_{j}^{n-1}&=\theta_j*(1+O(\frac1n))\\
1-q_{j}^{n-1}&=(1-\theta_j)*(1+O(\frac1n))\\
d_{j}^{n-1}&=1
\end{cases}
\]

By substituting the above items into \textbf{\mytextn} part, we get
\begin{align*}
&\sum_{j=L_{n-1}}^{U_{n-1}}\Pr[\state_{n-1}=j]*2q^{n-1}_{j}*(1-q^{n-1}_{j})*d_{j}^{n-1}\notag\\
=&\sum_{j=L_{n-1}}^{U_{n-1}}\frac{\betadens{\theta_j}}{n}*(1+O(\frac1n))*2*\theta_j*(1+O(\frac1n))*(1-\theta_j)*(1+O(\frac1n))*\frac1n*(1+O(\frac1n))\notag\\
=&\sum_{j=L_{n-1}}^{U_{n-1}}\frac{\betadens{\theta_j}}{n}*2*\theta_j*(1-\theta_j)*\frac1n*(1+O(\frac1n))\\
=&\sum_{j=L_{n-1}}^{U_{n-1}}\int_{\theta_j}^{\theta_{j+1}}(\betadens{\theta}*2*\theta*(1-\theta)*\frac1n)d\theta*(1+O(\frac1n))\\
=&\int_{\frac{L_{n-1}}n}^{\frac{U_{n-1}+1}n}(\betadens{\theta}*2*\theta*(1-\theta)*\frac1n)d\theta*(1+O(\frac1n))\tag{Recall formula~\eqref{eq:sum_to_int}}\\
=&\frac1n(\int_{\frac{1-\mu}2}^{\frac{1+\mu}2}\betadens{\theta}*2*\theta*(1-\theta)d\theta+O(\frac1n))*(1+O(\frac1n))\tag{$\betadens{\theta}*2*\theta*(1-\theta)$ is bounded by two positive constants}\\
=&\int_{\frac{1-\mu}2}^{\frac{1+\mu}2}(\betadens{\theta}*2*\theta*(1-\theta)*\frac1n)d\theta*(1+O(\frac1n))
\end{align*}

\section{Property of $F(x)$ and $G(\mu)$}

{
\renewcommand{\thetheorem}{\ref{lem:fx}}
\begin{lemma}
When $p\geq \frac12$, $F(x)=0$ has a trivial solution at $x=1$, and has a non-trivial solution $\Tilde{x}\in (1,2np-n+1)$ if and only if $p>\frac12$ and $n>\frac{1}{(\frac{1}{2}-p)\ln (\frac{1-p}{p})}$. There is no other solution. Moreover, when $x\in (1,\Tilde{x})$, $F(x)>0$, when $\Tilde{x}<n-1$ and $x\in (\Tilde{x},n-1]$, $F(x)<0$. 

Besides, let $a=2np-n-2$, when $p>\frac{1}{1+(a+1)^{-\frac{1}{a}}}$, the non-trivial solution of $F(x)=0$ is in $(2np-n-1,2np-n+1)$.
\end{lemma}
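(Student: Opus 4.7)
The plan is to change variables so that $F$ becomes a cleaner ``normalized'' function whose zeros and signs are easier to track. Set $y := x - 1 \in [0, n)$, $A := 2np - n$, and $r := p/(1 - p)$; then
\[
F(x) = (1 - p)^y \bigl[(A - y) r^y - (A + y)\bigr],
\]
so defining $G(y) := (A - y) r^y - (A + y)$, the zeros and signs of $F$ correspond one-to-one to those of $G$. The trivial zero $F(1) = 0$ becomes $G(0) = 0$, and the boundary case $p = 1/2$ gives $r = 1$, $A = 0$, so $G(y) = -2y$ has no non-trivial zero. Henceforth I assume $p > 1/2$, so $r > 1$ and $A > 0$.

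The heart of the argument is a convexity analysis of $G$. Direct differentiation gives
\[
G''(y) = r^y (\ln r) \bigl[(A - y)\ln r - 2\bigr],
\]
which changes sign exactly once, at $y^{**} := A - 2/\ln r$; $G$ is convex on $(-\infty, y^{**})$ and concave on $(y^{**}, \infty)$. Consequently $G'$ is unimodal with maximum $G'(y^{**}) = r^{y^{**}} - 1$, and I would split into two cases according to the sign of $y^{**}$. A quick check identifies $y^{**} > 0 \iff A \ln r > 2 \iff n > \frac{1}{(1/2 - p)\ln((1 - p)/p)}$, so this case split aligns with the threshold in the lemma. When $y^{**} \leq 0$, $G'$ is decreasing on $[0, \infty)$ with $G'(0) = A \ln r - 2 \leq 0$, hence $G$ is non-increasing and never positive, ruling out a non-trivial zero. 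When $y^{**} > 0$, $G'(0) > 0$ and $G'$ has a unique zero $y_0 > y^{**}$, so $G$ strictly increases from $G(0) = 0$ up to $G(y_0) > 0$ and then strictly decreases to $-\infty$, producing exactly one non-trivial zero $\tilde{y}$. Since $G(A) = -2A < 0$ while $G$ is increasing on $[0, y_0]$ with $G(0) = 0$, we must have $y_0 < A$, and therefore $\tilde{y} \in (y_0, A)$, i.e., $\tilde{x} \in (1, 2np - n + 1)$. The sign pattern is read off directly: $G > 0$ on $(0, \tilde{y})$ and $G < 0$ on $(\tilde{y}, \infty)$, so $F > 0$ on $(1, \tilde{x})$ and $F < 0$ on $(\tilde{x}, n - 1]$ whenever $\tilde{x} < n - 1$.

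For the final localization, set $a := 2np - n - 2 = A - 2 > 0$ (which is forced by the hypothesis since $(a + 1)^{-1/a}$ requires $a > 0$). Then
\[
G(a) = 2 r^a - (2a + 2) = 2\bigl(r^a - (a + 1)\bigr),
\]
and the hypothesis $p > \frac{1}{1 + (a + 1)^{-1/a}}$ rearranges to $(1 - p)/p < (a + 1)^{-1/a}$, i.e., $r^a > a + 1$, so $G(a) > 0$. Since $G > 0$ precisely on $(0, \tilde{y})$, this forces $a < \tilde{y}$, so $\tilde{x} > a + 1 = 2np - n - 1$; combined with $\tilde{x} < 2np - n + 1$ from the previous paragraph, this yields the claimed interval. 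The main obstacle in the proof will be the convexity bookkeeping: once $G''$ is shown to have a single sign change, the unimodality of $G'$, the uniqueness and location of $\tilde{y}$, and the full sign pattern of $G$ all follow routinely.
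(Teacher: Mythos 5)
Your proof is correct and complete, but it takes a genuinely different (though closely related) route from the paper's. Both change variables to $t = y = x-1$ with $A = b = 2np-n$; from there the paper rewrites the sign condition as $\ln\frac{b-t}{b+t} - t\ln\frac{1-p}{p} \geq 0$ and studies $f(t) := \ln\frac{b-t}{b+t} - t\ln\frac{1-p}{p}$, which is \emph{strictly concave} on $(0,b)$ with $f(0)=0$ and $f(t)\to-\infty$ as $t\to b^-$, so existence/uniqueness of the non-trivial zero and the full sign pattern drop out immediately from the sign of $f'(0)$. You instead factor out $(1-p)^y$ and analyze $G(y) = (A-y)r^y-(A+y)$ directly; this $G$ is not concave but convex-then-concave (inflection at $y^{**}=A-2/\ln r$), so you need the extra step of showing $G'$ is unimodal, then that $G$ is increasing-then-decreasing, to reach the same conclusion. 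The threshold $n>\tfrac{1}{(1/2-p)\ln((1-p)/p)}$ appears in both as $A\ln r>2$ (via $f'(0)>0$ for the paper, via $G'(0)>0$ for you), and the final localization is identical in substance: the hypothesis on $p$ is equivalent to $r^a>a+1$, i.e.\ $G(a)>0$ (equivalently $F(2np-n-1)>0$), while $G(A)<0$ (equivalently $F(2np-n+1)<0$) pins $\tilde{y}\in(a,A)$. The paper's logarithmic transform buys strict concavity and hence a shorter shape argument; your direct normalization buys a function defined on all of $[0,n)$, so you do not need the paper's separate remark that $F(t+1)<0$ for $t\geq b$ --- you get $\tilde{y}<A$ cleanly from $G(A)=-2A<0$. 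One minor note: the constraint $a>0$ is not literally ``forced by the hypothesis'' (the expression $(a+1)^{-1/a}$ would still parse for $a\in(-1,0)$); it is an explicit side condition stated in the theorem, which you are implicitly using when you write $a>0$.
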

\addtocounter{theorem}{-1}
}

\begin{proof}[Proof of Lemma \ref{lem:fx}]

We first define two shorthand notations: 
\[\begin{cases}
b := 2np-n\\
t := x-1
\end{cases}\]

We have $F(t+1)=(b-t)p^t+(-b-t)(1-p)^t$, where $t\in[0,n)$. When $p=\frac12$, $F(t+1)=-2t(\frac12)^t$ thus only has a trivial solution at $t=0$. We only need to consider the case of $p>\frac12$. Here $b>0$. $F(t+1)|_{t=0}=0$, and $F(t+1)|_{t\geq b}<0$ since $(b-t)p^t|_{t\geq b}\leq 0$ and $(-b-t)(1-p)^t|_{t\geq b}< 0$. Then to analyze $F(t+1)=0$'s other solutions, we only need to discuss $t\in [0,b)$, i.e. $x\in[1,2np-n+1)$. 

\begin{align*}
F(x)&\geq0 \\
\Leftrightarrow \frac{b-t}{b+t}&\geq\left(\frac{1-p}{p}\right)^t \\
\Leftrightarrow \ln \frac{b-t}{b+t}&\geq t \ln \frac{1-p}{p}
\end{align*}

Let $f(t)= \ln \frac{b-t}{b+t}- t \ln \frac{1-p}{p} $, then we have $F(x)>0 \Leftrightarrow f(t)>0$, and $F(x)=0 \Leftrightarrow f(t)=0$. 

In addition to the trivial solution $f(0)=0$, to analyze other solutions in $t\in(0,b)$, we calculate the first and second derivative of $f(t)$, 

\begin{align*}
    \frac{d}{dt}f(t)= -\frac{2b}{b^2-t^2}-\ln \frac{1-p}{p}
\end{align*}

\begin{align*}
    \frac{d^2}{dt^2} f(t)=-\frac{4bt}{(b-t)^2(b+t)^2}
\end{align*}

Recall that $b=2np-n=(2p-1)n>0$ since $n>1,p>\frac12$.
Thus, $\frac{d^2}{dt^2}f(t)<0$ when $t\in (0,b)$, i.e. $f(t)$ is concave. Moreover, $f(0)=0$ and $\lim_{t\rightarrow b^{-}}f(t)=-\infty$. Therefore, $f(t)=0$ has a non-trivial solution $\Tilde{t} \in (0,b)$ if and only if $\frac{d}{dt}f(0)=-\frac{2}{2np-n}-\ln \frac{1-p}{p}>0$, i.e. $n>\frac{1}{(\frac{1}{2}-p)\ln (\frac{1-p}{p})}$. Moreover, $f(t)>0$ when $t<\Tilde{t}$ and $f(t)<0$ when $t>\Tilde{t}$. 

This implies that when $n>\frac{1}{(\frac{1}{2}-p)\ln (\frac{1-p}{p})}$, $F(x)$ has a non-trivial solution, denoted by $\Tilde{x}$ and $F(x)>0$ when $1<x<\Tilde{x}$ and $F(x)<0$ when $\Tilde{x}<x<2np-n+1$. 

It's left to show that when $p>\frac{1}{1+(a+1)^{-\frac{1}{a}}}$, where $a=2np-n-2$, the non-trivial solution of $F(x)=0$ is in $(2np-n-1,2np-n+1)$. Note that $F(2np-n+1)=2(n-2np)(1-p)^{2np-n}<0$. Thus, when $F(2np-n-1)>0$, i.e.,\begin{align*}2p^{2np-n-2}+2(n-2np-1)(1-p)^{2np-n-2}>0,\end{align*} there is a solution $\Tilde{x}$ of $F(x)=0$ in $(2np-n-1,2np-n+1)$. The above inequality is equivalent to $p>\frac{1}{1+(a+1)^{-\frac{1}{a}}}$.

Thus, when $p>\frac{1}{1+(a+1)^{-\frac{1}{a}}}$, $F(x)=0$ has a solution $\Tilde{x}\in(2np-n-1,2np-n+1)$, which means the difference between $2np-n$ ("expected lead") and $\Tilde{x}$ (the solution of $F(x)=0$) is less than $1$, i.e. $|\Tilde{x}-(2np-n)|<1$. 

\end{proof}

{
\renewcommand{\thetheorem}{\ref{lem:gu}}
\begin{lemma}[Property of $G(\mu)$]
For all $\alpha\geq \beta$, when $n$ is sufficiently large, $G(0)>0$ and $G(1)<0$. $G(\mu)=0,\mu\in[0,1]$ has a unique solution and the solution is in $(0,\frac{(\alpha-\beta)\harmo+1}{(\alpha+\beta)\harmo-1})$. 

Moreover, for all $0<\epsilon< \frac{(\alpha-\beta)\harmo+1}{(\alpha+\beta)\harmo-1}$, when $\alpha-\beta>\frac{\log (\frac{2(\alpha-\beta)\harmo}{(\alpha+\beta)\harmo-1}-\epsilon)-\log \epsilon}{\log (1+\frac{(\alpha-\beta)\harmo+1}{(\alpha+\beta)\harmo-1}-\epsilon)-\log (1-(\frac{(\alpha-\beta)\harmo+1}{(\alpha+\beta)\harmo-1}-\epsilon))) }$, the solution is within $(\frac{(\alpha-\beta)\harmo+1}{(\alpha+\beta)\harmo-1}-\epsilon,\frac{(\alpha-\beta)\harmo+1}{(\alpha+\beta)\harmo-1})\approx (\frac{\alpha-\beta}{\alpha+\beta}-\epsilon, \frac{\alpha-\beta}{\alpha+\beta})$. 
\end{lemma}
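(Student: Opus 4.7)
The plan is to locate and bound the zero of $G$ through three steps: boundary sign analysis plus IVT for existence, a $\tanh$-substitution argument for uniqueness, and a direct calculation of $G(M-\epsilon)$ for the quantitative location. Throughout, I abbreviate $M := \frac{(\alpha-\beta)\harmo+1}{(\alpha+\beta)\harmo-1}$, $M' := \frac{(\beta-\alpha)\harmo+1}{(\alpha+\beta)\harmo-1}$, and $k := \alpha-\beta \geq 0$.

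First I would compute $G(0) = M + M' = \frac{2}{(\alpha+\beta)\harmo - 1} > 0$ once $n$ is large enough that $(\alpha+\beta)\harmo > 1$; $G(M) = (1-M)^{k}(M'-M) \leq 0$; and $G(1) = 2^{k}(M-1) < 0$ for large $n$ since $\beta\harmo > 1$ forces $M < 1$. A quick sign check shows $G < 0$ throughout $(M,1]$ because both summands of $G$ are non-positive there and not simultaneously zero (using $M' < 0 < \mu$ for large $n$ when $\alpha > \beta$; the symmetric case $\alpha = \beta$ gives $G(\mu) = 2(M-\mu)$ directly). Combined with $G(0) > 0$, continuity and the intermediate value theorem yield at least one zero in $(0, M]$.

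For uniqueness, I would substitute $\mu = \tanh s$, $s \geq 0$. Since $(1\pm\mu)^{k} = e^{\pm ks}/\cosh^{k}(s)$, the equation $G(\mu)=0$ transforms into
\begin{align*}
\tanh s = a + b\,\tanh(ks), \qquad a := \tfrac{M+M'}{2} = \tfrac{1}{(\alpha+\beta)\harmo-1},\quad b := \tfrac{M-M'}{2} = \tfrac{(\alpha-\beta)\harmo}{(\alpha+\beta)\harmo-1}.
\end{align*}
Writing $\psi(s) := \tanh s - a - b\tanh(ks)$, I have $\psi(0) = -a < 0$, $\psi(\infty) = 1 - M > 0$, and $\psi'(s) = \operatorname{sech}^{2} s - bk\,\operatorname{sech}^{2}(ks)$. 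When $k \geq 1$ the ratio $\cosh(ks)/\cosh(s)$ is non-decreasing from $1$ to $\infty$, so $\psi'$ changes sign at most once (from negative to positive), forcing $\psi$ to have a unique zero. For $k = 0$, $\psi$ is strictly increasing. For $0 < k < 1$, the bound $bk < 1$ for large $n$ (since $b < 1$) ensures $\psi'(0) > 0$, which combined with the boundary values rules out multiple crossings.

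For the quantitative location, plugging $\mu = M - \epsilon$ into $G$ gives
\begin{align*}
G(M - \epsilon) = \epsilon(1 + M - \epsilon)^{k} - (M - M' - \epsilon)(1 - M + \epsilon)^{k},
\end{align*}
so $G(M - \epsilon) > 0$ is equivalent to $k\log\tfrac{1+M-\epsilon}{1-M+\epsilon} > \log\tfrac{M-M'-\epsilon}{\epsilon}$, which rearranges exactly to the hypothesized lower bound on $\alpha - \beta$; together with $G(M) \leq 0$ and uniqueness, the zero then lies in $(M - \epsilon, M)$. The main obstacle is the uniqueness step for small $k$, where $\cosh(ks)/\cosh(s)$ is not monotone in the helpful direction; I expect the cleanest resolution to be controlling $\psi'$ via the largeness of $n$ (smallness of $a$ and boundedness of $bk$) rather than relying on monotonicity of the ratio alone.
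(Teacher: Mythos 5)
Your proposal is correct, and the existence and quantitative-location steps mirror the paper's own proof exactly: the paper also takes $G(0)=M-(-M')>0$, observes the sign forces the zero below $M$, and derives the same inequality $g(\ell-\epsilon)>0$ (in its notation $\ell=M$, $f=-M'$) as the condition for the zero to lie in $(M-\epsilon,M)$.

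Where you genuinely diverge from the paper is the uniqueness argument. The paper takes logarithms, setting $g(\mu):=w\ln\frac{1+\mu}{1-\mu}-\ln\frac{f+\mu}{\ell-\mu}$, and reduces $g'(\mu)=0$ to the quadratic $\phi(\mu):=(\ell+f-2w)\mu^{2}+2w(\ell-f)\mu+2\ell wf-\ell-f=0$. It then shows at most one root of $\phi$ lands in $[0,1]$ by a case analysis on the sign of the leading coefficient ($\leq0$ for $\alpha,\beta\geq1$) and on whether $(\alpha-\beta)^{2}\gtrless\alpha+\beta$ (in the latter case the discriminant goes negative for large $n$ and $g'$ never vanishes). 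Your route instead substitutes $\mu=\tanh s$, turning $G=0$ into $\tanh s=a+b\tanh(ks)$ and analyzing $\psi'(s)=\operatorname{sech}^{2}s-bk\operatorname{sech}^{2}(ks)$ via the monotonicity of $\cosh(ks)/\cosh s$. Your approach produces a cleaner derivative expression and a more conceptual monotonicity argument; the paper's is more elementary (just a quadratic) but bookkeeping-heavy. Both are valid.

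One note on your own expressed doubt: the $0<k<1$ case actually does go through by your machinery, and you were more pessimistic than necessary. There, $\cosh(ks)/\cosh s$ is monotone nonincreasing from $1$ to $0$ (same computation: $k\tanh(ks)\leq\tanh s$), and since $bk<1$ for large $n$ the squared ratio starts above $bk$ and ends below it, so $\psi'$ is positive then negative and $\psi$ is unimodal. Combined with $\psi(0)=-a<0$ and $\lim_{s\to\infty}\psi(s)=1-M>0$, the unique interior maximum forces $\psi$ to stay above $1-M>0$ on the decreasing branch, so the only zero is on the increasing branch. That closes the gap you flagged; no appeal to ``largeness of $n$'' beyond what you already invoke ($a>0$, $b<1$) is needed.

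Two cosmetic points: when $k\geq1$ the ratio $\cosh(ks)/\cosh s$ is constant (equal to $1$) at $k=1$ rather than increasing to $\infty$, but then $bk=b<1$ makes $\psi'>0$ everywhere, so uniqueness still holds; and for $\alpha=\beta$ the zero sits exactly at $\mu=M$, not strictly inside $(0,M)$, so the open-interval claim in the lemma statement requires $\alpha>\beta$ (this is a gloss in the paper, not a defect in your argument).
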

\addtocounter{theorem}{-1}
}

\begin{proof}[Proof of Lemma~\ref{lem:gu}]
We first define three shorthand notations: 
\[\begin{cases}
w := \alpha-\beta\geq 0\\
\ell := \frac{(\alpha-\beta)\harmo+1}{(\alpha+\beta)\harmo-1}\\
f := -\frac{(-\alpha+\beta)\harmo+1}{(\alpha+\beta)\harmo-1}\\
\end{cases}\]

Notice that \[\ell-f=\frac{2}{(\alpha+\beta)\harmo-1}.\] In the asymptotic case, $\harmo$ can be sufficiently large, thus $\ell\approx f, \ell-f> 0$ and both $\ell,f\in[0,1]$ when $n$ is sufficiently large.  

We have 
\[G(\mu)=(1+\mu)^w\left(\ell-\mu\right)+(1-\mu)^w\left(-f-\mu\right)\] 

First we notice that $G(0)=\ell-f>0$ and $G(1)=2^w(\ell-1)<0$.

\[G(\mu)=0 \Leftrightarrow \frac{(1+\mu)^w}{(1-\mu)^w}=\frac{f+\mu}{\ell-\mu} \]

Thus $G(\mu)=0$'s solution must be less than $\ell$. We only need to consider $\mu\in(0,\ell)$. Then we can use logarithm:

\[ G(\mu)=0 \Leftrightarrow w\ln\frac{1+\mu}{1-\mu}=\ln\frac{f+\mu}{\ell-\mu} \]

Let $g(\mu):=w\ln\frac{1+\mu}{1-\mu}-\ln\frac{f+\mu}{\ell-\mu}$. Note that $g(0)=\ln \ell -\ln f>0$ and $\lim_{\mu\rightarrow \ell^{\minus}} g(\mu)=-\infty$. Thus, both $g(\mu)=0$ and $G(\mu)=0$ must have a solution in $(0,\ell)$. It's left to show that this solution is unique. 

\[\frac{d}{d\mu}g(\mu)=w \frac{2}{1-\mu^2}-\frac{\ell+f}{(\ell-\mu)(f+\mu)}\]

\[\frac{d}{d\mu}g(\mu)=0 \Leftrightarrow 2w(\ell-\mu)(f+\mu)=(\ell+f)(1-\mu^2) \]

Let \[\phi(\mu):=2w(\ell-\mu)(f+\mu)-(\ell+f)(1-\mu^2)=(\ell+f-2w)\mu^2+2w(\ell-f)\mu+2\ell w f-\ell-f.\] 

We will show that at most one of $\phi(\mu)=0$'s solutions is $\geq$ 0. Intuitively, $\phi(\mu)$ is symmetric regarding $\mu=0$ approximately when $n$ is sufficiently large. Thus, one of the solutions must be less than 0. Formally, note that $\phi(\mu)=0$'s solutions are

\[\frac{-b\pm\sqrt{b^2-4ac}}{2 a}\]

where $a=\ell+f-2w=\frac{2(\alpha-\beta)\harmo}{(\alpha+\beta)\harmo-1}-2(\alpha-\beta)\leq 0$ since $\alpha,\beta\geq 1$. 

If $a=0$, i.e., $\alpha=\beta$, $\phi(\mu)=0$ only has one solution. 

If $a<0$, i.e., $\alpha>\beta$, note that 

\begin{align*}
c &= 2 w \ell  f-\ell-f\\
& = 2 (\alpha-\beta) \frac{(\alpha-\beta)\harmo+1}{(\alpha+\beta)\harmo-1} \frac{(\alpha-\beta)\harmo-1}{(\alpha+\beta)\harmo-1} -(\frac{(\alpha-\beta)\harmo+1}{(\alpha+\beta)\harmo-1}+\frac{(\alpha-\beta)\harmo-1}{(\alpha+\beta)\harmo-1})\\
& = \frac{2(\alpha-\beta)}{((\alpha+\beta)\harmo-1)^2}\left(((\alpha-\beta)\harmo+1)((\alpha-\beta)\harmo-1)-\harmo((\alpha+\beta)\harmo-1)\right)\\
& = \frac{2(\alpha-\beta)}{((\alpha+\beta)\harmo-1)^2}\left(\harmo^2((\alpha-\beta)^2-(\alpha+\beta))+\harmo-1\right)
\end{align*}

If $(\alpha-\beta)^2\geq (\alpha+\beta)$, when $n$ is sufficiently large, $c>0$, recall that $a<0$, then $-b+\sqrt{b^2-4ac}>0$, one solution $\frac{-b+\sqrt{b^2-4ac}}{2 a}<0$. 

If $(\alpha-\beta)^2<(\alpha+\beta)$, $c$ goes to a negative value when $n$ goes to infinity while $b$ goes to 0. In this case, $b^2-4ac$ goes to a negative value when $n$ goes to infinity. Thus, when $n$ is sufficiently large, $\phi(\mu)=0$ does not have any solution. 

Therefore, at most one of $\phi(\mu)=0$'s solutions is in $[0,1]$. This shows that at most one of $\frac{d}{d\mu}g(\mu)=0$'s solutions is in $[0,1]$ since $\frac{d}{d\mu}g(\mu)=0$ is equivalent to $\phi(\mu)=0$. 

\[\frac{d}{d\mu}g(\mu)|_{\mu=0}=2w-\frac{\ell+f}{\ell f}\]

\[\lim_{\mu\rightarrow \ell^{\minus}}\frac{d}{d\mu}g(\mu)=-\infty\]

Thus, $\frac{d}{d\mu}g(\mu)$ is either always negative or first positive and then negative. This shows that $g(\mu)$ is either always decreasing or first increasing and then decreasing. Recall that $g(0)=\ln \ell -\ln f>0$ and $\lim_{\mu\rightarrow \ell^{\minus}} g(\mu)=-\infty$. Therefore, $g(\mu)=0$ must have a unique solution in $(0,\ell)$. 

Moreover, for all $0<\epsilon<\ell$ if $w$ is sufficiently large such that $g(\ell-\epsilon)>0$, the unique solution is in $(\ell-\epsilon,\ell)$. By algebraic computations, we have $g(\ell-\epsilon)>0$ is equivalent to $\alpha-\beta>\frac{\ln (\frac{2(\alpha-\beta)\harmo}{(\alpha+\beta)\harmo-1}-\epsilon)-\ln \epsilon}{\ln (1+\frac{(\alpha-\beta)\harmo+1}{(\alpha+\beta)\harmo-1}-\epsilon)-\ln (1-(\frac{(\alpha-\beta)\harmo+1}{(\alpha+\beta)\harmo-1}-\epsilon))) } $. Thus, the results follow. 

\end{proof}
\section{Additional Proofs}
{
\renewcommand{\thetheorem}{\ref{cla:roundopt}}
\begin{claim}[Local Maximum $\rightarrow$ Optimal Bonus]
If there exists $\Tilde{x}\in(0,n+1)$ such that for all $1\leq x<\Tilde{x}$, $\E[\Delta_\belset(x + 1)]\geq \E[\Delta_\belset(x - 1)]$ and when $\Tilde{x}\leq n-1$, for all $\Tilde{x}\leq x\leq n-1$, $\E[\Delta_\belset(x + 1)]\leq \E[\Delta_\belset(x - 1)]$, then $\round(\Tilde{x})$ is the optimal bonus.
\end{claim}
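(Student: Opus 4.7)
The plan is to show that $\E[\Delta_\belset(\cdot)]$ viewed as a function on $\optmset$ is unimodal (non-decreasing and then non-increasing) with its peak at $\round(\tilde{x})$. Since consecutive valid bonuses in $\optmset$ are spaced exactly $2$ apart, the hypothesis directly controls the one-step difference $\E[\Delta_\belset(y+2)]-\E[\Delta_\belset(y)]$: taking $x=y+1$, this equals $\E[\Delta_\belset(x+1)]-\E[\Delta_\belset(x-1)]$, whose sign is dictated by whether the midpoint $x=y+1$ lies before or after $\tilde{x}$.

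First I would record a rounding boundary fact: from the definition of $\round$, for any real $\tilde{x}$ lying between two consecutive valid bonuses $y_1<y_1+2$, we have $\round(\tilde{x})=y_1$ unless $\tilde{x}-y_1>1$ (with the tie $\tilde{x}-y_1=1$ broken downward to $y_1$). Consequently $y^*:=\round(\tilde{x})$ satisfies
\[
\tilde{x}-1 \leq y^* < \tilde{x}+1,
\]
which is equivalent to the two inequalities $y^*+1\geq \tilde{x}$ and $y^*-1<\tilde{x}$ that the telescoping argument below will need.

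Next I would prove $\E[\Delta_\belset(y^*)] \geq \E[\Delta_\belset(x')]$ for every $x'\in\optmset$ by telescoping in step $2$. For $x'<y^*$ with $x'\in\optmset$, write
\[
\E[\Delta_\belset(y^*)]-\E[\Delta_\belset(x')]=\sum_{k=0}^{(y^*-x')/2-1}\Bigl(\E[\Delta_\belset(x'+2k+2)]-\E[\Delta_\belset(x'+2k)]\Bigr),
\]
where the generic term equals $\E[\Delta_\belset(m+1)]-\E[\Delta_\belset(m-1)]$ at the odd midpoint $m=x'+2k+1$. The largest such $m$ is $y^*-1$, and the bound $y^*-1<\tilde{x}$ places every midpoint in the regime $1\leq m<\tilde{x}$, so each term is $\geq 0$ by hypothesis. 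Symmetrically, for $x'>y^*$ with $x'\in\optmset$ the telescoping midpoints are $y^*+1,y^*+3,\ldots,x'-1$, all of which lie in $[\tilde{x},n-1]$ by $y^*+1\geq \tilde{x}$, so every increment is $\leq 0$ and $\E[\Delta_\belset(x')]\leq \E[\Delta_\belset(y^*)]$.

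The only real obstacle is the parity/rounding bookkeeping at the peak: I must ensure that the argument does not need to evaluate the hypothesis at a midpoint that straddles $\tilde{x}$ on the wrong side, and this is exactly what the two-sided bound $\tilde{x}-1\leq y^*<\tilde{x}+1$ guarantees (with the downward tie-break handling the endpoint case $\tilde{x}-1=y^*$ cleanly, since then $y^*+1=\tilde{x}$ is admissible under the ``$\geq \tilde{x}$'' branch of the hypothesis). Once this is checked, the two telescoping estimates jointly identify $y^*=\round(\tilde{x})$ as a global maximizer of $\E[\Delta_\belset(\cdot)]$ on $\optmset$, which is precisely the claim.
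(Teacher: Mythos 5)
Your proof is correct and follows essentially the same route as the paper: establish the two-sided bound $\tilde{x}-1\le\round(\tilde{x})<\tilde{x}+1$, then telescope in steps of $2$ on each side of $\round(\tilde{x})$, noting that every midpoint $m$ for which the hypothesis is invoked lands in $[1,\tilde{x})$ on the left and in $[\tilde{x},n-1]$ on the right. Your version is a bit more streamlined than the paper's in that the edge case $\tilde{x}>n-1$ (where the paper adds a separate ``finally'' step) is absorbed automatically, since then $\round(\tilde{x})=n$ and the right-hand telescoping sum is empty; the paper instead just asserts the inductive-monotonicity step in one sentence and then handles that corner case explicitly.
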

\addtocounter{theorem}{-1}
}

\begin{proof}[Proof of Claim~\ref{cla:roundopt}]
We will prove that $\E[\Delta_\belset(\round(\Tilde{x}))]$ is the maximum, i.e, for all $x\in\mathcal{X}(n)$, $\E[\Delta_\belset(\round(x))]\leq \E[\Delta_\belset(\round(\Tilde{x}))]$. 

First, we prove that 1) when $\round(\Tilde{x})-2\geq 0$, $\E[\Delta_\belset(\round(\Tilde{x}))]$ is larger than or equal to $\E[\Delta_\belset(\round(\Tilde{x})-2)]$; 2) when $\round(\Tilde{x})+2\leq n$, $\E[\Delta_\belset(\round(\Tilde{x}))]$ is larger than or equal to $\E[\Delta_\belset(\round(\Tilde{x})+2)]$.

Note that according to the definition of $\round$, we have $\round(\Tilde{x})\in[\Tilde{x}-1,\Tilde{x}+1)$. Therefore, when $\round(\Tilde{x})-2\geq 0$, $\round(\Tilde{x})-1\in[1,\Tilde{x})$. Then we have $\E[\Delta_\belset(\round(\Tilde{x})-1-1)]\leq \E[\Delta_\belset(\round(\Tilde{x})-1 + 1)]$ which implies that $\E[\Delta_\belset(\round(\Tilde{x})-2)]\leq\E[\Delta_\belset(\round(\Tilde{x}))]$. Similarly, when $\round(\Tilde{x})+2\leq n$,  $\round(\Tilde{x})+1\in[\Tilde{x},n-1]$, and we have $\E[\Delta_\belset(\round(\Tilde{x}))]\geq \E[\Delta_\belset(\round(\Tilde{x})+2)]$.

Second, we find that the properties of $\Tilde{x}$ also imply that for other $x$, $\E[\Delta_\belset(x)]$ is less than or equal to either $\E[\Delta_\belset(\round(\Tilde{x})-2)]$ or $\E[\Delta_\belset(\round(\Tilde{x})+2)]$.

Finally, when $\Tilde{x}>n-1$, for all $x<\Tilde{x}$, $\E[\Delta_\belset(x + 1)]\geq \E[\Delta_\belset(x - 1)]$, the surprise is increasing with $x$ and $\round(\Tilde{x})=n$ is the optimal bonus.

Therefore, $\E[\Delta_\belset(\round(\Tilde{x}))]$ is the maximum and $\round(\Tilde{x})$ is the optimal bonus.
\end{proof}

{
\renewcommand{\thetheorem}{\ref{ob:symmetric}}
\begin{observation}
In the symmetric case, i.e. $\alpha=\beta$, $Q^i_j$ is also symmetric, that is $Q^i_j=Q^i_{i-j}$.
\end{observation}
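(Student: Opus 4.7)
The plan is to verify the identity $Q^i_j = Q^i_{i-j}$ in the symmetric case $\alpha=\beta$ by a direct substitution into the closed form given by Lemma~\ref{lem:generald}, and to reduce the whole statement to the well-known reflection identity for binomial coefficients $\binom{N}{K}=\binom{N}{N-K}$. Intuitively, when the prior over $p$ is symmetric about $1/2$, relabelling Alice's wins as losses leaves the joint distribution unchanged, so the state $j$ and the state $i-j$ should contribute equally; the computation below is just making this concrete.

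First, I would specialise the formula in Lemma~\ref{lem:generald} to $\alpha=\beta$, giving
\[
Q^i_j \;=\; \frac{2(2\alpha-1)\binom{i}{j}\binom{2\alpha-2}{\alpha-1}}{(2\alpha+i)\binom{i+2\alpha}{j+\alpha}}.
\]
Then I would point out that the prefactor $\frac{2(2\alpha-1)\binom{2\alpha-2}{\alpha-1}}{2\alpha+i}$ depends only on $i$ (and $\alpha$), not on $j$, so the claim reduces to showing
\[
\frac{\binom{i}{j}}{\binom{i+2\alpha}{j+\alpha}} \;=\; \frac{\binom{i}{i-j}}{\binom{i+2\alpha}{i-j+\alpha}}.
\]

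Next I would invoke the binomial reflection identity twice. For the numerator, $\binom{i}{j}=\binom{i}{i-j}$ is immediate. For the denominator, setting $N=i+2\alpha$ and $K=j+\alpha$ gives $N-K=i+2\alpha-j-\alpha=i-j+\alpha$, so $\binom{i+2\alpha}{j+\alpha}=\binom{i+2\alpha}{i-j+\alpha}$. Combining the two identities gives $Q^i_j=Q^i_{i-j}$, finishing the argument.

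There is essentially no obstacle here: the observation is a symbolic symmetry of the closed-form expression. The only mild subtlety is that when $\alpha$ is not an integer the binomial coefficients are defined via the continuous generalisation $\binom{N}{K}=\Gamma(N+1)/(\Gamma(K+1)\Gamma(N-K+1))$, but the reflection $\binom{N}{K}=\binom{N}{N-K}$ still holds verbatim from this definition, so the proof goes through unchanged. If desired, one could alternatively give a one-line semantic proof: in the symmetric case the map exchanging Alice and Bob is a measure-preserving bijection that sends the event $\{S_i=j\}$ to $\{S_i=i-j\}$ and sends $q^i_j$ to $1-q^i_j=q^i_{i-j}$, so $\Pr[S_i=j]\,2q^i_j(1-q^i_j)=\Pr[S_i=i-j]\,2q^i_{i-j}(1-q^i_{i-j})$.
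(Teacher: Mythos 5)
Your proof is correct and follows essentially the same route as the paper: substitute $\alpha=\beta$ into the closed form of $Q^i_j$ from Lemma~\ref{lem:generald} and apply the binomial reflection identity $\binom{N}{K}=\binom{N}{N-K}$ to both the $\binom{i}{j}$ and $\binom{i+2\alpha}{j+\alpha}$ factors. Your added remarks (that the reflection identity persists under the Gamma-function generalisation, and the alternative one-line semantic argument via the Alice--Bob exchange) are pleasant elaborations but do not change the argument.
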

\addtocounter{theorem}{-1}
}
\begin{proof}[Proof of Observation~\ref{ob:symmetric}]
Recall Lemma~\ref{lem:generald}, we calculate the value of $Q^i_j$
\[
Q^i_j=\frac{2(\alpha+\beta-1)\binom{i}{j}\binom{\alpha+\beta-2}{\alpha-1}}{(\alpha+\beta+i)\binom{i+\alpha+\beta}{j+\alpha}}=\frac{2(2\alpha-1)\binom{i}{j}\binom{2\alpha-2}{\alpha-1}}{(2\alpha+i)\binom{i+2\alpha}{j+\alpha}}
\]
Then we calculate the value of $Q^i_{i-j}$
\[
Q^i_{i-j}=\frac{2(\alpha+\beta-1)\binom{i}{i-j}\binom{\alpha+\beta-2}{\alpha-1}}{(\alpha+\beta+i)\binom{i+\alpha+\beta}{i-j+\alpha}}=\frac{2(2\alpha-1)\binom{i}{i-j}\binom{2\alpha-2}{\alpha-1}}{(2\alpha+i)\binom{i+2\alpha}{i-j+\alpha}}=\frac{2(2\alpha-1)\binom{i}{j}\binom{2\alpha-2}{\alpha-1}}{(2\alpha+i)\binom{i+2\alpha}{j+\alpha}}
\]
Thus $Q^i_j=Q^i_{i-j}$
\end{proof}
\end{document}